\newtheorem{theorem}{Theorem}[section]
\newtheorem{lemma}[theorem]{Lemma}
\newtheorem{proposition}[theorem]{Proposition}
\newtheorem{corollary}[theorem]{Corollary}
\theoremstyle{definition}
\newtheorem{definition}{Definition}[section]
\theoremstyle{remark}
\newtheorem{example}{Example}[section]
\newtheorem{remark}{Remark}[section]
\newcommand{\Def}[1]{\textbf{#1}}
\newcommand{\Nat}{\mathbb{N}}
\newcommand{\Rpos}{\mathbb{R}_{\geq 0}}
\newcommand{\Var}{\mathcal{V}}
 \newcommand{\srt}{\Delta}     
 \newcommand{\srpt}{!\Delta}   
 \newcommand{\gsrt}{(!)\Delta} 
 \newcommand{\psrt}{\srt^\oplus}
 \newcommand{\psrpt}{\srpt^\oplus}
 \newcommand{\pgsrt}{\gsrt^\oplus}
 \newcommand{\rt}{\Rpos^{\srt}}
 \newcommand{\rpt}{\Rpos^{\psrpt}}
 \newcommand{\grt}{\Rpos^{\gsrt}}
 \newcommand{\prt}{\Rpos^{\psrt}}
 \newcommand{\prpt}{\Rpos^{\psrpt}}
 \newcommand{\pgrt}{\Rpos^{\pgsrt}}
 \newcommand{\fprt}{\Rpos^{(\psrt)}}
 \newcommand{\fprpt}{\Rpos^{(\psrpt)}}
 \newcommand{\fpgrt}{\Rpos^{(\pgsrt)}}
\newcommand{\hnf}{\mathrm{hnf}}
\newcommand{\PBT}{\mathcal{PT}}
\newcommand{\VPBT}{\mathcal{VT}}
\newcommand{\rTsupp}[1]{\mathcal{T}^r(#1)}
\newcommand{\Mf}[1]{\mathrm{M}_\mathrm{fin}(#1)}
\newcommand{\Mfp}[2]{\mathrm{M}^{#2}_\mathrm{fin}(#1)}
\newcommand{\Dom}[1]{\mathrm{Dom}(#1)}
\newcommand{\Iso}[1]{\mathrm{Iso}(#1)}
\newcommand{\supp}[1]{\mathrm{supp}(#1)}
\newcommand{\linapp}[2]{\langle#1\rangle\ #2}
\newcommand{\bag}[1]{\overline{#1}}
\newcommand{\ls}[2]{#2\oplus_{#1}\cdot}
\newcommand{\rs}[2]{\cdot\oplus_{#1}#2}
\newcommand{\llabel}{\mathrm l}
\newcommand{\rlabel}{\mathrm r}
\newcommand{\preflist}[2]{#1 \cdot #2}
\newcommand{\rbeta}{\rightarrow_\beta}
\newcommand{\rsum}{\rightarrow_\oplus}
\newcommand{\rightarrowdbl}{\rightarrow\mathrel{\mkern-14mu}\rightarrow}
\newcommand{\reduc}[3]{#1 \vdash #2 \rightarrowdbl #3}
\newcommand{\Lred}[2][]{\mathrm L^{#1}(#2)}
\newcommand{\forget}[1]{|#1|}
\newcommand{\prob}[1]{\mathcal{P}(#1)}
\newcommand{\mcoef}[1]{\mathrm m(#1)}
\newcommand{\rtay}[1]{#1^{*\oplus}}
\newcommand{\tay}[1]{#1^*}
\newcommand{\nf}[1]{\mathrm{nf}(#1)}
\newcommand{\bt}{\mathit{BT}}
\newcommand{\pbt}{\mathit{PT}}
\newcommand{\vpbt}{\mathit{VT}}
\newcommand{\hprob}[2]{\mathcal{P}\left(#1 \rightarrowdbl #2\right)}
\newcommand{\conv}[1]{\mathcal{P}_\Downarrow(#1)}
\newcommand{\size}[1]{||#1||}
\newcommand{\ssize}[1]{||#1||^\dagger}
\newcommand{\subst}[2]{\left[\raisebox{.2em}{$#1$}/\raisebox{-.2em}{$#2$}\right]}
\newcommand{\dsubst}[3]{\delta_#3#1\cdot#2}
\newcommand{\bareq}{\mathrel{\equiv_{\mathrm{bar}}}}
\newcommand{\sep}{\mathrel{|}}
\newcommand{\pbtone}{\mathbf T}
\newcommand{\vpbtone}{\mathbf t}
\newcommand{\bttone}{T}
\newcommand{\btttwo}{U}
\newcommand{\bhtone}{t}
\newcommand{\bhttwo}{u}
\newenvironment{varitemize}
{
\begin{list}{\labelitemi}
{
\setlength{\itemsep}{0pt}
 \setlength{\topsep}{0pt}
 \setlength{\parsep}{0pt}
 \setlength{\partopsep}{0pt}
 \setlength{\leftmargin}{15pt}
 \setlength{\rightmargin}{0pt}
 \setlength{\itemindent}{0pt}
 \setlength{\labelsep}{5pt}
 \setlength{\labelwidth}{10pt}}}
{
 \end{list} 
}
\title{On the Taylor Expansion of Probabilistic $\lambda$-terms\\ (Long Version)}
\author{Ugo Dal Lago \and Thomas Leventis}\date{}
\begin{document}

\maketitle

\begin{abstract}
  We generalise Ehrhard and Regnier's Taylor expansion from
  \emph{pure}  to \emph{probabilistic} $\lambda$-terms through
  notions of probabilistic resource terms and explicit Taylor
  expansion. We prove that the Taylor expansion is adequate when seen
  as a way to give semantics to probabilistic $\lambda$-terms, and that
  there is a precise correspondence with probabilistic B\"ohm trees, as
  introduced by the second author.
\end{abstract}

\section{Introduction}

Linear logic is a proof-theoretical framework which, since its
inception~\cite{G87}, has been built around an analogy between on the
one hand linearity in the sense of linear algebra, and on the other
hand the absence of copying and erasing in cut elimination and
higher-order rewriting. This analogy has been pushed forward by
Ehrhard and Regnier, who introduced a series of logical and
computational frameworks accounting, along the same analogy, for
concepts like that of a differential, or the very related one of an
approximation. We are implicitly referring to differential
$\lambda$-calculus~\cite{ER03}, to differential linear
logic~\cite{ER06}, and to the Taylor expansion of ordinary
$\lambda$-terms~\cite{ER08}.  The latter has given rise to an
extremely interesting research line, with many deep contributions in
the last ten years. Not only the Taylor expansion of
pure $\lambda$-terms has been shown to be endowed with a well-behaved
notion of reduction, but the B\"ohm tree and Taylor expansion
operators are now known to commute~\cite{ER06b}.
This easily implies that the equational theory (on pure
$\lambda$-terms) induced by the Taylor expansion coincides with the
one induced by B\"ohm trees.

The Taylor expansion operator is essentially \emph{quantitative}, in
that its codomain is not merely the set of resource
$\lambda$-terms~\cite{B93,ER03}, a term syntax for promotion-free
differential proofs, but the set of \emph{linear combinations} of
those terms, with positive real number coefficients. When
enlarging the domain of the operator to account for a more
quantitative language, one is naturally lead to consider algebraic
$\lambda$-calculi, to which giving a clean computational meaning has
been proved hard so far~\cite{V09}.

But what about \emph{probabilistic} $\lambda$-calculi~\cite{JP89},
which have received quite some attention recently (see,
e.g.~\cite{EPT18,BDLGS16,VKS19}) due to their applicability to
randomised computation and bayesian programming? Can the Taylor
expansion naturally be generalised to those calculi?  This is an
interesting question, to which we give the first definite positive
answer in this paper. In particular, we show that the Taylor expansion
of probabilistic $\lambda$-terms is a conservative extension of the
well-known one on ordinary $\lambda$-terms.  In particular, the target
can be taken, as usual, as a linear combination of \emph{ordinary}
resource $\lambda$-terms, i.e., the same kind of structure which
Ehrhard and Regnier considered in their work on the Taylor expansion
of \emph{pure} $\lambda$-terms. We moreover show that the Taylor
expansion, as extended to probabilistic $\lambda$-terms, continues to
enjoy the nice properties it has in the deterministic realm.  In
particular, it is adequate as a way to give semantics to probabilistic
$\lambda$-terms, and the equational theory on probabilistic
$\lambda$-terms induced by Taylor expansion coincides with the one
induced by a probabilistic variation on B\"ohm trees~\cite{B84}. The
latter, noticeably, has been proved to capture observational
equivalence, one quotiented modulo $\eta$-equivalence~\cite{B84}.

Are we the first ones to embark on the challenge of generalising
Taylor's expansion to probabilistic $\lambda$-calculi, and in general
to effectful calculi?  Actually, some steps in this direction have
recently been taken. First of all, we need to mention the line of
works originated by Tsukada and Ong's paper on rigid resource
terms~\cite{TAO17}. This has been claimed from the very
beginning to be a way to model effects in the resource
$\lambda$-calculus, but it has also been applied to, among others,
probabilistic effects, giving rise to quantitative denotational
models~\cite{TAO18}.  The obtained models are based on species, and
are proved to be adequate. The construction being generic, there is no
aim at providing a precise comparison between the discriminating power of
the obtained theory and, say, observational equivalence: the choice of
the underlying effect can in principle have a huge impact on it.

One should also mention Vaux's work on the algebraic $\lambda$-calculus
\cite{V09}, where one can build arbitrary linear combinations of terms.
He showed a correspondence between Taylor expansion and B\"{o}hm trees,
but only for terms whose B\"{o}hm trees approximants at finite depths
are computable in a finite number of steps. This includes all ordinary
$\lambda$-terms but not all probabilistic ones. More recently Olimpieri
and Vaux have studied a Taylor expansion for a non-deterministic
$\lambda$-calculus \cite{OV18} corresponding to our notion of \emph{explicit}
Taylor expansion (Section~\ref{sec:taylor_rigid}).

In the rest of this section, probabilistic Taylor expansion will be
informally introduced by way of an example, so as to make the main
concepts comprehensible to the non-specialist.  In
sections~\ref{sec:rigid} and~\ref{sec:taylor_rigid}, we introduce a new
form of resource term, and a notion of explicit Taylor expansion from
probabilistic $\lambda$-terms. These constructions have an
interest in themselves (again, see \cite{OV18}) but in this paper they
are just an intermediate step towards proving our main
results. Definitionally, the crux of the paper is
Section~\ref{sec:taylor}, in which the Taylor expansion of a
probabilistic $\lambda$-term is made to produce \emph{ordinary}
resource terms. The relationship between the introduced theory and the
one induced by Probabilistic B\"ohm trees~\cite{L18} is investigated
in Section~\ref{sec:trees} and Section~\ref{sec:taylor_testing}.
\subsection*{The Probabilistic Taylor Expansion, Informally}
In this section, we introduce the main ingredients of the probabilistic
Taylor expansion by way of an extremely simple, although instructive,
example. Let us consider the probabilistic $\lambda$-term $M=\delta(I\oplus\Omega)$,
where $\oplus$ is an operator for binary, fair, probabilistic choice, 
$\delta=\lambda x.xx$, $I=\lambda.x.x$ and $\Omega=\delta\delta$ is 
a purely diverging, term. As such, $M$ is a term of a minimal, untyped,
probabilistic $\lambda$-calculus. Evaluation of $M$, if performed leftmost-outermost
is as in Figure~\ref{fig:examplereductiontree}. In particular, the probability of convergence for
$M$ is $\frac{1}{4}$.
\begin{wrapfigure}{R}{.3\textwidth}
  \fbox{
    \begin{minipage}{.28\textwidth}
      \centering
    \includegraphics[scale=0.9]{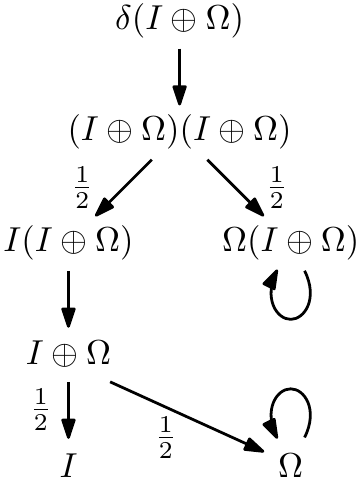}
  \end{minipage}}
  \caption{$M$'s Reduction Tree.}\label{fig:examplereductiontree}
\end{wrapfigure}
Please observe that two copies of the argument $I\oplus\Omega$ are produced, and
that the ``rightmost'' one is evaluated only when the ``leftmost'' one
converges, i.e. when the probabilistic choice $I\oplus\Omega$ produces $I$ as
a result.

The main idea behind building the Taylor expansion of any
$\lambda$-term $M$ is to describe the dynamics of $M$ by way of
\emph{linear approximations} of $M$. In the realm of the
$\lambda$-calculus, a linear approximation has traditionally been
taken as a \emph{resource $\lambda$-term}, which can be seen as a pure
$\lambda$-term in which applications have the form $\linapp{s}{\bag
  t}$, where $s$ is a term and $\bag t$ is a \emph{multiset} of terms,
and in which the result of firing the redex $\linapp{\lambda x.s}{\bag
  t}$ is the linear combination of all the terms obtained by
allocating the resources in $\bag t$ to the occurrences of $x$ in
$s$. For instance, one such element in the Taylor expansion of
$\Delta$ is $\lambda x.(\linapp{x}{[x]})$, where the occurrence of $x$
in head position is provided with only one copy of its argument. If
applied to the multiset $[y,z]$, this term would reduce into
$\linapp{y}{[z]} + \linapp{z}{[y]}$. Similarly, an element in the
Taylor expansion of $\Delta\ I$ would be $\linapp{\lambda
  x.\linapp{x}{[x]}}{[I^2]}$, which reduces into $2.\linapp{I}{[I]}$.
Another element of the same Taylor expansion is $\linapp{\lambda
  x.\linapp{x}{[x]}}{[I^3]}$, but this one reduces into $0$: there is
no way to use its resources linearly, i.e., using them without copying
and erasing. The actual Taylor expansion of a term is built by translating any
application $M\ N$ into an infinite sum $\tay{(M\ N)} = \sum_{n \in
  \Nat} \frac{1}{n!}.\linapp{\tay M}{[(\tay N)^n]}$.  For instance,
the Taylor expansion of $\Delta\ I$ is $\sum_{m,n \in \Nat}
\frac{1}{m!n!}.\linapp{\lambda x.\linapp{x}{[x^m]}}{[I^n]}$. Remark
that any summand properly reduces only when $n=m+1$, in which case
it reduces to $n!.\linapp{I}{[I^m]}$. In turn $\linapp{I}{[I^m]}$
reduces properly only when $m=1$, and the result is $I$. All the other
terms reduce to $0$. In the end the Taylor expansion of $\Delta\ I$
normalises to $\frac{2!}{1!2!}.I=I$.

Extending the Taylor expansion to probabilistic terms seems
straightforward, a natural candidate for the Taylor expansion of
$M\oplus N$ being just $\frac{1}{2}.\tay M + \frac{1}{2}.\tay N$. When
computing the Taylor expansion of $M$ we will find expressions such as
$\linapp{\lambda x.\linapp{x}{[x]}}{[(\frac{1}{2}.I + \frac{1}{2}.\tay
    \Omega)^2]}$, i.e.  $\frac{1}{4}.\linapp{\lambda
  x.\linapp{x}{[x]}}{[I^2]} + \frac{1}{4}.\linapp{\lambda
  x.\linapp{x}{[x]}}{[\Omega^2]} +\frac{1}{2}.\linapp{\lambda
  x.\linapp{x}{[x]}}{[I,\Omega]}$. For non-trivial reasons, the Taylor
expansion of any diverging term normalises to $0$, so just like in our
previous example, the only element in $\tay M$ which does not reduce
to $0$ is $\linapp{\lambda x.\linapp{x}{[x]}}{[I^2]}$. The difference
is that this time it appears with a coefficient
$\frac{1}{1!2!}\frac{1}{4}$, so $\tay M$ normalises to
$\frac{1}{4}.I$. Please notice how this is precisely the ``normal
form'' of the original term $M$. This is a general phenomenon, whose
deep consequences will be investigated in the rest of this paper, and
in particular in Section~\ref{sec:trees}.

\subsection*{Notations}
We write $\Nat$ for the set of natural numbers and $\Rpos$ for
the set of nonnegative real numbers.
Given a set $A$, we write $\Rpos^A$ for the set of families of positive
real numbers indexed by elements in $A$. We write such families as
linear combinations: an element $S \in \Rpos^A$ is a sum $S = \sum_{a
  \in A} S_a.a$, with $S_a \in \Rpos$. The support of a family $S \in
\Rpos^A$ is $\supp S = \{a \in A \mid S_a > 0\}$. We write $\Rpos^{(A)}$
for those families $S \in \Rpos^A$ such that $\supp S$ is finite.
Given $a \in A$ we often write $a$ for $1.a \in \Rpos^A$ unless we
want to emphasise the difference between the two expressions.
We also define \emph{finite
  multisets} over $A$ as functions $m : A \rightarrow
\Nat$ such that $m(a) \neq 0$ for finitely many $a \in A$. We use
the notation $[a_1,\dots,a_n]$ to describe the multiset $m$ such that
$m(a)$ is the number of indices $i \leq n$ such that $a_i = a$.

\section{Probabilistic Resource $\lambda$-Calculus}\label{sec:rigid}
In this section, we describe the theory of resource terms with
explicit choices, for the purpose of extending many of the properties
of resource terms to the probabilistic case. All this has an interest
in itself, but here this is mainly useful as a way to render certain
proofs about the Taylor Expansion easier (see
Section~\ref{sec:taylor_rigid} for more details). For this reason we
try to give the reader a clear understanding of this calculus and of
why these definitions and properties are useful, without focusing on
the actual proofs. These are straightforward generalisations of those
for deterministic resource terms \cite{ER08} and can be found in an
extended version of this paper \cite{EV}. The same results have
recently been given for a non-deterministic calculus \cite{OV18} by
Olimpieri and Vaux.
\subsection{The Basics}
\begin{definition}
  The sets of \Def{probabilistic simple resource terms} $\psrt$ and
  of \Def{probabilistic simple resource poly-terms} $\psrpt$ over a
  set of variables $\Var$ are defined by mutual induction as follows:
  $$
    s,t \in \psrt := x \mid \lambda x.s \mid \linapp{s}{\bag t} \mid \ls{p}{s} \mid \rs{p}{s}\qquad\qquad
    \bag s,\bag t \in \psrpt := [s_1,\dots,s_n]
  $$
  where $p$ ranges over $[0,1]$. We call \Def{finite probabilistic resource terms} the finite linear
  combinations of resource terms in $\fprt$, and \Def{finite
  probabilistic resource poly-terms} the finite linear combinations of
  resource poly-terms in $\fprpt$. We extend the constructors of
  simple (poly-)terms to (poly-)terms by linearity, e.g., if
  $S \in \fprt$ then $\lambda x.S$ is defined as the poly-term such
  that $(\lambda x.S)_{\lambda x.s} = S_s$ and $(\lambda x.S)_t = 0$
  if $t$ is not an abstraction.
\end{definition}

Some consecutive abstractions $\lambda x_1.\dots\lambda
x_n.s$ will be indicated as $\lambda x_1 \dots x_n.s$, or even as $\lambda \vec
x.s$. Similarly, to describe many successive applications
$\linapp{\linapp{\linapp{M}{N_1}}{\dots}}{N_k}$, we use a single pair
of brackets and we write $\linapp{M}{N_1\ \dots\ N_k}$.
%
We write $\pgsrt$ for $\psrt \cup \psrpt$, which is ranged over by
metavariables like $\sigma, \tau$. Note that intuitively $\pgsrt$
should stand for either $\psrt$ or $\psrpt$, not their union. For
instance we will prove some properties for finite linear combinations
in $\fpgrt$, but the only relevant linear combinations are the actual
(poly-)terms in $\fprt$ or $\fprpt$. Yet this distinction is
technically irrelevant, and all our results hold if we define $\pgsrt$
as a union.

The reason why linear combinations over such elements are
dubbed \emph{terms} will be clear once we describe the operational
semantics of the resource calculus. The main point of
the resource $\lambda$-calculus is to allow functions to use their
argument arbitrarily many times and yet remain entirely linear,
which is achieved by taking multisets as arguments: if a function uses
its argument $n$ times then it needs to receive $n$ resources as
argument and use each of them linearly. This idea has two
consequences. First, an application can fail if a function is not
given exactly as many arguments as it needs, as it would need either
to duplicate or to discard some of them. Second, the result of a
valid application is often not unique: a function can choose
how to allocate the different resources to the different calls to its
argument, and different choices may lead to different results. Both
these features are treated using linear combinations: a failed
application results in $0$ (i.e. the trivial linear combination) and a
successful one yields the sum of all its possible outcomes.

\begin{definition}
  We define the substitution of $\bag t \in \psrpt$ for $x \in \Var$
  in $\sigma \in \pgsrt$ by:
  \[\dsubst{\sigma}{[t_1,\dots,t_n]}{x} = \begin{cases} 0\ \text{if $\sigma$ does not have exactly $n$ free occurences of $x$}\\\sum_{\rho \in \mathfrak{S}_n} \sigma[t_{\rho(1)}/x_1,\dots,t_{\rho(n)}/x_n] \in \fpgrt\ \text{otherwise}\end{cases}\]
  where $x_1,\dots,x_n$ are the free occurrences of $x$ in $\sigma$
  and $\mathfrak{S}_n$ is the set of permutations over
  $\{1,\dots,n\}$. Alternatively, we could define
  $\dsubst{s}{\bag t}{x}$ by induction on $s$, as follows
  \begin{align*}
   \begin{aligned}
     \dsubst{x}{[t]}{x} &= t &&& \dsubst{(\lambda y.s)}{\bag t}{x} &= \lambda y.\dsubst{s}{\bag t}{x}\ \text{if}\ y \neq x\\
    \dsubst{y}{[]}{x} &= y\ \text{if}\ y \neq x &&& \dsubst{(\ls{p}{s})}{\bag t}{x} &= \ls{p}{\dsubst{s}{\bag t}{x}}\\
    \dsubst{z}{\bag t}{x} &= 0\ \text{in any other case} &&& \dsubst{(\rs{p}{s})}{\bag t}{x} &= \rs{p}{\dsubst{s}{\bag t}{x}}\\
  \end{aligned}\\
  \begin{aligned}
    \dsubst{(\linapp{s}{\bag u})}{[t_1,\dots,t_n]}{x} &= \sum_{I \uplus J = \{1,\dots,n\}} \linapp{\dsubst{s}{[t_i]_{i \in I}}{x}}{\dsubst{\bag u}{[t_j]_{j \in J}}{x}}\\
    \dsubst{[u_1,\dots,u_m]}{[t_1,\dots,t_n]}{x} &= \sum_{\biguplus_{k=1}^m I_k = \{1,\dots,n\}} [\dsubst{u_1}{[t_i]_{i \in I_1}}{x},\dots,\dsubst{u_m}{[t_i]_{i \in I_m}}{x}]
  \end{aligned}
  \end{align*}
  where $\uplus$ is the disjoint union of sets.
\end{definition}

\begin{example}
A basic example is $\dsubst{(\linapp{x}{[x]})}{[y,z]}{x}
= \linapp{y}{[z]} + \linapp{z}{[y]}$: there are two occurrences of $x$
in $\linapp{x}{[x]}$, so there are two ways to substitute $[y,z]$ for them. Remark that we
also have $\dsubst{[x,x]}{[y,z]}{x} = [y,z] + [z,y] = 2.[y,z]$: the
two occurrences of $x$ are not as clearly distinguished as in the
first example but they still count as different occurrences. Similarly
$\dsubst{(\linapp{x}{[x]})}{[y,y]}{x} = 2.\linapp{y}{[y]}$ and
$\dsubst{[x,x]}{[y,y]}{x} = 2.[y,y]$: there are two distinct
occurrences of $y$, so there are two ways to allocate them.  As another
example, please consider $\dsubst{(\lambda x.x)}{[y]}{x}
= \dsubst{(\linapp{x}{[x]})}{[y]}{x} = 0$: the substitution fails
if the number of resources does not match the number of free
occurrences of the substituted variable.
\end{example}

The operational semantics of the deterministic resource
$\lambda$-calculus~\cite{ER08} is usually given as a single rule of
$\beta$-reduction. In the probabilistic setting, we also need rules to
make choices commute with head contexts.

\begin{definition}
  The reductions $\rbeta$ and $\rsum$ are defined from $\pgsrt$ to $\fpgrt$ by:
  \begin{align*}
    \linapp{\lambda x.s}{\bag t} &\rbeta \dsubst{s}{\bag t}{x}\\
    \lambda x.(\ls{p}{s}) &\rsum \ls{p}{\lambda x.s} &&& \lambda x.(\rs{p}{s}) &\rsum \rs{p}{\lambda x.s}\\
    \linapp{\ls{p}{s}}{\bag t} &\rsum \ls{p}{\linapp{s}{\bag t}} &&& \linapp{\rs{p}{s}}{\bag t} &\rsum \rs{p}{\linapp{s}{\bag t}}
  \end{align*}
  extended under arbitrary contexts. We simply write $\rightarrow$ for $\rbeta \cup \rsum$.
  Reduction can be extended to finite terms in the following way: if
  $S \in \fpgrt$, $S_\sigma > 0$ and $\sigma \rightarrow T$ then
  $S \rightarrow S - S_\sigma.\sigma + S_\sigma T$.
\end{definition}

As the resource $\lambda$-calculus does not allow any duplication, and
$\beta$-reduction erases some constructors, it naturally decreases the
size of the involved simple terms. Consequently, $\beta$-reduction is
strongly normalising. This result can be extended to the whole
reduction $\rightarrow$, which is also confluent.

More specifically we define the \Def{size} $\size{\sigma}$ of a simple
(poly-)term in a natural way. To any $S \in \fpgrt$ we associate two
sizes: $\size S = 1 + \max_{\sigma \in \supp S} \size \sigma$ and
$\ssize S = [\size \sigma]_{\sigma \in \supp S} \in \Mf{\Nat}$. We
order $\Mf{\Nat}$ with a reverse lexicographical order: $m \prec n$
iff there exists $a \in \Nat$ such that $m(a) < n(a)$ and $m(b) =
n(b)$ for all $b > a$.

\begin{proposition}\label{prop:finite_normalisation}
  The reduction $\rightarrow$ is confluent and strongly normalising on
  $\fpgrt$. Given $S \in \fpgrt$ we write $\nf S$ for its unique
  normal form for $\rightarrow$, and given $\sigma \in \pgsrt$ we
  write $\nf \sigma$ for $\nf{1.\sigma}$.
\end{proposition}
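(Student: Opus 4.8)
The plan is to prove confluence and strong normalisation separately, then combine them to get uniqueness of normal forms.

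For strong normalisation, I would first establish that $\rbeta$ strictly decreases $\size{\cdot}$ on simple (poly-)terms: firing $\linapp{\lambda x.s}{\bag t}$ consumes one application node, one abstraction node, and the multiset brackets, while the substitution $\dsubst{s}{\bag t}{x}$ merely grafts the $t_i$ in place of occurrences of $x$ without copying (the number of free occurrences of $x$ has to match the size of $\bag t$ for the result to be nonzero), so each summand in the result is strictly smaller than the redex. The $\rsum$ rules clearly preserve $\size{\cdot}$, since they only permute a choice constructor past an abstraction or application node. The subtlety is that $\rsum$ does not decrease $\size{\cdot}$, so a naive size argument fails; this is exactly what $\ssize{\cdot}$ with the reverse-lexicographic order on $\Mf{\Nat}$ is designed to handle. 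I would show that every reduction step $S \rightarrow S - S_\sigma.\sigma + S_\sigma T$ strictly decreases $\ssize S$: a $\rbeta$ step replaces the multiplicity at level $\size\sigma$ by multiplicities at strictly smaller levels, which is a decrease in the order; a $\rsum$ step needs a secondary measure, since it keeps sizes constant — here I would either refine the order or observe that $\rsum$ alone is terminating (it pushes choices outward, and there is a bound on how far they can go, measurable by, say, the sum over all choice constructors of the number of enclosing abstraction/application nodes) and combine the two using a lexicographic product $(\ssize{S}, \text{$\rsum$-measure})$.

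For confluence, since we will have strong normalisation, by Newman's lemma it suffices to prove local confluence. I would analyse critical pairs between the rules: a $\rbeta$-redex and a $\rsum$-redex can overlap when a choice constructor sits between an abstraction and its argument, e.g. $\linapp{\lambda x.(\ls{p}{s})}{\bag t}$, which reduces by $\rsum$ to $\ls{p}{\linapp{\lambda x.s}{\bag t}}$ and then by $\rbeta$ to $\ls{p}{\dsubst{s}{\bag t}{x}}$, and directly by $\rbeta$ to $\dsubst{(\ls{p}{s})}{\bag t}{x} = \ls{p}{\dsubst{s}{\bag t}{x}}$ (using the inductive clause for substitution through $\ls{p}{\cdot}$) — so these join. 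Overlaps where the argument $\bag t$ contains a choice or the function is itself a sum reduct are handled similarly, using that substitution commutes with the relevant constructors. Two $\rbeta$-redexes in disjoint or nested positions join because substitution commutes with itself (the standard resource-calculus substitution lemma, $\dsubst{(\dsubst{s}{\bag u}{y})}{\bag t}{x} = \sum \dsubst{(\dsubst{s}{\cdot}{x})}{\cdot}{y}$ suitably distributed), and reductions on finite linear combinations lift pointwise since distinct summands reduce independently.

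The main obstacle is the interaction of $\rsum$ with termination: $\rsum$ does not shrink $\size{\cdot}$, so the whole argument hinges on finding the right combined measure, and on checking that it genuinely decreases for \emph{both} kinds of step (in particular that a $\rbeta$ step, which can increase the $\rsum$-measure by exposing new choice constructors deeper in the term, is still caught by the primary component $\ssize{\cdot}$). Everything else — the substitution lemmas, the critical pair computations, the lift to linear combinations — is a routine generalisation of Ehrhard and Regnier's argument for the deterministic resource calculus \cite{ER08}, with the choice constructors carried along passively; the detailed verifications can be relegated to the extended version \cite{EV}.
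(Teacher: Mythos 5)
Your proposal is correct and follows essentially the same route as the paper: strong normalisation by combining the strict decrease of $\ssize{\cdot}$ under $\rbeta$ (since $\rsum$ preserves sizes) with a separate termination argument for $\rsum$ via a weight measuring how deeply choice constructors are nested, and then confluence from local confluence and Newman's lemma. Your extra remarks on critical pairs only flesh out what the paper dismisses as ``straightforward''.
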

\begin{proof}
  Proving weak confluence is straightforward. Strong
  normalisation is proven in two steps. First using an appropriate
  weight on terms describing how deep choices are we can prove that
  $\rsum$ is strongly normalising. Second one can observe that $\rsum$
  preserves size, and that if $\sigma \rbeta T$ and $\tau \in \supp T$
  then $\size \tau < \size \sigma$, hence if $S \rbeta T$ then $\ssize
  T \prec \ssize S$. The confluence is given by Newman's Lemma.
\end{proof}

\subsection{Complete Left Reduction}

This reduction is not convenient to study (poly-)terms with particular
properties such as \emph{uniformity} or \emph{regularity}, which we
will define later. For instance given a simple poly-term $\bag s =
[s,\dots,s]$ we can reduce independently the different occurrences of
$s$, so not every reduct of $\bag s$ is of the form $[T,\dots,T]$ with
$s \twoheadrightarrow T$. Similarly given a term $S$ we can reduce
independently the elements of its support, possibly losing some common
properties shared by these elements. For that reason (as well as the
issue of infinite terms discussed in the rest of this section) we are
mostly interested in normalisation rather than reduction. To study
this normalisation we still need some small-step operational
semantics, but it will be more convenient to consider
the \emph{complete left reduction} defined as follows.

\begin{definition}
  We define the \Def{complete left reduct} $\Lred \sigma \in \fpgrt$ of a simple (poly-)term $\sigma$ by induction:
  \begin{align*}
    \Lred{\ls{p}{s}} &= \ls{p}{\Lred{s}}\\
    \Lred{\rs{p}{s}} &= \rs{p}{\Lred{s}}\\
    \Lred{\lambda \vec x.\linapp{y}{\bag u_1\,\dots\,\bag u_m}} &= \lambda \vec x.\linapp{y}{\Lred{\bag u_1}\,\dots\,\Lred{\bag u_m}}\\
    \Lred{\lambda \vec x.\linapp{\lambda y.s}{\bag t\,\bag u_1\,\dots\,\bag u_m}} &= \lambda \vec x.\linapp{\dsubst{s}{t}{y}}{\bag u_1\,\dots\,\bag u_m}\\
    \Lred{\lambda \vec x.\linapp{\ls{p}{s}}{\bag u_1\,\dots\,\bag u_m}} &= \ls{p}{(\lambda \vec x.\linapp{s}{\bag u_1\,\dots\,\bag u_m})}\\
    \Lred{\lambda \vec x.\linapp{\rs{p}{s}}{\bag u_1\,\dots\,\bag u_m}} &= \rs{p}{(\lambda \vec x.\linapp{s}{\bag u_1\,\dots\,\bag u_m})}\\
    \Lred{[s_1,\dots,s_n]} &= [\Lred{s_1},\dots,\Lred{s_n}]
  \end{align*}
  We extend this definition to terms: $\Lred S = \sum_{\sigma \in \pgsrt} S_\sigma\Lred \sigma$.
\end{definition}

\begin{proposition}\label{prop:nf_Lred}
  For all $S \in \fpgrt$, $S \twoheadrightarrow \Lred{S}$.
\end{proposition}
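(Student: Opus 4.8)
The plan is to establish first, for every simple (poly-)term $\sigma \in \pgsrt$, that $\sigma \twoheadrightarrow \Lred{\sigma}$, by structural induction on $\sigma$ following the clauses that define $\Lred{\cdot}$, and then to lift this to an arbitrary $S \in \fpgrt$ using $\Lred{S} = \sum_{\sigma \in \supp S} S_\sigma\,\Lred{\sigma}$. Two elementary observations about the rewriting will be used repeatedly. First, $\twoheadrightarrow$ is compatible with every (poly-)term constructor, with occurrences inside a poly-term slot, and with multiplication by a positive scalar: this is immediate from $\rbeta$ and $\rsum$ being closed under arbitrary contexts and from the lift of $\rightarrow$ to $\fpgrt$ rewriting one support element at a time. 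Second, a $\rbeta$- or $\rsum$-redex stays a redex when placed in head position, i.e.\ under leading abstractions $\lambda \vec x$ and then applied to further poly-term arguments; in particular $\lambda \vec x.\linapp{\linapp{\lambda y.s}{\bag t}}{\bag u_1\,\dots\,\bag u_m} \rbeta \lambda \vec x.\linapp{\dsubst{s}{\bag t}{y}}{\bag u_1\,\dots\,\bag u_m}$.

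For the structural induction I would go through the clauses one by one. The prefix cases $\sigma = \ls{p}{s}$ and $\sigma = \rs{p}{s}$ are immediate from the induction hypothesis $s \twoheadrightarrow \Lred{s}$ and compatibility under the choice constructor. The head-variable case $\sigma = \lambda \vec x.\linapp{y}{\bag u_1\,\dots\,\bag u_m}$ and the poly-term case $\sigma = [s_1,\dots,s_n]$ follow by applying the induction hypothesis to each $\bag u_i$ (resp.\ each $s_i$) and reducing these subterms one after another, using compatibility. The head-redex case $\sigma = \lambda \vec x.\linapp{\lambda y.s}{\bag t\,\bag u_1\,\dots\,\bag u_m}$ is a single $\rbeta$ step by the second observation, so it needs no induction hypothesis. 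Finally the two head-choice cases, e.g.\ $\sigma = \lambda \vec x.\linapp{\ls{p}{s}}{\bag u_1\,\dots\,\bag u_m}$, are dealt with by a fixed finite sequence of $\rsum$ steps: first commute the choice outward past the applications, one argument poly-term at a time, reaching $\lambda \vec x.(\ls{p}{\linapp{s}{\bag u_1\,\dots\,\bag u_m}})$, and then past the abstractions $\lambda x_n, \dots, \lambda x_1$ in turn, reaching $\ls{p}{(\lambda \vec x.\linapp{s}{\bag u_1\,\dots\,\bag u_m})} = \Lred{\sigma}$; again only the context closure of $\rsum$ is used.

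It then remains to pass from simple (poly-)terms to a general $S \in \fpgrt$. The natural move is to rewrite the finitely many support elements of $S$ in turn, replacing $S_\sigma.\sigma$ by $S_\sigma.\Lred{\sigma}$ via the reduction $\sigma \twoheadrightarrow \Lred{\sigma}$ just obtained, and to conclude by an induction on $|\supp S|$. The point that requires care — and the step I expect to be the main obstacle — is that the lift of $\rightarrow$ to $\fpgrt$ moves the \emph{entire} mass of the chosen support element, so reducing one summand changes the coefficient of another one whenever some intermediate term produced along the reduction $\sigma \twoheadrightarrow \Lred{\sigma}$ of the previous paragraph happens to coincide with another element of $\supp S$ (or with a $\Lred{\tau}$ already produced). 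One therefore has to organise the rewriting of $S$ so that every such accidental merging is harmless, i.e.\ so that whatever mass is merged is still driven to the appropriate summand of $\Lred{S}$; concretely this means performing the specific reduction sequences of the structural induction simultaneously, position by position, across all of $\supp S$, which is checked by the same clause-by-clause analysis. The mathematical content sits in the structural induction of the first two paragraphs; this last bookkeeping, routine in the deterministic case, is where the present probabilistic setting demands the most attention.
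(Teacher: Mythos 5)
Your structural induction on simple (poly-)terms is exactly the intended argument and is carried out correctly in all clauses: a single lifted $\rbeta$ step for the head-redex clause, a finite sequence of $\rsum$ steps for the head-choice clauses, and closure of $\twoheadrightarrow$ under the (poly-)term constructors elsewhere. (For the record, the paper prints no proof of this particular proposition --- the proof appearing after it belongs to the next statement, $\nf S=\Lred[k]{S}$ --- and defers the verification to the extended version, so the comparison here is with the evidently intended ``straightforward'' argument, which your first two paragraphs reproduce.)

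The gap is precisely at the point you flag, and your proposed remedy does not close it. You are right that the paper's lifting of $\rightarrow$ to $\fpgrt$ moves the \emph{entire} coefficient of the chosen support element, so $\twoheadrightarrow$ is not compatible with addition, and accidental coincidences between in-transit reducts and other summands are genuinely dangerous. But ``performing the reduction sequences simultaneously, position by position, across all of $\supp S$'' is not an argument: the lifted relation reduces one support element per step, so ``simultaneous'' must mean some interleaving, and a synchronised, round-by-round interleaving can produce exactly the fatal merge. Concretely, let $w=\linapp{\lambda x.x}{[y]}$, $r=\linapp{\lambda x.x}{[w]}$, $r'=\linapp{\lambda x.x}{[y']}$, $\sigma=\linapp{z}{[r]\,[r']}$, $\tau=\linapp{z}{[w]\,[r']}$ and $S=\sigma+\tau$, so that $\Lred S=\linapp{z}{[w]\,[y']}+\linapp{z}{[y]\,[y']}$. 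The first step of one of the canonical sequences for $\sigma$ (reducing $r$ in its first argument) turns $\sigma$ into $\tau$; if it is performed while $\tau$ still carries its own mass --- which a ``position by position'' schedule may well do --- the whole of $S$ collapses onto $2.\tau$, from which only terms of the form $2.U$ with $U$ a reduct of $1.\tau$ are reachable, and $\Lred S$ is not among them. So the interleaving must be chosen \emph{globally}: here one must either move $\tau$'s mass to $\Lred\tau$ first, or have $\sigma$ reduce its second argument first. A correct proof therefore needs an additional idea, e.g.\ a scheduling argument (strong normalisation makes reachability between simple terms a partial order, so one can process $\supp S$ in a compatible order and choose, for each element, a linearisation of its prescribed steps avoiding the still-occupied terms), or alternatively one proves the lemma for a formulation of reduction on sums that is compatible with addition (as in Ehrhard and Regnier's setting, where only part of a coefficient need move), after which linearity makes the lifting immediate. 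Note also that this obstacle has nothing to do with probabilistic choice: it is created solely by the ``whole coefficient moves'' definition of reduction on $\fpgrt$, so it is not ``where the probabilistic setting demands the most attention'' --- it would arise identically in the deterministic resource calculus under the same definition.
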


\begin{proposition}\label{prop:nf_Lred}
  For all $S \in \fpgrt$ there is $k \in \Nat$ such that $\nf S = \Lred[k]{S}$.
\end{proposition}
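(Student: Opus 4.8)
The plan is to show that iterating the complete left reduct eventually reaches the normal form, with an explicit bound $k$ depending on the size measures introduced just before Proposition~\ref{prop:finite_normalisation}. First I would observe that, by Proposition~\ref{prop:nf_Lred} (the first one), $S \twoheadrightarrow \Lred{S}$, so by confluence and strong normalisation (Proposition~\ref{prop:finite_normalisation}) we have $\nf{S} = \nf{\Lred{S}}$; hence $\nf{S} = \Lred[k]{S}$ as soon as $\Lred[k]{S}$ is itself a normal form. So the real content is: for every $S$ there is $k$ with $\Lred[k]{S}$ in normal form, i.e. the operator $\Lred{-}$ reaches a fixpoint in finitely many steps.

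To get the bound, I would analyse how a single application of $\Lred{-}$ affects the size measures. The key point is that $\Lred{-}$ performs, in parallel across the whole term, the leftmost redex of each branch; each such step is either a $\rsum$-step (which preserves $\size{\cdot}$ on the simple term it acts on) or a $\rbeta$-step (which strictly decreases $\size{\cdot}$, as used in the proof of Proposition~\ref{prop:finite_normalisation}). So I would first handle the $\rsum$-part: using the same ``depth-of-choices'' weight invoked in Proposition~\ref{prop:finite_normalisation} to prove $\rsum$ strongly normalising, I would argue that after boundedly many applications of $\Lred{-}$ all the $\ls{p}{\cdot}$ and $\rs{p}{\cdot}$ constructors have been bubbled to the outside, so that no further $\rsum$-redex can be created by subsequent $\beta$-steps. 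Then, on the $\rsum$-normal part, each application of $\Lred{-}$ that is not already a fixpoint fires at least one $\beta$-redex in every element of the support, so $\ssize{\Lred{S}} \prec \ssize{S}$ in the reverse-lexicographic order on $\Mf{\Nat}$; since $\prec$ is well-founded, this can happen only finitely often. Combining the two bounds gives the desired $k$.

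The main obstacle I anticipate is the bookkeeping around the interleaving of $\rsum$- and $\rbeta$-steps inside $\Lred{-}$: a $\beta$-step can expose a new choice constructor (e.g. substituting $\ls{p}{s}$ into head position), so the naive claim ``$\Lred{-}$ eventually has no $\rsum$-redexes and stays that way'' is not immediate and needs the structural observation that $\Lred{-}$ always works on the \emph{leftmost} redex of the head reduction path of each simple term, so a choice constructor, once it is the outermost redex on its branch, is consumed before any deeper $\beta$-step on that branch is performed. Making this precise — ideally via a single combined well-founded measure on $\fpgrt$ that strictly decreases at each non-fixpoint application of $\Lred{-}$ — is the crux; once that measure is in hand, the existence of $k$ is immediate, and the identification $\nf{S} = \Lred[k]{S}$ follows from Proposition~\ref{prop:nf_Lred} and uniqueness of normal forms.
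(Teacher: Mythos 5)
Your reduction of the statement to ``iterated $\Lred{-}$ reaches a fixpoint in finitely many steps'' is fine, but the way you propose to establish that termination has a genuine gap, and you flag it yourself without closing it. The two-phase plan --- first exhaust the $\rsum$-redexes using the choice-depth weight, then argue $\ssize{\Lred{S}} \prec \ssize{S}$ on the $\rsum$-normal part --- does not go through as stated: a $\beta$-step performed by $\Lred{-}$ can put a choice back into head position (e.g.\ $\Lred{\linapp{\lambda x.x}{[\ls{p}{s}]}} = \ls{p}{s}$), and this can happen again after every later $\beta$-step, so there is no bounded prefix of the iteration after which all $\rsum$-redexes are gone for good. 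The ``single combined well-founded measure'' you invoke to repair this is exactly the missing content of the proof; without exhibiting it, the argument is incomplete.

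Moreover, no such bespoke measure is needed, and the paper's proof avoids the issue entirely. By the preceding proposition $S \twoheadrightarrow \Lred{S}$, and this reduction takes at least one $\rightarrow$-step unless $\Lred{S} = S$, in which case $S$ is already in normal form. Since $\rightarrow$ is strongly normalising (and finitely branching) on $\fpgrt$ by Proposition~\ref{prop:finite_normalisation}, one reasons by induction on the bound on the length of reductions out of $S$: either $S$ is normal and $k=0$ works, or $\Lred{S}$ has a strictly smaller bound, and the induction hypothesis together with confluence (which gives $\nf{S} = \nf{\Lred{S}}$, as in your first step) yields the required $k$. In other words, the strong normalisation you only used to justify $\nf{S} = \nf{\Lred{S}}$ already serves as the termination measure for the iteration of $\Lred{-}$, and all the bookkeeping about how $\rsum$- and $\rbeta$-steps interleave inside $\Lred{-}$ is unnecessary.
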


\begin{proof}
  The reduction $\rightarrow$ being strongly normalising we reason by induction on the bound on the length of the reductions of $S$. We have either $\Lred S = S$ and $S$ is already in normal form or $S$ reduces into $\Lred S$ in a least one step and we conclude by induction hypothesis.
\end{proof}

\subsection{Infinite Terms}
So far we only worked with finite terms but to fully express the
operational behaviour of a $\lambda$-term in the resource
$\lambda$-calculus, which is the purpose of the Taylor expansion, we need
infinite ones. We can extend the constructors of the
calculus to $\pgrt$ by linearity and generalise the reduction relation
$\rightarrow$, but Proposition~\ref{prop:finite_normalisation}
fails. Indeed let $I_0 = I = \lambda x.x$ and $I_{n+1}
= \linapp{I_n}{[I]}$. For $n \in \Nat$, let $S = \sum_{n \in \Nat}
I_n$. Then, for all $n \in \Nat$ the term $I_n$ normalises in $n$ steps and
$S$ \emph{does not} normalise in a finite number of reduction steps. A simple
solution to this problem is to define the ``normal form'' of an
infinite term by normalising each of its components: we can set $\nf S
= \sum_{\sigma \in \pgsrt} S_\sigma\nf \sigma$. But then another problem
arises. In our previous example, we have $\nf{I_n} = I$ for all
$n \in \Nat$, thus we would have $\nf S = \sum_{n \in \Nat} I$, which
is not an element of $\pgrt$ as the coefficient of $I$ is infinite.
Still we can use this pointwise normalisation if we consider terms
with a particular property, called \emph{uniformity}.

\begin{definition}
  The \Def{coherence relation} $\coh$ on $\pgsrt$ is defined by:
  $$
  \AxiomC{$\phantom{s \coh s'}$} \UnaryInfC{$x \coh x$} \DisplayProof \qquad
  \AxiomC{$s \coh s'$} \UnaryInfC{$\lambda x.s \coh \lambda x.s'$} \DisplayProof \qquad
  \AxiomC{$s \coh s'$} \AxiomC{$\bag t \coh \bag{t'}$} \BinaryInfC{$\linapp{s}{\bag t} \coh \linapp{s'}{\bag{t'}}$} \DisplayProof \qquad
  \AxiomC{$s \coh s'$} \UnaryInfC{$\ls{p}{s} \coh \ls{p}{s'}$} \DisplayProof
  $$
  $$
  \AxiomC{$s \coh s'$} \UnaryInfC{$\rs{p}{s} \coh \rs{p}{s'}$} \DisplayProof \qquad
  \AxiomC{$s \coh s$} \AxiomC{$t \coh t$} \BinaryInfC{$\ls{p}{s} \coh \rs{p}{t}$} \DisplayProof \qquad
  \AxiomC{$\forall i,j \leq m+n, s_i \coh s_j$} \UnaryInfC{$[s_1,\dots,s_m] \coh [s_{m+1},\dots,s_{m+n}]$} \DisplayProof
  $$
  For $S,S' \in \pgsrt$ we write $S \coh S'$ when for all
  $\sigma,\sigma' \in \supp S \cup \supp{S'}$, $\sigma \coh \sigma'$.
  A simple (poly-)term $\sigma \in \pgsrt$ is called \Def{uniform} if
  $\sigma \coh \sigma$, and a term $S \in \pgrt$ is
  called \Def{uniform} if $S \coh S$.
\end{definition}

\begin{remark}
In the rule for $\ls{p}{s} \coh \rs{p}{t}$ we require $s \coh s$ and
$t \coh t$ to ensure that whenever $\sigma \coh \tau$, the simple
(poly-)terms $\sigma$ and $\tau$ are necessarily uniform. This is not
crucial, as we will only consider uniform (poly-)terms, whose support
contains only uniform simple (poly-)terms by definition, but this simplifies
inductive reasoning.
\end{remark}


What makes coherence and uniformity interesting is that if two coherent
terms $S$ and $S'$ have disjoint supports, then all of their reducts, and in
particular their normal forms, have disjoint supports. Then any
element in the support of $\nf{S + S'}$ comes either from $\nf S$ or
from $\nf{S'}$, but it cannot come from both.
\begin{lemma}\label{lem:subst}
  If $\sigma \coh \sigma'$ and $\bag u \coh \bag{u'}$ then
  $\dsubst{\sigma}{\bag
  u}{x} \coh \dsubst{\sigma'}{\bag{u'}}{x}$. Besides if
  $\supp{\dsubst{\sigma}{\bag
  u}{x}} \cap \supp{\dsubst{\sigma'}{\bag{u'}}{x}} \neq \emptyset$
  then $\sigma = \sigma'$ and $\bag u = \bag{u'}$.
\end{lemma}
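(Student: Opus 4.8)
The natural approach is induction on the structure of $\sigma$ (equivalently, on the derivation of $\sigma \coh \sigma'$), keeping in mind that coherence forces $\sigma$ and $\sigma'$ to have the same shape at the root, so the two cases always pair up the same constructor. The base of the argument is the variable case: if $\sigma = \sigma' = x$ then $\bag u \coh \bag{u'}$ and these poly-terms are uniform, and $\dsubst{x}{\bag u}{x}$ is nonzero only when $\bag u = [u]$ is a singleton, in which case the substitution returns $u$; coherence of $\bag u$ and $\bag{u'}$ then gives $u \coh u'$, and disjointness of the two singleton poly-terms immediately forces $u = u'$, hence $\bag u = \bag{u'}$. The case $\sigma = y \neq x$ is trivial since the free occurrences of $x$ number zero, so both substitutions are either $y$ (when $\bag u = \bag{u'} = []$) or $0$.

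For the inductive cases the abstraction and the two choice constructors $\ls{p}{\cdot}$, $\rs{p}{\cdot}$ are routine: substitution commutes with them (by the inductive definition), coherence is preserved by the corresponding coherence rules, and any shared element in the support of the results would, after stripping the outer constructor, give a shared element for the immediate subterms, so the inductive hypothesis closes it. (In the $\ls{p}{\cdot}$ versus $\rs{p}{\cdot}$ subcase arising from the mixed coherence rule, the supports are automatically disjoint because one side's elements all begin with $\ls{p}{\cdot}$ and the other's with $\rs{p}{\cdot}$, so the second clause is vacuous; but this mixed case does not really occur here since $\sigma \coh \sigma'$ already pins down the root constructor.) The two genuinely combinatorial cases are application, $\sigma = \linapp{s}{\bag v}$ with $\sigma' = \linapp{s'}{\bag{v'}}$, and the poly-term case $\sigma = [u_1,\dots,u_m]$. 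There one uses the defining formulas as sums over partitions $I \uplus J = \{1,\dots,n\}$ (resp. over ordered partitions $\biguplus_{k} I_k$). For the coherence half, each summand $\linapp{\dsubst{s}{[u_i]_{i\in I}}{x}}{\dsubst{\bag v}{[u_j]_{j\in J}}{x}}$ is coherent with each summand of the other sum by the application coherence rule and two applications of the inductive hypothesis, noting that any sub-poly-term $[u_i]_{i\in I}$ of $\bag u$ is still coherent with any sub-poly-term of $\bag{u'}$ because all $u_i$ are pairwise coherent. For the disjointness half, suppose an element $\tau$ lies in the support of a summand of $\dsubst{\sigma}{\bag u}{x}$ and also of a summand of $\dsubst{\sigma'}{\bag{u'}}{x}$; since $\tau = \linapp{s_0}{\bag v_0}$ determines its head and its argument poly-term, and these come from $\dsubst{s}{[u_i]_{i\in I}}{x}$, $\dsubst{\bag v}{[u_j]_{j\in J}}{x}$ on one side and the primed analogues on the other, the inductive hypothesis applied to the heads gives $s = s'$ and $[u_i]_{i\in I} = [u'_i]_{i\in I'}$, and applied to the arguments gives $\bag v = \bag{v'}$ and $[u_j]_{j\in J} = [u'_j]_{j\in J'}$; combining the two poly-term equalities yields $\bag u = \bag{u'}$, and then $\sigma = \sigma'$.

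The main obstacle — and the point that needs care rather than cleverness — is precisely the bookkeeping in the disjointness argument for the application and poly-term cases: one must extract from a single common reduct not just $\sigma = \sigma'$ but also the \emph{identity of the partition} used on each side, in order to recover $\bag u = \bag{u'}$ rather than merely equality of the pieces used. This works because coherence of $\bag u$ (all components pairwise coherent) together with the inductive disjointness statement forces the matching pieces to be literally equal as poly-terms, and since the pieces partition $\bag u$ and $\bag{u'}$ respectively, summing the multiplicities reconstitutes the whole. It is worth stating the inductive hypothesis in the strengthened form ``for all sub-poly-terms'' (i.e.\ for $\bag u$ replaced by any multiset whose elements are among the $u_i$) so that the recursion on the argument poly-term in the application case goes through cleanly; alternatively one appeals to the first, global definition of $\dsubst{\cdot}{\cdot}{x}$ via permutations, where the statement reduces to: $\sigma[t_{\rho(1)}/x_1,\dots]$ and $\sigma'[t'_{\rho'(1)}/x_1,\dots]$ are coherent, and equal only when $\sigma=\sigma'$ and the induced multisets agree — which again is a structural induction on $\sigma$ with the same combinatorial core.
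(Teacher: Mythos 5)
Your overall strategy coincides with the paper's: induction on the derivation of $\sigma \coh \sigma'$, with the variable case as base and the combinatorial work concentrated in the application and poly-term cases, where one splits $\bag u$ and $\bag{u'}$ over partitions, applies the inductive hypothesis to the matching pieces, and reconstitutes $\bag u = \bag{u'}$ from the equality of those pieces. That part of your argument is correct and is essentially the proof in the paper (which also notes the same subtlety you flag, namely that coherence of $\bag u$ with $\bag{u'}$ makes every pair of sub-poly-terms coherent, so the inductive hypothesis applies to arbitrary pieces).

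There is, however, one genuine error: you claim that coherence ``pins down the root constructor'' and hence that the mixed case $\ls{p}{s} \coh \rs{p}{t}$ ``does not really occur''. It does occur: the coherence relation has an explicit rule deriving $\ls{p}{s} \coh \rs{p}{t}$ from $s \coh s$ and $t \coh t$, and this rule is precisely what makes the explicit Taylor expansion of $M \oplus_p N$ uniform, so it cannot be skipped. Your parenthetical does handle the disjointness half correctly (every element of one support starts with $\ls{p}{\cdot}$ and every element of the other with $\rs{p}{\cdot}$, so the second clause is vacuous), but the coherence half is not routine here: to derive $\ls{p}{v} \coh \rs{p}{w}$ for $v \in \supp{\dsubst{s}{\bag u}{x}}$ and $w \in \supp{\dsubst{t}{\bag{u'}}{x}}$, the side conditions of the mixed rule require $v \coh v$ and $w \coh w$, i.e.\ uniformity of each substituted side. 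This is obtained by applying the inductive hypothesis to the premises $s \coh s$ and $t \coh t$ together with $\bag u \coh \bag u$ and $\bag{u'} \coh \bag{u'}$ (both derivable from $\bag u \coh \bag{u'}$, since coherence of poly-terms demands all components pairwise coherent) --- which is exactly how the paper treats this case. With that repair your proof matches the paper's; your closing remark about arguing instead through the permutation-based definition of substitution is also consistent with it, but note that whichever formulation you choose, the mixed-choice case must be argued, not dismissed.
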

\begin{proof}
By induction on $\sigma \coh \sigma'$:
\begin{varitemize}
\item
If $x \coh x$ then for $\supp{\dsubst{x}{\bag u}{x}}$ and
$\supp{\dsubst{x}{\bag {u'}}{x}}$ to be both nonempty we need to have
$\bag u = [v]$ and $\bag{u'} = [v']$ for some $v,v' \in \Delta^+$, and
in this case $\dsubst{x}{\bag u}{x} = v$ and $\dsubst{x}{\bag{u'}}{x}
= v'$. The hypothesis $\bag u \coh \bag{u'}$ implies $v \coh v'$, and
if $v = v'$ then $\bag u = \bag{u'}$.

\item
If $y \coh y$ with $y \neq x$ then either one of the substitutions is
$0$ or we have $u = u' = [\ ]$.

\item
If $\lambda x.s \coh \lambda x.s'$, $\ls{p}{s} \coh \ls{p}{s'}$ or
$\rs{p}{s} \coh \rs{p}{s'}$, with in each case $s \coh s'$, then the
result is immediate by induction hypothesis.

\item
If $\ls{p}{s} \coh \rs{p}{t}$ then we use the induction hypothesis on
$s \coh s$ and $u \coh u$ (given by Proposition \ref{prop:coh_ref}) to
prove that for $v \in \supp{\dsubst{s}{\bag u}{x}}$ we have $v \coh
v$, and similarly for $w \in \supp{\dsubst{\bag t}{\bag{u'}}{x}}$, and
the result follows. Notice that we will never have $\ls{p}{v} =
\rs{p}{w}$.

\item
If $\linapp{s}{\bag t} \coh \linapp{s'}{\bag{t'}}$ then
$\supp{\dsubst{\linapp{s}{\bag t}}{\bag u}{x}} = \bigcup_{I \uplus J =
  [1,\#\bag u]} \{\linapp{v}{\bag w} \; v \in \supp{\dsubst{s}{\bag
    u_I}{x}}, \bag w \in \supp{\dsubst{\bag t}{\bag u}{x}}\}$, and
similarly for $\linapp{s'}{\bag{t'}}$. Observe that for $I \uplus J =
[1,\#\bag u]$ and $I' \uplus J' = [1,\#\bag {u'}]$ we have $u_I \coh
u'_{I'}$ and $u_J \coh u'_{J'}$ so we can apply the induction
hypothesis to $s \coh s'$ and $u_I \coh u'_{I'}$ and to $\bag t \coh
\bag{t'}$ and $u'_J \coh u_{J'}$ to get the result.

\item
Finally if $\bag s = [s_1,\dots,s_m] \coh [s_{m+1},\dots,s_{m+n}] =
\bag{s'}$ we use a similar reasoning: for any $I,I' \subset [1,\#\bag
  u]$ and $J,J' \subset [1,\#\bag {u'}]$ we have $u_I \coh u_{I'}$,
$u'_J \coh u'_{J'}$ and $u_I \coh u'_J$, hence by induction hypothesis
for any $v,v' \in \bigcup_{i \leq m}\bigcup_{I \subset [1,\#\bag u]}
\supp{\dsubst{s_i}{\bag u_I}{x}}$ and $w,w' \in \bigcup_{j \leq
  n}\bigcup_{J \subset [1,\#\bag {u'}]} \supp{\dsubst{s_{m+j}}{\bag
    {u'}_J}{x}}$ we have $v \coh v'$, $w \coh w'$ and $v \coh w$. This
gives the first part of the result. Now if $\bag v = [v_1,\dots,v_k]
\in \supp{\dsubst{\bag s}{\bag u}{x}} \cap
\supp{\dsubst{\bag{s'}}{\bag{u'}}{x}}$ then necessarily $n = m = k$,
and we can find sets $I_i$ and $J_i$ such that $\biguplus_{i \leq k}
I_i = [1,\#\bag u]$, $\biguplus_{i \leq k} J_i = [1,\#\bag {u'}]$ and
$v_i \in \supp{\dsubst{s_i}{\bag u_{I_i}}{x}} \cap
\supp{\dsubst{s_{k+i}}{\bag {u'}_{J_i}}{x}}$ (up to permutation of the
indices in $\bag s$ and $\bag{s'}$). By induction hypothesis we get
$s_i = s_{k+i}$ and $\bag u_{I_i} = \bag{u'}_{J_i}$, hence $\bag s =
\bag{s'}$ and $\bag u = \bag{u'}$.

\end{varitemize}
\end{proof}

\begin{proposition}\label{prop:Lred_uniform_distinct}
  Given $S,S' \in \fpgrt$, if $S \coh S'$ then $\Lred S \coh \Lred{S'}$. If moreover $\supp S \cap \supp{S'}
  = \emptyset$ then $\supp{\Lred S} \cap \supp{\Lred{S'}} = \emptyset$.
\end{proposition}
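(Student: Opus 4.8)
The plan is to reduce everything to a statement about simple (poly-)terms and then prove that by induction. Concretely, I would first establish: for $\sigma,\sigma'\in\pgsrt$, if $\sigma\coh\sigma'$ then $\Lred\sigma\coh\Lred{\sigma'}$, and moreover if $\supp{\Lred\sigma}\cap\supp{\Lred{\sigma'}}\neq\emptyset$ then $\sigma=\sigma'$. The proposition for general $S,S'\in\fpgrt$ then follows formally. Since coefficients live in $\Rpos$ there is no cancellation, so $\supp{\Lred S}=\bigcup_{\sigma\in\supp S}\supp{\Lred\sigma}$, and the definition of $\coh$ on terms is exactly a pairwise condition on supports: given $\tau\in\supp{\Lred S}$ and $\tau'\in\supp{\Lred{S'}}$, these come from some $\sigma,\sigma'\in\supp S\cup\supp{S'}$, which are coherent because $S\coh S'$, so $\tau\coh\tau'$ by the simple-term statement; hence $\Lred S\coh\Lred{S'}$. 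For disjointness, a common element of $\supp{\Lred S}$ and $\supp{\Lred{S'}}$ would come from $\sigma\in\supp S$ and $\sigma'\in\supp{S'}$ with $\sigma\coh\sigma'$ and $\supp{\Lred\sigma}\cap\supp{\Lred{\sigma'}}\neq\emptyset$, forcing $\sigma=\sigma'\in\supp S\cap\supp{S'}$, a contradiction.

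It remains to prove the simple-term statement, by induction on the derivation of $\sigma\coh\sigma'$, following the case structure of the definition of $\Lred$. Coherence forces $\sigma$ and $\sigma'$ to share their outermost shape up to swapping an $\ls p\cdot$-tag with an $\rs p\cdot$-tag (same abstraction prefix $\lambda\vec x$, same head variable, same spine length, matching choice tags possibly on opposite sides), so the same clause of $\Lred$ governs both. In the variable-head clause $\lambda\vec x.\linapp y{\bag u_1\cdots\bag u_m}$, one applies the induction hypothesis to the coherent pairs $\bag u_i\coh\bag{u'}_i$ and recombines; recovering $\bag u_i=\bag{u'}_i$ from a shared support element requires uniformity of the $\bag u_i$ (distinct orderings of components collapse to the same multiset), which coherence supplies via Proposition~\ref{prop:coh_ref}. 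In the $\beta$-clause $\lambda\vec x.\linapp{\lambda y.s}{\bag t\,\bag u_1\cdots\bag u_m}$ the reduct is $\lambda\vec x.\linapp{\dsubst{s}{\bag t}{y}}{\bag u_1\cdots\bag u_m}$: coherence of the reducts comes from Lemma~\ref{lem:subst} applied to $s\coh s'$ and $\bag t\coh\bag{t'}$ (the surrounding context being untouched), and a shared support element gives, again by Lemma~\ref{lem:subst}, $s=s'$ and $\bag t=\bag{t'}$, while the $\bag u_i$ are carried over verbatim, so $\sigma=\sigma'$. In the choice-head clauses $\lambda\vec x.\linapp{\ls p s}{\bag u_1\cdots\bag u_m}$ the reduct $\ls p{(\lambda\vec x.\linapp s{\bag u_1\cdots\bag u_m})}$ is a single simple term, so if $\sigma'$ carries the same tag on the same side one reads off equality directly, and if the tags are on opposite sides (the mixed coherence rule, with premises $s\coh s$ and $t\coh t$) the supports are trivially disjoint while the induction hypothesis on those premises yields $\ls p{\Lred s}\coh\rs p{\Lred t}$. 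The top-level choice clauses and the poly-term clause $[s_1,\dots,s_n]\mapsto[\Lred{s_1},\dots,\Lred{s_n}]$ are treated the same way, the latter again using uniformity of the components.

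The main obstacle is the $\beta$-clause: one must transport both coherence and support-disjointness through a resource substitution sitting inside a fixed context of leftover arguments and abstractions. This is exactly what Lemma~\ref{lem:subst} provides, but applying it cleanly requires decomposing the supports of the $\Lred$-images so that its hypotheses (coherent substituend, coherent substituted poly-term) are met, and checking that the leftover context contributes nothing. A secondary, more clerical difficulty is the multiset arithmetic in the variable-head and poly-term clauses, where recovering syntactic equality from "the supports meet" genuinely needs the uniformity that coherence enforces — this is the structural reason the proposition is stated for coherent terms rather than arbitrary ones.
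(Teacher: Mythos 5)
Your proof is correct and follows essentially the same route as the paper's: reduce to simple (poly-)terms, induct on the coherence derivation following the clauses of $\Lred{\cdot}$, and invoke Lemma~\ref{lem:subst} for the head $\beta$-redex case; you even spell out the lifting to finite terms and the strengthened simple-term statement (shared support implies equality) that the paper leaves implicit. The only slip is notational: in the mixed head-choice case the reduct is $\ls{p}{(\lambda\vec x.\linapp{s}{\bag u_1\,\dots\,\bag u_m})}$, whose inner term is \emph{not} reduced, so the required coherence comes from uniformity of the premises rather than from the induction hypothesis applied to $\Lred{s}$ --- an inconsequential difference.
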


\begin{proof}
It is sufficient to prove the result for simple terms $\sigma,\sigma'$ as the generalisation to finite terms is straightforward. We reason by induction on $\sigma$ and the proof of $\sigma \coh \sigma'$.
\begin{itemize}
  \item If $\ls{p}{s} \coh \ls{p}{s'}$ or $\rs{p}{s} \coh \rs{p}{s'}$ the result is immediate by induction hypothesis.
  \item If $\ls{p}{s} \coh \rs{p}{s'}$ then $s$ and $s'$ are uniform and by induction hypothesis so are $\Lred s$ and $\Lred{s'}$, hence $\ls{p}{\Lred s} \coh \rs{p}{\Lred{s'}}$.
  \item The case of head normal forms is immediate by induction hypothesis.
  \item If $\lambda \vec x.\linapp{\lambda y.s}{\bag t\,\bag u_1\,\dots\,\bag u_m} \coh \lambda \vec x.\linapp{\lambda y.s'}{\bag t'\,\bag u'_1\,\dots\,\bag u'_m}$ then we apply Lemma~\ref{lem:subst}.
  \item The cases of head choices are immediate.
  \item The case of poly-terms is immediate by induction hypothesis.
\end{itemize}
\end{proof}

\begin{corollary}\label{prop:nf_uniform_distinct}
  Given $S,S' \in \fpgrt$, if $S \coh S'$ then $\nf S \coh \nf{S'}$. If moreover $\supp S \cap \supp{S'}
  = \emptyset$ then $\supp{\nf S} \cap \supp{\nf{S'}} = \emptyset$.
\end{corollary}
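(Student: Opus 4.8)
The plan is to get the corollary almost for free from Proposition~\ref{prop:nf_Lred} and Proposition~\ref{prop:Lred_uniform_distinct}, by iterating the complete left reduction $\Lred{\cdot}$ a number of times that is \emph{common} to $S$ and $S'$. The only thing one has to be slightly careful about is that Proposition~\ref{prop:nf_Lred} gives, for each term, \emph{some} number of steps $k$ after which $\Lred[k]{\cdot}$ has reached the normal form, and a priori this $k$ differs for $S$ and $S'$.

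First I would record the elementary fact (already implicit in the proof of Proposition~\ref{prop:nf_Lred}) that $\Lred{\cdot}$ is the identity on normal forms: if $T \in \fpgrt$ is $\rightarrow$-normal then $\Lred T = T$. This is immediate by induction on the simple (poly-)terms in $\supp T$, since a normal simple term has one of the shapes $\ls{p}{s}$, $\rs{p}{s}$ or $\lambda\vec x.\linapp{y}{\bag u_1\,\dots\,\bag u_m}$ with all the displayed subterms again normal, and in each case the defining clause of $\Lred{\cdot}$ merely reapplies $\Lred{\cdot}$ recursively to those subterms, which by the induction hypothesis leaves them unchanged; the poly-term clause is handled the same way. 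Consequently, if $\nf S = \Lred[k]{S}$, then $\Lred[j]{S} = \nf S$ for every $j \geq k$.

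Now, given $S \coh S'$, apply Proposition~\ref{prop:nf_Lred} to get $k,k' \in \Nat$ with $\nf S = \Lred[k]{S}$ and $\nf{S'} = \Lred[k']{S'}$, and set $m = \max(k,k')$; by the previous paragraph, $\nf S = \Lred[m]{S}$ and $\nf{S'} = \Lred[m]{S'}$. Then prove by a trivial induction on $i \leq m$ that $\Lred[i]{S} \coh \Lred[i]{S'}$: the base case is the hypothesis, and the inductive step is exactly the first statement of Proposition~\ref{prop:Lred_uniform_distinct} applied to $\Lred[i]{S} \coh \Lred[i]{S'}$ (note that $\Lred[i]{S},\Lred[i]{S'} \in \fpgrt$, since $\Lred{\cdot}$ preserves finiteness of supports). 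Taking $i = m$ gives $\nf S \coh \nf{S'}$.

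For the second claim, assuming moreover $\supp S \cap \supp{S'} = \emptyset$, I would instead prove by induction on $i \leq m$ the conjunction ``$\Lred[i]{S} \coh \Lred[i]{S'}$ and $\supp{\Lred[i]{S}} \cap \supp{\Lred[i]{S'}} = \emptyset$'': the base case is the hypothesis, and in the inductive step the coherence part is as above while the disjointness part is the second statement of Proposition~\ref{prop:Lred_uniform_distinct}, which applies precisely because the induction hypothesis supplies both $\Lred[i]{S} \coh \Lred[i]{S'}$ and the disjointness of their supports. Instantiating at $i = m$ yields $\supp{\nf S} \cap \supp{\nf{S'}} = \emptyset$. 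There is no genuine obstacle here; the only subtle point is the alignment of the two iteration counts, which the idempotence of $\Lred{\cdot}$ on normal forms disposes of, and everything else is a routine induction on top of the two cited propositions.
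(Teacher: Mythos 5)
Your proof is correct and takes essentially the same route as the paper: the paper's own argument is precisely to use Proposition~\ref{prop:nf_Lred} and induct on the number of complete left reduction steps, propagating coherence and disjointness of supports via Proposition~\ref{prop:Lred_uniform_distinct}. The only difference is that you make explicit the harmless alignment of the two iteration counts (via the observation that $\Lred{\cdot}$ fixes normal forms, so $\nf S = \Lred[m]{S}$ and $\nf{S'} = \Lred[m]{S'}$ for a common $m$), a point the paper leaves implicit.
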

\begin{proof}
Using Proposition~\ref{prop:nf_Lred}, by induction on $k$.
\end{proof}

This immediately implies that pointwise reduction of infinite uniform terms
is well defined, as both complete left reducts and normal forms of distinct but coherent simple (poly-)terms
have disjoint supports.

\begin{corollary}
  If $S \in \pgrt$ is uniform then $\sum_{\sigma \in \pgsrt}
  S_s\Lred \sigma$ and $\sum_{\sigma \in \pgsrt}
  S_\sigma\nf \sigma$ are in $\pgrt$. We write $\Lred S$ and $\nf S$ respectively for these sums.
\end{corollary}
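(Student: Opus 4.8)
The plan is to reduce the corollary to the disjointness results just established, namely Proposition~\ref{prop:Lred_uniform_distinct} and Corollary~\ref{prop:nf_uniform_distinct}. Unwinding the definition of $\pgrt = \Rpos^{\pgsrt}$, what has to be shown is that for every simple (poly-)term $\tau \in \pgsrt$ the coefficient $\left(\sum_{\sigma \in \pgsrt} S_\sigma\,\nf\sigma\right)_\tau = \sum_{\sigma \in \pgsrt} S_\sigma\,(\nf\sigma)_\tau$ is a genuine element of $\Rpos$, and symmetrically with $\Lred{(\cdot)}$ in place of $\nf{(\cdot)}$. Since every summand is nonnegative, this is exactly the requirement that the coefficient not be infinite.

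First I would note that for each $\sigma \in \pgsrt$ the term $\nf\sigma$ lies in $\fpgrt$ by Proposition~\ref{prop:finite_normalisation}, and $\Lred\sigma$ lies in $\fpgrt$ by definition; in particular each has finite support, so each individual product $S_\sigma\,(\nf\sigma)_\tau$ is a single nonnegative real. The crux is then to show that, for a fixed $\tau$, there is \emph{at most one} $\sigma \in \supp S$ with $(\nf\sigma)_\tau > 0$. To see this, take distinct $\sigma,\sigma' \in \supp S$. As $S$ is uniform, $\sigma \coh \sigma'$, and also $\sigma \coh \sigma$ and $\sigma' \coh \sigma'$, so the one-point terms $1.\sigma$ and $1.\sigma'$ are coherent while $\supp{1.\sigma} \cap \supp{1.\sigma'} = \{\sigma\} \cap \{\sigma'\} = \emptyset$. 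Corollary~\ref{prop:nf_uniform_distinct} then gives $\supp{\nf\sigma} \cap \supp{\nf{\sigma'}} = \emptyset$, so $\tau$ can lie in at most one $\supp{\nf\sigma}$. Hence $\sum_{\sigma} S_\sigma\,(\nf\sigma)_\tau$ has at most one nonzero term and is therefore finite, which shows $\sum_{\sigma} S_\sigma\,\nf\sigma \in \pgrt$. The case of $\Lred S$ is word-for-word the same, replacing Corollary~\ref{prop:nf_uniform_distinct} by Proposition~\ref{prop:Lred_uniform_distinct}.

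I do not anticipate a real obstacle: all the content sits in the disjointness propositions, and the remaining work is the routine bookkeeping of checking that distinct elements of $\supp S$ yield coherent one-point terms with disjoint supports (so that those propositions apply) and that ``lying in $\pgrt$'' means precisely that every coordinate is a finite nonnegative real. If one additionally wants the resulting sums to be uniform — which is what makes the notations $\Lred S$ and $\nf S$ useful downstream — the same propositions deliver it, since any two elements of the new support arise as elements of $\Lred\sigma$ (resp. $\nf\sigma$), $\Lred{\sigma'}$ (resp. $\nf{\sigma'}$) for coherent $\sigma,\sigma' \in \supp S$, and $\Lred{(1.\sigma)} \coh \Lred{(1.\sigma')}$ (resp. $\nf{(1.\sigma)} \coh \nf{(1.\sigma')}$) then forces those elements to be pairwise coherent.
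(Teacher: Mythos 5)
Your argument is correct and is essentially the paper's own proof: uniformity of $S$ gives $\sigma \coh \sigma'$ for distinct elements of $\supp S$, and Proposition~\ref{prop:Lred_uniform_distinct} (resp.\ Corollary~\ref{prop:nf_uniform_distinct}) then yields disjoint supports, so each coefficient of the sum has at most one nonzero contribution and is finite. The extra observation about uniformity of $\Lred S$ and $\nf S$ is a harmless bonus beyond what the statement asks.
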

\begin{proof}
For all $\sigma \neq \sigma' \in \supp S$ we have by hypothesis
$\sigma \coh \sigma'$ so the previous proposition gives
$\supp{\Lred \sigma} \cap \supp{\Lred{\sigma'}} = \emptyset$. Therefore
given any $\tau \in \pgsrt$ there is at most one $\sigma \in \supp S$
such that $\tau \in \supp{\Lred \sigma}$. The same goes for normalisation.
\end{proof}

\begin{remark}
Although both complete left reduction and normal forms are well defined for infinite terms, Proposition~\ref{prop:nf_Lred} doesn't hold: consider $\bag s_0 = [\ ]$, $\bag s_{n+1} = [\linapp{\lambda x.x}{\bag s_n}]$ and $\bag S = \sum_{n \in \Nat} \bag s_n$, then $\bag S$ is uniform and $\nf{\bag S} = 0$ but for all $k \in \Nat$, $\Lred[k]{\bag S} = \bag S \neq 0$. Besides $\nf{\bag S}$ is not even the limit of the $\Lred[k]{\bag S}$ as $k$ approaches $\infty$. However normal forms are indeed limits of complete left reducts \emph{restricted to normal simple terms}.
\end{remark}

\begin{proposition}\label{prop:nf_limit_Lred}
  Given a uniform (poly-)term $S \in \pgrt$ and given $\tau \pgsrt$ in
  normal form, we have $\nf S_\tau = \Lred[k] S_\tau$ for all
  $k \in \Nat$ large enough.
\end{proposition}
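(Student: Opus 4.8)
The plan is to reduce the statement to the component-wise behaviour of complete left reduction already established for finite simple (poly-)terms, the one genuine difficulty being that $\supp S$ may be infinite, so that a priori there is no uniform bound on the number of complete left reduction steps the various components of $S$ need in order to reach their normal forms. I would start from the fact that, $S$ being uniform, both $\nf S$ and the iterates $\Lred[k]{S}$ are computed component-wise: $\nf S = \sum_{\sigma \in \supp S} S_\sigma \nf\sigma$, and, by a routine induction on $k$ which uses at each step that $\Lred$ preserves uniformity (the case $S' = S$ of Proposition~\ref{prop:Lred_uniform_distinct}), $\Lred[k]{S} = \sum_{\sigma \in \supp S} S_\sigma \Lred[k]{\sigma}$. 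By Proposition~\ref{prop:Lred_uniform_distinct} iterated and Corollary~\ref{prop:nf_uniform_distinct}, the supports $\supp{\Lred[k]{\sigma}}$ for $\sigma \in \supp S$ are pairwise disjoint, and likewise the supports $\supp{\nf\sigma}$. Hence for the fixed normal $\tau$, and for each $k$, there is at most one $\sigma \in \supp S$ with $\tau \in \supp{\Lred[k]{\sigma}}$, and at most one with $\tau \in \supp{\nf\sigma}$; so $\Lred[k]{S}_\tau = S_\sigma \cdot \Lred[k]{\sigma}_\tau$ for that $\sigma$ (and $0$ if none), and similarly for $\nf S_\tau$.

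The observation that makes the infinite case work is a monotonicity property that holds precisely because $\tau$ is in normal form: for every simple (poly-)term $\sigma$ the sequence $k \mapsto \Lred[k]{\sigma}_\tau$ is non-decreasing. Indeed, from
\[ \Lred[k+1]{\sigma} \;=\; \Lred{\Lred[k]{\sigma}} \;=\; \sum_{\rho \in \supp{\Lred[k]{\sigma}}} \Lred[k]{\sigma}_\rho \cdot \Lred{\rho} \]
and the fact that all coefficients here lie in $\Rpos$, the coefficient of $\tau$ on the right is at least the contribution of the single summand $\rho = \tau$, which is $\Lred[k]{\sigma}_\tau$ since $\Lred$ acts as the identity on a term in normal form (immediate from the inductive definition of $\Lred$). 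Combined with the disjointness from the previous paragraph this forces at most one component of $S$ to ever produce $\tau$: if distinct $\sigma, \sigma' \in \supp S$ had $\tau \in \supp{\Lred[k]{\sigma}}$ and $\tau \in \supp{\Lred[k']{\sigma'}}$ for some $k, k'$, then by monotonicity $\tau$ would lie in both $\supp{\Lred[m]{\sigma}}$ and $\supp{\Lred[m]{\sigma'}}$ with $m = \max(k, k')$, contradicting their disjointness.

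It then remains to conclude. If no component of $S$ ever produces $\tau$, then $\Lred[k]{S}_\tau = 0$ for all $k$, and also $\nf S_\tau = 0$, for $\tau \in \supp{\nf\sigma}$ would force $\tau \in \supp{\Lred[k]{\sigma}}$ for all large $k$ (where $\Lred[k]{\sigma} = \nf\sigma$ by Proposition~\ref{prop:nf_Lred}). Otherwise, by disjointness of the $\supp{\nf\sigma}$ there is a unique $\sigma_0 \in \supp S$ with $\tau \in \supp{\nf{\sigma_0}}$ — it exists because, by monotonicity, any component producing $\tau$ still contains $\tau$ once it has stabilised to its normal form — and $\sigma_0$ is also the unique component that ever produces $\tau$. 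Picking $k_0$ with $\Lred[k]{\sigma_0} = \nf{\sigma_0}$ for all $k \geq k_0$ (Proposition~\ref{prop:nf_Lred}, applicable since $\Lred[k]{\sigma_0} \in \fpgrt$), we obtain for every such $k$ that $\Lred[k]{S}_\tau = S_{\sigma_0} \cdot \Lred[k]{\sigma_0}_\tau = S_{\sigma_0} \cdot \nf{\sigma_0}_\tau = \nf S_\tau$, which is the statement. The main obstacle is exactly this passage from components to the whole term: a size argument — $\tau \in \supp{\Lred[k]{\sigma}}$ forces $\size\tau \le \size\sigma$ since $\twoheadrightarrow$ never increases the size of simple terms — does not bound the number of relevant $\sigma$, because there are infinitely many simple (poly-)terms of any given size (already because of the continuum of choice labels $p$), and it is the monotonicity argument, collapsing the $\tau$-producing components to a singleton, that resolves it.
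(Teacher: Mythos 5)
Correct, and essentially the paper's own argument: uniformity lets both $\nf S$ and $\Lred[k]{S}$ be computed componentwise, pairwise disjointness of supports (iterated Proposition~\ref{prop:Lred_uniform_distinct} and Corollary~\ref{prop:nf_uniform_distinct}) isolates a single relevant component $\sigma_0$, and Proposition~\ref{prop:nf_Lred} finishes on that finite component. Your explicit monotonicity observation --- the coefficient of a normal $\tau$ cannot decrease under complete left reduction, since $\Lred{\tau}=\tau$ --- is a worthwhile addition rather than a detour: it covers the case $\tau\notin\supp{\nf S}$ and excludes transient appearances of $\tau$ in other components at intermediate stages, two points the paper's terse proof (which only treats $\tau\in\supp{\nf S}$) leaves implicit.
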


\begin{proof}
  If $\tau \in \supp{\nf S}$ then by
  Corollary~\ref{prop:nf_uniform_distinct} there is a unique
  $\sigma \in \supp S$ such that $\tau \in \supp{\nf \sigma}$, and by
  Proposition~\ref{prop:nf_Lred} for all $k \in \Nat$ large enough we
  have $\nf \sigma_\tau = \Lred[k] \sigma_\tau$.
\end{proof}

\subsection{Regular Terms}
The deterministic Taylor expansion associates to any $\lambda$-term a
\emph{uniform} term, and explicit choices are adopted precisely for the sake
of preserving this property in the probabilistic case. Taylor
expansions have another important property: they are entirely defined
by their support. If a simple term $s$ is in the support of the Taylor
expansion of a $\lambda$-term $M$, then its coefficient is the inverse
of its \emph{multinomial coefficient}, which does not depend on
$M$. Moreover this property is preserved by normalisation. Using
explicit choices enforces this result in the probabilistic case, as
well.

\begin{definition}
  For any $\sigma \in \pgsrt$ we define the \Def{multinomial coefficient} $\mcoef \sigma\in\Nat$ by:
  \begin{align*}
    \mcoef x &= 1 &&& \mcoef{\linapp{s}{\bag t}} &= \mcoef s\mcoef{\bag t}\\
    \mcoef{\lambda x.s} &= \mcoef{\ls{p}{s}} = \mcoef{\rs{p}{s}} = \mcoef s &&&
    \mcoef{\bag s} &= \prod_{u \in \psrt} \bag s(u)!\cdot\mcoef u^{\bag s(u)}
  \end{align*}
  where $\bag s(u)$ is the multiplicity of $u$ in $\bag s$.
\end{definition}

\begin{definition}
  A uniform term $S \in \pgrt$ is called \Def{regular} if for all
  $\sigma \in \supp S$, $S_\sigma = \frac{1}{\mcoef \sigma}$.
\end{definition}
Multinomial coefficients correspond to the number of permutations of
multisets which preserve the description of simple (poly-)terms. For
instance, given variables $x_1,\dots,x_n \in \Var$, the coefficient
$\mcoef{[x_1,\dots,x_n]}$ is exactly the number of permutations $\rho \in \mathfrak S_n$
such that $(x_{\rho(1)},\dots,x_{\rho(n)}) =
(x_1,\dots,x_n)$. For a more precise interpretation of multinomial
coefficients see \cite{ER08} or \cite{TAO17}. Due to their relation with 
permutations in multisets, these coefficients appear naturally when
we perform substitutions.

\begin{theorem}\label{thm:substitution_multinomial}
  For any $\sigma \in \pgsrt$ uniform, for $x \in \Var$, $\bag
  t \in \psrpt$ and $u \in \supp{\dsubst{\sigma}{\bag t}{x}}$, we
  have: $(\dsubst{\sigma}{\bag t}{x})_u = \frac{\mcoef{\bag
  t}\mcoef \sigma}{\mcoef u}$.
\end{theorem}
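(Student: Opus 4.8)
The plan is to prove the identity $(\dsubst{\sigma}{\bag t}{x})_u = \frac{\mcoef{\bag t}\mcoef\sigma}{\mcoef u}$ by induction on the structure of $\sigma$, following the inductive definition of the substitution $\dsubst{\sigma}{\bag t}{x}$. The base cases $\sigma = x$ with $\bag t = [t]$, $\sigma = y$ with $y\neq x$ and $\bag t = []$, and the failure cases (which do not arise since $u \in \supp{\dsubst{\sigma}{\bag t}{x}}$) are all immediate, since all multinomial coefficients involved equal $1$ in the relevant situations, or the equation reads $1 = \mcoef\sigma/\mcoef\sigma$. The cases $\sigma = \lambda y.s$, $\sigma = \ls{p}{s}$, $\sigma = \rs{p}{s}$ are also straightforward: the substitution commutes with the constructor, the constructor does not change the multinomial coefficient, and so the statement reduces directly to the induction hypothesis on $s$. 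Uniformity is preserved by all subterm projections, so the hypothesis of the theorem is available at each recursive call.

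The heart of the argument is the application case $\sigma = \linapp{s}{\bag w}$ (and, symmetrically, the poly-term case $\sigma = [u_1,\dots,u_m]$, which is analogous but with $m$ parts instead of two). Here $\dsubst{(\linapp{s}{\bag w})}{[t_1,\dots,t_n]}{x} = \sum_{I\uplus J = \{1,\dots,n\}} \linapp{\dsubst{s}{[t_i]_{i\in I}}{x}}{\dsubst{\bag w}{[t_j]_{j\in J}}{x}}$, so for $u = \linapp{u'}{\bag v}$ one must collect the coefficient of $u$ from every pair $(I,J)$ and every way $u'$ (resp.\ $\bag v$) appears in $\dsubst{s}{[t_i]_{i\in I}}{x}$ (resp.\ $\dsubst{\bag w}{[t_j]_{j\in J}}{x}$). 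By Lemma~\ref{lem:subst} (applied via the uniformity of $\sigma$: the distinct pieces have disjoint supports) the relevant $I$ is determined, up to which \emph{elements} of the multiset $\bag t$ go left versus right — i.e.\ one sums over the ways to split $\bag t$ into a sub-poly-term $\bag t'$ with $u' \in \supp{\dsubst{s}{\bag t'}{x}}$ and its complement $\bag t''$ with $\bag v \in \supp{\dsubst{\bag w}{\bag t''}{x}}$. Using the induction hypothesis on $s$ and on $\bag w$ (the latter being the poly-term case, handled in the same induction), each such split contributes $\frac{\mcoef{\bag t'}\mcoef s}{\mcoef{u'}}\cdot\frac{\mcoef{\bag t''}\mcoef{\bag w}}{\mcoef{\bag v}}$, and the number of splits of the \emph{multiset} $\bag t$ yielding a given pair $(\bag t',\bag t'')$ is the binomial-type factor $\frac{\mcoef{\bag t}}{\mcoef{\bag t'}\mcoef{\bag t''}}$ (this is exactly the combinatorial meaning of $\mcoef{-}$ on poly-terms: it counts the permutations fixing the description). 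Multiplying and summing, the $\mcoef{\bag t'}$ and $\mcoef{\bag t''}$ cancel, and since there is essentially a unique valid pair $(\bag t',\bag t'')$ up to the counting already absorbed, we get $\frac{\mcoef{\bag t}\mcoef s\mcoef{\bag w}}{\mcoef{u'}\mcoef{\bag v}} = \frac{\mcoef{\bag t}\mcoef{\linapp{s}{\bag w}}}{\mcoef u}$, as desired.

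For the poly-term case $\sigma = [u_1,\dots,u_m]$ with $\dsubst{\sigma}{[t_1,\dots,t_n]}{x} = \sum_{\biguplus_{k} I_k = \{1,\dots,n\}} [\dsubst{u_1}{[t_i]_{i\in I_1}}{x},\dots,\dsubst{u_m}{[t_i]_{i\in I_m}}{x}]$, the subtlety is twofold: first, the target $\bag v = [v_1,\dots,v_m]$ is itself a multiset, so forming $\bag v$ from the components $\dsubst{u_k}{-}{x}$ may happen in several ways — controlled again by the multiplicities in $\bag v$, which is where the factor $\prod \bag v(w)!$ in $\mcoef{\bag v}$ enters; second, the split of $\bag t$ into $m$ parts contributes $\frac{\mcoef{\bag t}}{\prod_k \mcoef{\bag t^{(k)}}}$, paralleling the multinomial nature of $\mcoef{-}$. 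Combining the induction hypotheses $(\dsubst{u_k}{\bag t^{(k)}}{x})_{v_{\tau(k)}} = \frac{\mcoef{\bag t^{(k)}}\mcoef{u_k}}{\mcoef{v_{\tau(k)}}}$ over all matchings $\tau$ consistent with the multiplicities, the per-part $\mcoef{\bag t^{(k)}}$ factors cancel against the denominator of the split-count, the $\prod \bag s(u)!$ and $\prod \bag v(w)!$ factors account for the number of matchings, and one is left with $\frac{\mcoef{\bag t}\prod_k \mcoef{u_k}}{\prod_j \bag v(v_j)! \cdot \prod_w \mcoef w^{\ldots}} = \frac{\mcoef{\bag t}\mcoef{\bag s}}{\mcoef{\bag v}}$.

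The main obstacle is precisely this bookkeeping of symmetry factors in the poly-term case: one must be careful that the combinatorial reading of $\mcoef{-}$ (number of permutations of a multiset fixing its description) is used consistently, so that the count of ways to distribute the resources of $\bag t$ among the $m$ pieces, the count of ways to assemble the output multiset $\bag v$ from those pieces, and the multinomial coefficients appearing in the induction hypotheses all telescope correctly. I would phrase the key combinatorial identity explicitly once — namely, for a multiset $\bag t$ and a decomposition into sub-poly-terms $\bag t = \bag t^{(1)} \uplus \dots \uplus \bag t^{(m)}$, the number of ordered ways to split the underlying set $\{1,\dots,n\}$ producing this decomposition is $\mcoef{\bag t} / \prod_k \mcoef{\bag t^{(k)}}$ — and cite the interpretation of $\mcoef{-}$ given just before the theorem statement, reducing everything else to routine arithmetic. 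Uniformity is used exactly once, through Lemma~\ref{lem:subst}, to guarantee that in the application case the left/right supports are disjoint so no double-counting across the two factors occurs, and similarly across the $m$ factors in the poly-term case.
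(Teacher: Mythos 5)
Your proposal is correct in substance, but it takes a genuinely different route from the paper. The paper does not induct on the definition of substitution at all: it generalises Ehrhard and Regnier's original argument, representing a (poly-)term by a pair $(\varphi,p)$ of a multilinear-free term and a variable labelling, reading $\mcoef{p\varphi}$ as the cardinality of the stabiliser $\Iso{\varphi,p}$, and extracting the coefficient identity from the single group-theoretic count $|\Iso{\varphi,p,\Phi,q}| = |\Iso{p,\Phi,q}||\Iso{\varphi,p}|/|\Iso{\Phi\varphi,q}|$ (plus a suitable notion of uniformity for labelled multilinear-free terms, so that explicit choices are covered). That route concentrates all the multiset symmetries into one orbit/stabiliser computation and never has to track how coefficients recombine under the constructors. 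Your direct structural induction instead redoes this bookkeeping by hand in the application and poly-term cases, but it is more elementary and self-contained, and it makes the role of each factor visible: the split count $\mcoef{\bag t}/\prod_k \mcoef{\bag t^{(k)}}$ for distributing the resources and the matching count $\prod_u \bag\sigma(u)!/\prod_w \bag v(w)!$ for assembling the output multiset are exactly the two identities needed, and they do telescope with the induction hypotheses as you claim.

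One caveat you should repair: your appeals to Lemma~\ref{lem:subst} are not literally licensed by its statement, since its disjointness conclusion assumes the two substituted poly-terms are coherent, whereas here they are sub-multisets of an arbitrary $\bag t \in \psrpt$, which the theorem does not assume uniform. For a fixed head $s$, uniqueness of the sub-multiset $\bag t'$ with $u' \in \supp{\dsubst{s}{\bag t'}{x}}$ is in fact automatic (the terms plugged at the free occurrences of $x$ in $s$ can be read off from $u'$), so the application case needs no coherence of $\bag t$ at all; the place where something genuinely must be proved is the poly-term case, namely that $\supp{\dsubst{\sigma_i}{\bag t^{(i)}}{x}}$ and $\supp{\dsubst{\sigma_j}{\bag t^{(j)}}{x}}$ are disjoint for distinct (coherent) components $\sigma_i \neq \sigma_j$ of the uniform poly-term, since this is what prevents over-counting of matchings. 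Either prove a variant of Lemma~\ref{lem:subst} in which the coherence hypothesis on the substituted bags is dropped, or add the hypothesis that $\bag t$ is uniform (which holds in every use the paper makes of the theorem, but is weaker than the statement as given). As written, the citation has a hypothesis mismatch, even though the facts you need are true.
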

There exist two methods to prove similar theorems in the literature,
and both can be used to prove
Theorem~\ref{thm:substitution_multinomial}. The first one is the
original proof by Ehrhard and Regnier for the pure deterministic
case \cite{ER08}, and its generalisation is straightforward and only
requires to extend the notion of uniformity (to take into account
that $[\ls{p}{s},\rs{p}{t}]$ is uniform). The second one is by Asada,
Tsukada and Ong for a simply typed calculus with choices \cite{TAO17}, and
it has been extended to the untyped case by Olimpieri and Vaux in an
unpublished paper \cite{OV18}.
We present here a direct generalisation of the proof in \cite{ER08}.

\begin{definition}
  A \Def{multilinear-free (poly)-term} is a (poly)-term $\varphi \in
  (!)\Delta^+$ such that all of its variables are free and each one
  occurs exactly once.  A \Def{multilinear-free substitution} is a
  partial function $\Phi$ from $\Var$ to multilinear-free terms such
  that $\Var(\Phi(x)) \cap \Var(\Phi(x')) = \emptyset$ for all $x \neq
  x'$ in $\Dom \Phi$.  We say that $(\varphi,\Phi)$ is \Def{adapted}
  if $\Var(\varphi) \subset \Dom \Phi$ and no element of $\Var(\Phi)$
  is bound in $\varphi$. Then $\Phi\varphi$ is the multilinear-free
  (poly)-term obtained by applying $\Phi$ on the variables of
  $\varphi$.  Similarly for any multilinear-free (poly)-term $\varphi$
  and $p : \Var(\varphi) \rightarrow \Var$ we write $p\varphi$ for the
  term obtained by applying $p$ to the variables of
  $\varphi$ \emph{without renaming captured variables}.  A pair
  $(\varphi,p)$ is said to \Def{represent} $\sigma \in \Delta^+$ if
  $p\varphi = \sigma$.
\end{definition}

\begin{definition}
  We define the following sets of bijections over variables:
  \begin{align*}
    \Sigma_p &= \{f : \Dom p \rightarrow \Dom p\ \text{bijective} \sep pf = p\}\\
    \Iso{\varphi,p} &= \{f \in \Sigma_p \sep f\varphi = \varphi\}\\
    \Iso{p,\Phi,q} &= \{g \in \Sigma_q \sep \exists f \in \Sigma_p : g\Phi = \Phi f\}\\
    \Iso{\varphi,p,\Phi,q} &= \{f \in \Sigma_p \sep \exists g \in \Sigma_q : g\Phi\varphi = \Phi f\varphi\}
  \end{align*}
\end{definition}

\begin{lemma}
  $|\Iso{\varphi,p}| = \mcoef{p\varphi}$.
\end{lemma}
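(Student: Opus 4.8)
The plan is to prove $|\Iso{\varphi,p}| = \mcoef{p\varphi}$ by induction on the structure of the multilinear-free (poly-)term $\varphi$. Recall that $\Iso{\varphi,p}$ consists of those bijections $f : \Dom p \to \Dom p$ with $pf = p$ and $f\varphi = \varphi$. The intuition, as noted after the definition of $\mcoef{-}$, is that $\mcoef{p\varphi}$ counts exactly the permutations of the occurrences inside multisets of $p\varphi$ that leave its syntactic description unchanged; since $\varphi$ has all variables free and occurring exactly once, the variables of $\varphi$ are in bijection with the ``occurrence slots'' of $p\varphi$, and an $f \in \Sigma_p$ relabels slots without changing which actual variable sits in each slot, so $f\varphi = \varphi$ (equality of the multilinear-free term, where variables \emph{are} distinguished) is precisely the condition that $f$ permutes slots within each multiset of $p\varphi$ in a description-preserving way. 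Making this correspondence precise by induction is the bulk of the argument.

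The induction would proceed as follows. For $\varphi = x$ a single variable, $p\varphi = p(x)$ is a variable with $\mcoef{p(x)} = 1$; any $f \in \Sigma_p$ fixing $\varphi = x$ must fix $x$, and the remaining constraints are absorbed into $\Sigma_p$, so the count is $1$ — here one must be slightly careful about what $\Dom p$ is relative to $\Var(\varphi)$, but since $p$ and the equation $pf = p$ are fixed data, the factor contributed is $1$ as required. For $\varphi = \lambda y.\psi$, $\varphi = \ls{q}{\psi}$, or $\varphi = \rs{q}{\psi}$, we have $p\varphi$ formed by the same unary constructor from $p\psi$ (renaming $y$ does not create capture by the adaptedness hypothesis / the convention that $p$ does not rename captured variables), $\mcoef{p\varphi} = \mcoef{p\psi}$, and $f\varphi = \varphi \iff f\psi = \psi$, so $\Iso{\varphi,p} = \Iso{\psi,p}$ and the induction hypothesis closes the case. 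For an application $\varphi = \linapp{\psi}{\bag\chi}$ with $\bag\chi = [\chi_1,\dots,\chi_k]$, the variable sets $\Var(\psi), \Var(\chi_1), \dots, \Var(\chi_k)$ are pairwise disjoint (multilinear-freeness), and $\mcoef{p\varphi} = \mcoef{p\psi}\cdot\mcoef{p\bag\chi}$; an $f \in \Sigma_p$ fixing $\varphi$ must fix $\psi$ and fix the \emph{poly-term} $\bag\chi$, which since $f$ acts injectively on distinct variables means it fixes the ordered list $(\chi_1,\dots,\chi_k)$ up to a permutation of the indices realizing an equality $\chi_i = \chi_{\rho(i)}$ after applying $f$. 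This is where one decomposes $f$ into an ``inner'' part fixing each $\psi$ and each $\chi_i$ individually, and a ``bag-permuting'' part; the multiset multinomial factor $\mcoef{p\bag\chi} = \prod_{u} p\bag\chi(u)!\cdot\mcoef u^{p\bag\chi(u)}$ is exactly $\big(\text{number of such bag-permutations}\big)$ times $\prod_i \mcoef{p\chi_i}$, and the product of inner counts is $\prod_i |\Iso{\chi_i,p|_{\Var(\chi_i)}}| = \prod_i \mcoef{p\chi_i}$ by the induction hypothesis. Multiplying with the $\psi$ factor gives $\mcoef{p\varphi}$.

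The main obstacle, and the step to handle with the most care, is the bookkeeping in the application/poly-term case: one must cleanly factor an arbitrary $f \in \Iso{\varphi,p}$ into independent contributions on the disjoint variable blocks $\Var(\psi), \Var(\chi_1), \dots, \Var(\chi_k)$ together with a permutation of the multiset $\bag\chi$, and check that this factorization is a bijection onto the product of the local isomorphism groups times the ``description-preserving'' permutations of $\bag\chi$. The subtlety is that $f$ need not respect the block decomposition \emph{a priori} — but the equations $f\varphi = \varphi$ and $pf = p$ force it to: $f\varphi = \varphi$ means $f$ sends $\Var(\chi_i)$ onto $\Var(\chi_j)$ for some $j$ with $\chi_i = \chi_j$ (identically, as multilinear-free terms after the move), and within $\Var(\psi)$ onto $\Var(\psi)$; combined with $pf = p$ this yields exactly the counted group. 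Once this factorization is established, the identity $\mcoef{p\bag\chi} = (\#\{\text{index permutations } \rho : \chi_{\rho(\cdot)} = \chi_{(\cdot)} \text{ after } f\}) \cdot \prod_i \mcoef{p\chi_i}$ is a routine unwinding of the definition of $\mcoef{-}$ on poly-terms, and the proof concludes. I expect no difficulty in the unary-constructor cases or the base case; essentially all the content lives in getting the application-case combinatorics and the disjointness-of-blocks argument right.
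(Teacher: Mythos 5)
The paper never writes out a proof of this lemma: it is one of the auxiliary steps left implicit as a ``direct generalisation'' of Ehrhard and Regnier's argument, so there is no written proof to diverge from, and your structural induction is exactly the intended argument. The unary cases, the base case, and the block-decomposition idea in the application case are all fine (note in passing that in the abstraction case the capture worry is moot: a binder in a multilinear-free term binds no occurrence, and $p\varphi$ is formed without renaming captured variables by definition, so $\Iso{\lambda y.\psi,p}=\Iso{\psi,p}$ on the nose). The one place where your sketch has a real hole is the surjectivity half of the claimed factorisation in the bag case. You decompose $f\in\Iso{\linapp{\psi}{\bag{\chi}},p}$ into inner automorphisms of $\psi$ and of the $\chi_i$ together with an index permutation $\rho$ with $f\chi_i=\chi_{\rho(i)}$, and then identify the number of admissible $\rho$ with $\prod_u(p\bag{\chi})(u)!$, i.e.\ with the number of permutations preserving the fibres of $i\mapsto p\chi_i$. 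The analysis direction is fine: $pf=p$ forces $p\chi_{\rho(i)}=p\chi_i$, and $\rho$ is unique because the $\chi_i$ have pairwise disjoint and nonempty variable sets. But to get $\mcoef{p\bag{\chi}}=\prod_u(p\bag{\chi})(u)!\,\mcoef{u}^{(p\bag{\chi})(u)}$ you also need every fibre-preserving $\rho$ to be \emph{realised}, i.e.\ for each $i$ there must exist some bijection $g:\Var(\chi_i)\to\Var(\chi_{\rho(i)})$ with $g\chi_i=\chi_{\rho(i)}$ and $p\circ g=p$ on $\Var(\chi_i)$; the set of such $g$ is then a torsor under $\Iso{\chi_i,p}$, which is what makes the product formula come out. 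This nonemptiness is a separate sub-lemma (any two multilinear-free representations of the same simple term are related by a label-compatible bijection), proved by its own easy induction; without it your count could a priori fall short of $\mcoef{p\bag{\chi}}$, so it should be stated and proved explicitly rather than folded into ``check that this factorization is a bijection''.

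Two minor bookkeeping points: the lemma is stated for (poly-)terms, so the induction should carry a top-level poly-term clause (your application case already contains all of the combinatorics, so this is only a matter of phrasing the statement mutually for terms and poly-terms); and when you write that $f$ fixes ``the ordered list up to a permutation'', it is worth saying explicitly that multiset equality $[f\chi_1,\dots,f\chi_k]=[\chi_1,\dots,\chi_k]$ is what yields $\rho$, and that distinctness of the $\chi_i$ as multilinear-free terms (disjoint nonempty variable sets) is what makes $\rho$ unique, so that the map $f\mapsto(\rho,\text{inner parts})$ is well defined.
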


\begin{lemma}
  For any $g \in \Iso{p,\Phi,q}$ there exists a unique $\pi(g) \in \Sigma_q$ such that $g\Phi = \Phi\pi(g)$, and $\pi : \Iso{p,\Phi,q} \rightarrow \Sigma_p$ is a group homomorphism.
\end{lemma}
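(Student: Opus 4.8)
The plan is to take $\pi(g)$ to be the witness $f$ itself and to show that it is forced by $g$. Unfolding the defining equation, $g\Phi = \Phi f$ says that for every $x \in \Dom p$ one has $g(\Phi(x)) = \Phi(f(x))$: the variable-bijection $g$ transports the multilinear-free (poly)-term $\Phi(x)$ to the multilinear-free (poly)-term $\Phi(f(x))$ and, in particular, maps the block $\Var(\Phi(x))$ bijectively onto $\Var(\Phi(f(x)))$. Since a multilinear-free substitution has pairwise disjoint variable blocks $\{\Var(\Phi(x))\}_{x\in\Dom p}$, the bijection $g$ permutes this family of blocks, and the induced permutation of the index set $\Dom p$ is precisely $f$. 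Hence $f$ is recovered from $g$: if $f,f'\in\Sigma_p$ both satisfy $g\Phi=\Phi f=\Phi f'$, then $\Var(\Phi(f(x)))=\Var(\Phi(f'(x)))$ for each $x$, and by disjointness no two distinct indices share a block, so $f=f'$. One then sets $\pi(g):=f$; it lies in $\Sigma_p$ by the choice of $f$, and $\pi(\mathrm{id})=\mathrm{id}$ since $\mathrm{id}\,\Phi=\Phi=\Phi\,\mathrm{id}$.

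For the homomorphism property I would argue formally. The two operations in the defining equation act on opposite sides of the term-valued function $\Phi$ --- precomposition by a bijection of $\Dom p$, written $\Phi f$, and renaming of the variables occurring in the outputs of $\Phi$, written $g\Phi$ --- so they commute: $(g\Phi)f = g(\Phi f)$. Consequently, for $g,g'\in\Iso{p,\Phi,q}$,
\[
(gg')\Phi \;=\; g(g'\Phi) \;=\; g\bigl(\Phi\,\pi(g')\bigr) \;=\; (g\Phi)\,\pi(g') \;=\; \bigl(\Phi\,\pi(g)\bigr)\pi(g') \;=\; \Phi\,\bigl(\pi(g)\,\pi(g')\bigr).
\]
Since $\pi(g)\pi(g')\in\Sigma_p$, this simultaneously exhibits $gg'$ as an element of $\Iso{p,\Phi,q}$ and, by the uniqueness just proved, forces $\pi(gg')=\pi(g)\,\pi(g')$; dually $g^{-1}\in\Iso{p,\Phi,q}$ with witness $\pi(g)^{-1}$. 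Thus $\Iso{p,\Phi,q}$ is a subgroup of $\Sigma_q$ and $\pi$ a group homomorphism into $\Sigma_p$.

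The delicate step is the uniqueness assertion: it rests on the disjointness clause of a multilinear-free substitution, which is exactly what prevents $g$ from confusing two distinct indices $x\neq x'$, together with the adaptedness hypotheses that identify $\Dom q$ with $\bigcup_{x\in\Dom p}\Var(\Phi(x))$ so that $g$ really is determined by its action on the blocks. Everything else is bookkeeping --- keeping straight which of $\Dom p$, $\Dom q$, $\Var(\varphi)$ and $\Var(\Phi)$ is meant at each point, and how a variable-bijection acts on a (poly)-term --- after which both halves of the statement follow as above.
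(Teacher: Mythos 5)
Your proof is correct. Note that the paper itself states this lemma without proof (it belongs to the sequence of auxiliary facts behind Theorem~\ref{thm:substitution_multinomial} whose details are deferred to the extended version, following Ehrhard and Regnier), so there is no in-paper argument to diverge from; what you give is exactly the expected one. Two small points are worth making explicit. First, your uniqueness step ("no two distinct indices share a block") tacitly uses that each block $\Var(\Phi(x))$ is \emph{nonempty}: disjointness alone would not separate two indices whose images had no variables, but since $\Phi$ takes values in multilinear-free simple terms, every image contains at least one (free, linear) variable occurrence, so $\Phi$ is injective on $\Dom\Phi$ and $\Phi f=\Phi f'$ forces $f=f'$, as you argue. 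Second, you silently and correctly read the statement's "$\pi(g)\in\Sigma_q$" as the typo it is: the witness lives in $\Sigma_p$, consistent with the declared codomain of $\pi$. The homomorphism computation, resting on the fact that renaming the outputs of $\Phi$ (action of $g$ on the left) commutes with precomposition on $\Dom p$ (action of $f$ on the right), together with closure of $\Iso{p,\Phi,q}$ under composition and inverses, is exactly right.
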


\begin{lemma}
  $\pi(\Iso{p,\Phi,q})\Iso{\varphi,p} \subset \Iso{\varphi,p,\Phi,q}$.
\end{lemma}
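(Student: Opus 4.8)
\section*{Proof plan}

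The plan is to prove the inclusion pointwise: fix $g \in \Iso{p,\Phi,q}$ and $h \in \Iso{\varphi,p}$, and show that the product $\pi(g)h$ (with the group operation read as composition of renamings) lies in $\Iso{\varphi,p,\Phi,q}$. There are two things to check, matching the definition of $\Iso{\varphi,p,\Phi,q}$: that $\pi(g)h \in \Sigma_p$, and that some $g' \in \Sigma_q$ satisfies $g'\Phi\varphi = \Phi(\pi(g)h)\varphi$.

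For the first point I would invoke the preceding lemma, which gives $\pi(g) \in \Sigma_p$ together with the intertwining identity $g\Phi = \Phi\pi(g)$; and $h \in \Sigma_p$ by the very definition of $\Iso{\varphi,p}$. Since $\Sigma_p$ is a group under composition (if $pf = p$ and $pf' = p$ then $p(ff') = p$, and inverses are handled the same way), the composite $\pi(g)h$ is again an element of $\Sigma_p$. For the second point the natural candidate for $g'$ is $g$ itself, which already belongs to $\Sigma_q$. Here I would use that relabelling of free variable occurrences in a multilinear-free (poly-)term is compositional — applying $\pi(g)h$ is the same as first applying $h$ and then $\pi(g)$ — so that $(\pi(g)h)\varphi = \pi(g)(h\varphi)$. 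Since $h \in \Iso{\varphi,p}$ means exactly $h\varphi = \varphi$, this collapses to $(\pi(g)h)\varphi = \pi(g)\varphi$. Hence $\Phi(\pi(g)h)\varphi = \Phi\pi(g)\varphi$, and post-composing the identity $g\Phi = \Phi\pi(g)$ with the fixed term $\varphi$ yields $\Phi\pi(g)\varphi = g\Phi\varphi$. Thus $g$ witnesses $\pi(g)h \in \Iso{\varphi,p,\Phi,q}$, and since $g,h$ were arbitrary the inclusion follows.

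I do not expect any serious obstacle: the argument is essentially a one-line computation once the vocabulary is set up. The only points that need genuine care are bookkeeping ones — that $\Sigma_p$ and $\Sigma_q$ are closed under composition and inverses, and that variable renaming composes covariantly with term formation under the ``no capture-avoiding renaming'' convention, so that $(fg)\varphi = f(g\varphi)$. It is worth noting that the group-homomorphism part of the preceding lemma is not needed here: only the existence of $\pi(g)$ and the identity $g\Phi = \Phi\pi(g)$ are used.
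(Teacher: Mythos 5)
Your argument is correct and is exactly the intended one: the paper states this lemma without proof, and the direct verification you give — $\pi(g)h\in\Sigma_p$ by closure of $\Sigma_p$ under composition, and $g$ itself witnessing $g\Phi\varphi=\Phi\pi(g)\varphi=\Phi\pi(g)(h\varphi)=\Phi(\pi(g)h)\varphi$ since $h\varphi=\varphi$ and renaming acts compositionally on multilinear-free terms — is all that is needed. You also correctly read $\pi(g)\in\Sigma_p$ (the ``$\Sigma_q$'' in the statement of the preceding lemma is a typo, as its declared codomain $\Sigma_p$ and the present product $\pi(\Iso{p,\Phi,q})\Iso{\varphi,p}$ both require), so no gap remains.
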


\begin{definition}
  We define by induction a notion of \Def{uniformity} for pairs $(F,p)$ where $F$ is a multilinear-free polyterm and $p : \Var(F) \rightarrow \Var$:
  \begin{varitemize}
    \item $([x_1,\dots,x_n],p)$ is uniform if $p(x_i) = p(x_j)$ for all $i,j$;
    \item $([\lambda x.\varphi_1,\dots,\lambda x.\varphi_n],p)$ is uniform if $([\varphi_1,\dots,\varphi_n],p)$ is uniform;
    \item $([\linapp{\varphi_1}{G_1},\dots,\linapp{\varphi_n}{G_n}],p)$ is uniform if $([\varphi_1,\dots,\varphi_n],q)$ and $(G_1+\dots+G_n,r)$ are uniform, with $q$ and $r$ the obvious restrictions of $p$;
    \item $([\ls{p}{\varphi_1},\dots,\ls{p}{\varphi_n},\rs{p}{\varphi'_1},\dots,\rs{p}{\varphi'_{n'}}],p)$ is uniform if $([\varphi_1,\dots,\varphi_n],q)$ and $([\varphi'_1,\dots,\varphi'_{n'}],q')$ are uniform, where $q$ and $q'$ are the obvious restrictions of $p$.
  \end{varitemize}
  If $\varphi$ is a multilinear-free simple term we say that $(\varphi,p)$ is uniform if $([\varphi],p)$ is uniform.
\end{definition}

\begin{lemma}
  A pair $(\varphi,p)$ is uniform iff $p\varphi$ is uniform (i.e. $p\varphi \coh p\varphi$).
\end{lemma}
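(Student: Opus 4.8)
The statement is an "iff" between two inductively-defined notions of uniformity: one on pairs $(\varphi,p)$ where $\varphi$ is multilinear-free, the other the coherence-based uniformity $p\varphi \coh p\varphi$ on ordinary simple (poly-)terms. The natural approach is a straightforward structural induction, but the mutual dependence between the simple-term and poly-term clauses forces me to prove both directions simultaneously for simple terms \emph{and} poly-terms. So I would phrase the induction hypothesis as: for every multilinear-free $\varphi$ (simple or poly-) and every $p : \Var(\varphi) \to \Var$, $(\varphi,p)$ is uniform (in the pair sense) iff $p\varphi \coh p\varphi$. The base on the structure of $\varphi$; the subtlety is that the poly-term clause for a bag of $n$ simple terms refers to pairwise coherence of \emph{all} $n$ components, so the relevant induction must go through a statement about bags, not just single terms.

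The key steps, in order: (1) Observe that by definition $(\varphi,p)$ uniform means $([\varphi],p)$ uniform, and $p\varphi \coh p\varphi$ is exactly $[p\varphi] \coh [p\varphi]$ unfolding the multiset coherence rule; so it suffices to prove, for every multilinear-free poly-term $F = [\varphi_1,\dots,\varphi_n]$ and $p : \Var(F)\to\Var$, that $(F,p)$ is uniform iff $p\varphi_i \coh p\varphi_j$ for all $i,j \le n$. (2) Induct on the total size of $F$. For the variable case, $p\varphi_i = p(x_i)$ and $p\varphi_j = p(x_j)$, and $p(x_i) \coh p(x_j)$ holds iff $p(x_i) = p(x_j)$ (the only coherence rule producing a variable on each side is the axiom $x \coh x$); this matches the pair-uniformity clause verbatim. (3) For the abstraction case, $p(\lambda x.\varphi_i) = \lambda x.(p\varphi_i)$ since $p$ does not rename captured variables, and $\lambda x.s \coh \lambda x.s'$ iff $s \coh s'$; apply the IH to $([\varphi_1,\dots,\varphi_n],p)$, which is smaller. (4) For the application case $\linapp{\varphi_i}{G_i}$, note $p(\linapp{\varphi_i}{G_i}) = \linapp{(p\varphi_i)}{(pG_i)}$, and the coherence rule for applications is "both components coherent"; so pairwise coherence of the $\linapp{(p\varphi_i)}{(pG_i)}$ is equivalent to pairwise coherence of the $p\varphi_i$ together with pairwise coherence across the union $p(G_1 + \dots + G_n)$ — which is exactly the content of $(G_1+\dots+G_n,r)$ being uniform — then apply the IH to the two strictly smaller pieces. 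Here I must be a little careful that "pairwise coherent across all the $G_k$ lumped together" is genuinely the same as the poly-term coherence of the sum $G_1+\dots+G_n$; this follows because the multiset-coherence rule only ever checks elementwise coherence of the constituent simple terms, irrespective of how they are partitioned into bags. (5) For the choice case, $p(\ls{p_0}{\varphi_i}) = \ls{p_0}{(p\varphi_i)}$; the relevant coherence rules are $\ls{p_0}{s}\coh\ls{p_0}{s'}$ iff $s\coh s'$, $\rs{p_0}{s}\coh\rs{p_0}{s'}$ iff $s\coh s'$, and $\ls{p_0}{s}\coh\rs{p_0}{t}$ iff $s\coh s$ and $t\coh t$. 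So pairwise coherence among $\{\ls{p_0}{p\varphi_k}\} \cup \{\rs{p_0}{p\varphi'_\ell}\}$ reduces to: the $p\varphi_k$ pairwise coherent (hence in particular each self-coherent), the $p\varphi'_\ell$ pairwise coherent, and every cross pair self-coherent — which, given the first two facts, is automatic. This matches the pair clause "$([\varphi_1,\dots,\varphi_n],q)$ and $([\varphi'_1,\dots,\varphi'_{n'}],q')$ uniform", again invoking the IH on the two strictly smaller sub-bags.

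The main obstacle I anticipate is purely bookkeeping rather than conceptual: making precise the equivalence in step (4) between the "flat" pairwise-coherence condition on the heads $p\varphi_1,\dots,p\varphi_n$ plus the single poly-term $p(G_1+\dots+G_n)$, and what the coherence rule for $\linapp{s}{\bag t}$ delivers when applied pairwise to terms of different bag shapes — together with the symmetric subtlety that the multiset-coherence rule is stated for a specific split $[s_1,\dots,s_m] \coh [s_{m+1},\dots,s_{m+n}]$ but is in fact independent of the split, so "$S \coh S$" for a term $S$ really does mean "all pairs from $\supp S$ are coherent". I would isolate this as a one-line observation (unfolding the definition of $\coh$ on terms given just before Lemma~\ref{lem:subst}) and use it throughout, after which every case is a direct rule-matching between the pair-uniformity clauses and the coherence rules, closed by the induction hypothesis on strictly smaller multilinear-free (poly-)terms. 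Finally I would remark that the hypothesis that $\varphi$ is multilinear-free (all variables free, each occurring once) and that $p$ does not rename captured variables is what makes $p(\cdot)$ commute with every constructor, which is what all the case analyses silently rely on.
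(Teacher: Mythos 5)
Your argument is correct and is precisely the routine structural induction that the paper has in mind: the paper states this lemma without proof (it is one of the auxiliary facts it leaves to the Ehrhard--Regnier-style development), and your clause-by-clause matching of the pair-uniformity rules against the coherence rules, generalised to bags so that the mutual induction goes through, is the intended argument. The bookkeeping points you isolate (that multiset coherence is independent of how the bag is split, so flattening the $G_i$ into $G_1+\dots+G_n$ is harmless, and that $p(\cdot)$ commutes with all constructors because $\varphi$ is multilinear-free and $p$ does not rename captured variables) are indeed the only subtleties.
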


\begin{lemma}
  For $(\varphi,p)$ a uniform pair and $\Phi,\Phi'$ two multilinear-free substitutions over $\Var(\varphi)$, if $\Phi\varphi = \Phi'\varphi$ then there exists $f \in \Iso{\varphi,p}$ such that $\Phi'=\Phi f$.
\end{lemma}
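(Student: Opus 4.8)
The plan is to prove, by induction on the structure of $\varphi$, a slightly strengthened statement symmetric in the two substitutions: for any two \emph{coherent} uniform multilinear-free (poly-)terms $\gamma,\gamma'$ with colourings $p:\Var(\gamma)\to\Var$ and $p':\Var(\gamma')\to\Var$ (so $p\gamma\coh p'\gamma'$), and any multilinear-free substitutions $\Phi$ over $\Var(\gamma)$ and $\Phi'$ over $\Var(\gamma')$ with $\Phi\gamma=\Phi'\gamma'$, there is a bijection $h:\Var(\gamma')\to\Var(\gamma)$ with $h\gamma'=\gamma$, $p\circ h=p'$ and $\Phi\circ h=\Phi'$. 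The lemma is the special case $\gamma=\gamma'=\varphi$, $p=p'$: then $h\in\Sigma_p$, $h\varphi=\varphi$, i.e.\ $h\in\Iso{\varphi,p}$, and $\Phi\circ h=\Phi'$ is exactly $\Phi'=\Phi f$ with $f=h$. Reading a simple term as the singleton poly-term, the induction follows the case analysis in the definition of uniformity of pairs; throughout one uses that multilinear-freeness makes $\Var(\gamma)$ the disjoint union of the variable sets of the immediate subterms and makes $\Phi,\Phi'$ commute with every constructor (bound variables being chosen fresh for the images of $\Phi,\Phi'$).

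The first observation is that $p\gamma\coh p'\gamma'$ together with $\Phi\gamma=\Phi'\gamma'$ forces $\gamma$ and $\gamma'$ to have the same head: if $\gamma$ is a variable, coherence forces $\gamma'$ to be a variable with the same colour and the equation gives $\Phi(x)=\Phi'(x')$; otherwise coherence forces the same head constructor up to choice tag, and the equation rules out matching a $\ls{p_0}{\cdot}$ against a $\rs{p_0}{\cdot}$, so the heads agree outright. For a variable head we take $h(x')=x$. For an abstraction or a choice the equation passes to the unique immediate subterm and we close by the induction hypothesis. For an application $\gamma=\linapp{\gamma_1}{\bar G}$ versus $\gamma'=\linapp{\gamma_1'}{\bar G'}$ the equation splits into $\Phi\gamma_1=\Phi'\gamma_1'$ and $\Phi\bar G=\Phi'\bar G'$, coherence splits accordingly, the induction hypothesis produces bijections on the two disjoint variable sets $\Var(\gamma_1)$ and $\Var(\bar G)$, and their disjoint union is the required $h$ — no reindexing is needed here because an application has exactly one function part and one argument poly-term.

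The real case is the poly-term one, $\gamma=[\gamma_1,\dots,\gamma_m]$ and $\gamma'=[\gamma_1',\dots,\gamma_n']$. Since $\Phi$ and $\Phi'$ preserve multiset cardinality, $\Phi\gamma=\Phi'\gamma'$ forces $m=n$ and, as an equality of multisets, yields a permutation $\rho\in\mathfrak S_m$ with $\Phi\gamma_i=\Phi'\gamma_{\rho(i)}'$ for every $i$; coherence of the pair gives $\gamma_i\coh\gamma_j'$ for all $i,j$, in particular for each matched pair $(\gamma_i,\gamma_{\rho(i)}')$. Applying the induction hypothesis to each matched pair yields a bijection $h_i:\Var(\gamma_{\rho(i)}')\to\Var(\gamma_i)$ with $h_i\gamma_{\rho(i)}'=\gamma_i$, $p\circ h_i=p'$ and $\Phi\circ h_i=\Phi'$ on the relevant variable sets. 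Gluing the $h_i$ over $i$ — legitimate since the $\Var(\gamma_{\rho(i)}')$ partition $\Var(\gamma')$ and the $\Var(\gamma_i)$ partition $\Var(\gamma)$ — gives $h$, and one checks directly that $h\gamma'=[h\gamma_1',\dots,h\gamma_m']=[\gamma_{\rho^{-1}(1)},\dots,\gamma_{\rho^{-1}(m)}]=\gamma$ as multisets, $p\circ h=p'$ and $\Phi\circ h=\Phi'$.

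I expect the main obstacle to be precisely this poly-term case, and specifically the interaction between the component permutation $\rho$ and the lower-level structure: when the $\gamma_i$ are themselves applications, it is crucial that the induction hypothesis is applied to the whole application $\gamma_i$ against $\gamma_{\rho(i)}'$, so that a single $h_i$ coherently handles both the function part and the argument part of that component, rather than applying it separately to the collection of function parts and the collection of argument parts, which could produce incompatible reindexings. This is exactly what the symmetric strengthening buys; once it is in place, the rest is the bookkeeping of disjoint unions of variable sets and the routine verification that the glued bijection preserves $\gamma$, $p$ and the substitutions. The earlier lemmas on $\Sigma_p$, $\Iso{\varphi,p}$ and multinomial coefficients are not needed here; they are used afterwards to turn this rigidity statement into the count of Theorem~\ref{thm:substitution_multinomial}.
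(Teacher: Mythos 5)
Your proposal is sound: the symmetric strengthening to two coherent multilinear-free pairs is exactly the device needed for the poly-term case (where the matching permutation pairs \emph{distinct} components, so the single-term statement gives no usable induction hypothesis), and the remaining steps --- coherence of the coloured terms together with the equation $\Phi\gamma=\Phi'\gamma'$ forcing identical head constructors (the mixed left/right-choice case being excluded by the equation, bag sizes by cardinality), then gluing the component bijections along the disjoint variable sets of a multilinear-free term --- are all correct, and the specialisation $\gamma=\gamma'=\varphi$, $p=p'$ does yield $f\in\Iso{\varphi,p}$ with $\Phi'=\Phi f$. The paper itself omits the proof of this lemma (deferring to its extended version and to Ehrhard and Regnier's original argument), and your induction is essentially that intended argument, so there is no genuine difference in approach to report.
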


\begin{lemma}
  If $(\varphi,p)$ is uniform then $\Iso{\varphi,p,\Phi,q} \subset \pi(\Iso{p,\Phi,q})\Iso{\varphi,p}$.
\end{lemma}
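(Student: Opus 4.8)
The goal is the reverse inclusion to the lemma stated just before, so that together the two yield $\Iso{\varphi,p,\Phi,q} = \pi(\Iso{p,\Phi,q})\Iso{\varphi,p}$ whenever $(\varphi,p)$ is uniform. The plan is to take an arbitrary $f \in \Iso{\varphi,p,\Phi,q}$, so that $f \in \Sigma_p$ and there is a witness $g \in \Sigma_q$ with $g\Phi\varphi = \Phi f\varphi$, and then to show that this same $g$ already lies in $\Iso{p,\Phi,q}$ and that $f$ differs from $\pi(g)$ by an element of $\Iso{\varphi,p}$. The main tool will be the penultimate lemma: for $(\varphi,p)$ uniform, two multilinear-free substitutions over $\Var(\varphi)$ that agree on $\varphi$ differ on the right by an element of $\Iso{\varphi,p}$.

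First I would observe that $g\Phi$, the substitution $x \mapsto g(\Phi(x))$ obtained by renaming the image variables along $g$, and $\Phi f$, the substitution $x \mapsto \Phi(f(x))$ obtained by precomposing $\Phi$ with the bijection $f$ of $\Var(\varphi)$, are again multilinear-free substitutions over $\Var(\varphi)$: since $g$ and $f$ are bijections, the images stay multilinear-free with pairwise disjoint variable sets, the domains still contain $\Var(\varphi)$, and adaptedness is preserved because $g$ merely permutes $\Var(\Phi)$, none of whose variables is bound in $\varphi$. The hypothesis on $f$ now reads exactly $(g\Phi)\varphi = (\Phi f)\varphi$, so uniformity of $(\varphi,p)$ together with the penultimate lemma produces some $f_0 \in \Iso{\varphi,p}$ with $\Phi f = (g\Phi)f_0$.

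It then remains to rearrange. Using associativity and the fact that $g$ acts only on image variables, $(g\Phi)f_0 = g(\Phi f_0)$, and reindexing the renaming of $\varphi$-variables by $f_0^{-1}$ turns $\Phi f = g(\Phi f_0)$ into $g\Phi = \Phi(f f_0^{-1})$. Since $f \in \Sigma_p$ and $f_0 \in \Iso{\varphi,p} \subseteq \Sigma_p$, we have $f f_0^{-1} \in \Sigma_p$, so the identity $g\Phi = \Phi(f f_0^{-1})$ certifies $g \in \Iso{p,\Phi,q}$; by the uniqueness of $\pi(g)$ this forces $\pi(g) = f f_0^{-1}$, hence $f = \pi(g)f_0 \in \pi(\Iso{p,\Phi,q})\Iso{\varphi,p}$, as required.

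The one step I expect to need genuine care is the first: checking in detail that $g\Phi$ and $\Phi f$ really qualify as multilinear-free substitutions adapted to $\varphi$, so that the penultimate lemma applies verbatim. Everything that follows is a short manipulation inside the group $\Sigma_p$ together with the uniqueness clause for $\pi$, where I do not foresee any real difficulty.
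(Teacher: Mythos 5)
Your proof is correct and follows exactly the route the paper intends: the text itself leaves this lemma unproved (deferring to the extended version and to Ehrhard--Regnier), but the preceding lemmas --- in particular the one producing some $f_0\in\Iso{\varphi,p}$ from two multilinear-free substitutions agreeing on a uniform $\varphi$, and the uniqueness clause defining $\pi$ --- are set up precisely to be combined as you do, writing $f=\pi(g)f_0$ with $g\Phi=\Phi(ff_0^{-1})$. I see no gap.
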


\begin{proposition}
  If $(\varphi,p)$ is uniform then $|\Iso{\varphi,p,\Phi,q}| = \frac{|\Iso{p,\Phi,q}||\Iso{\varphi,p}|}{|\Iso{\Phi\varphi,q}|}$
\end{proposition}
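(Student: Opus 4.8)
The plan is to establish the identity $|\Iso{\varphi,p,\Phi,q}| = \frac{|\Iso{p,\Phi,q}|\,|\Iso{\varphi,p}|}{|\Iso{\Phi\varphi,q}|}$ by analysing the map $\pi$ restricted appropriately, using the three preceding lemmas as the main ingredients. First I would recall that by the lemma $\pi(\Iso{p,\Phi,q})\Iso{\varphi,p} \subset \Iso{\varphi,p,\Phi,q}$ and, since $(\varphi,p)$ is uniform, the reverse inclusion $\Iso{\varphi,p,\Phi,q} \subset \pi(\Iso{p,\Phi,q})\Iso{\varphi,p}$ also holds; hence $\Iso{\varphi,p,\Phi,q} = \pi(\Iso{p,\Phi,q})\Iso{\varphi,p}$ exactly. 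So the task reduces to counting the size of this product set inside $\Sigma_p$.

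Next I would compute the cardinality of that product. Both $\Iso{\varphi,p,\Phi,q}$ and $\Iso{\varphi,p}$ are subgroups of $\Sigma_p$ (the latter clearly, and the former because it is characterised by an $\exists g$ condition closed under composition and inverse), and $\pi(\Iso{p,\Phi,q})$ is a subgroup of $\Sigma_p$ since $\pi$ is a group homomorphism. Because $\Iso{\varphi,p,\Phi,q}$ is a subgroup containing both $\pi(\Iso{p,\Phi,q})$ and $\Iso{\varphi,p}$, the product $\pi(\Iso{p,\Phi,q})\Iso{\varphi,p}$ is itself a subgroup equal to $\Iso{\varphi,p,\Phi,q}$, and the standard product formula gives
\[
  |\Iso{\varphi,p,\Phi,q}| = \frac{|\pi(\Iso{p,\Phi,q})|\,|\Iso{\varphi,p}|}{|\pi(\Iso{p,\Phi,q}) \cap \Iso{\varphi,p}|}.
\]
It then remains to identify the two factors on the right with the quantities in the statement.

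For the numerator, I would use that $\pi$ is an isomorphism from $\Iso{p,\Phi,q}$ onto its image: by the lemma asserting uniqueness of $\pi(g)$, the map $\pi$ is well-defined, and its kernel consists of $g \in \Iso{p,\Phi,q}$ with $g\Phi = \Phi$, i.e. $g$ acting trivially on $\Var(\Phi) = \Var(\Phi\varphi)$; since such $g$ lies in $\Sigma_q$ and fixes all relevant variables it must be the identity (here uniformity and the disjointness built into multilinear-free substitutions are used). Hence $|\pi(\Iso{p,\Phi,q})| = |\Iso{p,\Phi,q}|$. For the denominator, I would show $\pi(\Iso{p,\Phi,q}) \cap \Iso{\varphi,p}$ is in bijection with $\Iso{\Phi\varphi,q}$: an element $h$ in the intersection is $\pi(g)$ for some $g$ with $g\Phi = \Phi\pi(g) = \Phi h$, and $h \in \Iso{\varphi,p}$ means $h\varphi = \varphi$, so $g\Phi\varphi = \Phi h\varphi = \Phi\varphi$, i.e. $g \in \Iso{\Phi\varphi,q}$; conversely every $g \in \Iso{\Phi\varphi,q}$ arises this way (this is where one invokes the lemma that $\Phi\varphi = \Phi'\varphi$ forces $\Phi' = \Phi f$ for some $f \in \Iso{\varphi,p}$, applied to $\Phi' = g\Phi$ seen as a substitution, yielding the required $h = f$). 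Combining these identifications with the subgroup product formula gives the claimed equality.

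The main obstacle I anticipate is the denominator bijection: one must carefully match the two descriptions of the intersection, checking that the element $f$ produced by the uniform-substitution lemma is exactly $\pi(g)$ and lies in $\Sigma_p$ rather than merely in some larger symmetric group, and that the correspondence $g \mapsto \pi(g)$ restricted to $\Iso{\Phi\varphi,q}$ is genuinely a bijection onto the intersection and not just a surjection. The subgroup bookkeeping (confirming all four sets really are subgroups of $\Sigma_p$ closed under the relevant operations, so that the product formula legitimately applies) is routine but needs to be stated; everything else follows mechanically from the lemmas already proved.
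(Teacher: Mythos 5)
Your overall skeleton is the same as the paper's: identify $\Iso{\varphi,p,\Phi,q}$ with the product set $\pi(\Iso{p,\Phi,q})\Iso{\varphi,p}$ via the two inclusion lemmas, and then count it with the subgroup product formula $|AB|=\frac{|A||B|}{|A\cap B|}$. The gap is in the two identifications you then make: the kernel of $\pi$ is \emph{not} trivial in general, so neither $|\pi(\Iso{p,\Phi,q})|=|\Iso{p,\Phi,q}|$ nor the claimed bijection between $\pi(\Iso{p,\Phi,q})\cap\Iso{\varphi,p}$ and $\Iso{\Phi\varphi,q}$ holds. The step ``$g\Phi=\Phi$ implies $g$ acts trivially on $\Var(\Phi)$'' is false: $g\Phi=\Phi$ only says that $g$ restricts to an automorphism of each term $\Phi(x)$, and multilinear-free terms can have nontrivial automorphisms compatible with $q$ because multisets are unordered. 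For instance, if $\Phi(x)=\linapp{z}{[y_1,y_2]}$ with $q(y_1)=q(y_2)$, the transposition of $y_1$ and $y_2$ lies in $\ker\pi$; in general $|\ker\pi|=\prod_x\mcoef{q\Phi(x)}$, and these coefficients are exactly what the final multinomial formula is about, so they cannot be assumed to be $1$. Similarly, your map $g\mapsto\pi(g)$ from $\Iso{\Phi\varphi,q}$ onto the intersection is surjective (your use of the uniform-substitution lemma there is correct) but not injective: its fibres are cosets of $\ker\pi$, which is contained in $\Iso{\Phi\varphi,q}$.

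The correct bookkeeping, which is what the paper's proof does, is to carry the kernel through both factors: $|\pi(\Iso{p,\Phi,q})|=\frac{|\Iso{p,\Phi,q}|}{|\ker\pi|}$ by the first isomorphism theorem, and $|\pi(\Iso{p,\Phi,q})\cap\Iso{\varphi,p}|=\frac{|\Iso{\Phi\varphi,q}|}{|\ker\pi|}$, because uniformity shows that the preimage under $\pi$ of the intersection is exactly $\Iso{\Phi\varphi,q}$ and $\pi$ is $|\ker\pi|$-to-one on it. The two occurrences of $|\ker\pi|$ then cancel in the product formula, yielding the stated identity. Your final formula comes out right only because your two incorrect identifications are each off by the same factor $|\ker\pi|$, which silently cancels; as written, the intermediate claims are false and the proof does not stand. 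The fix is local: replace the ``$\pi$ is injective'' and ``bijection with $\Iso{\Phi\varphi,q}$'' claims by the two quotient counts above.
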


\begin{proof}
We have $|\pi(\Iso{p,\Phi,q})\Iso{\varphi,p}| = \frac{|\pi(\Iso{p,\Phi,q})||\Iso{\varphi,p}|}{|\pi(\Iso{p,\Phi,q}) \cap \Iso{\varphi,p}|}$.

Observe that $|\pi(\Iso{p,\Phi,q})| = \frac{|\Iso{p,\Phi,q}|}{\ker \pi}$ and $|\pi(\Iso{p,\Phi,q}) \cap \Iso{\varphi,p}| = |\ker \pi||\Iso{\Phi\varphi,q}|$.
\end{proof}

This is enough to conclude the proof of Theorem~\ref{thm:substitution_multinomial}.

This theorem ensures that a regular $\beta$-redex
$\frac{1}{\mcoef{\linapp{\lambda x.s}{\bag t}}}.\linapp{\lambda
x.s}{\bag t}$ reduces into a regular term. More generally, the theorem
is the key step towards proving that regular (poly-)terms always
normalise to regular (poly-)terms.

\begin{proposition}
  If $\sigma$ is uniform then for any $\tau \in \supp{\Lred \sigma}$, $\Lred \sigma_\tau = \frac{\mcoef \sigma}{\mcoef \tau}$.
\end{proposition}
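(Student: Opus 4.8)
The plan is to prove the statement by induction on the structure of the uniform simple (poly-)term $\sigma$, following exactly the case analysis used to define the complete left reduct $\Lred\sigma$. In each case I would unfold the definition of $\Lred\sigma$, use the multiplicativity of $\mcoef{-}$ over the constructors, and invoke the induction hypothesis on the immediate subterms together with Corollary~\ref{prop:nf_uniform_distinct} (more precisely Proposition~\ref{prop:Lred_uniform_distinct}) to guarantee that the supports of the complete left reducts of coherent-but-distinct subterms do not overlap, so that no coefficients get added together spuriously. The real content is concentrated in the $\beta$-case; all the others are bookkeeping.

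First I would dispatch the easy cases. If $\sigma = \ls{p}{s}$ or $\sigma = \rs{p}{s}$, then $\Lred\sigma = \ls{p}{\Lred s}$ (resp.\ $\rs{p}{\Lred s}$), and since $\mcoef{\ls{p}{s}} = \mcoef{s}$ and the constructor $\ls{p}{\cdot}$ is injective on simple terms, every $\tau \in \supp{\Lred\sigma}$ is $\ls{p}{\tau'}$ with $\tau'\in\supp{\Lred s}$ and $\Lred\sigma_\tau = \Lred s_{\tau'} = \mcoef s/\mcoef{\tau'} = \mcoef\sigma/\mcoef\tau$ by the induction hypothesis. The head-normal-form case $\sigma = \lambda\vec x.\linapp{y}{\bag u_1\,\dots\,\bag u_m}$ is handled similarly: here $\Lred\sigma = \lambda\vec x.\linapp{y}{\Lred{\bag u_1}\,\dots\,\Lred{\bag u_m}}$, the multinomial coefficient is $\prod_i \mcoef{\bag u_i}$, and the poly-term case of $\Lred$ combined with the induction hypothesis on each $\bag u_i$ gives the claim --- this is where one must be careful that $\Lred{[s_1,\dots,s_n]} = [\Lred{s_1},\dots,\Lred{s_n}]$ interacts correctly with $\mcoef{\bag s} = \prod_u \bag s(u)!\,\mcoef u^{\bag s(u)}$, so one reduces the poly-term claim to: for a uniform $\bag s = [s_1,\dots,s_n]$ and $\bag v \in \supp{\Lred{\bag s}}$, $\Lred{\bag s}_{\bag v} = \mcoef{\bag s}/\mcoef{\bag v}$. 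Since $\bag s$ is uniform, all the $s_i$ that are distinct are pairwise coherent, so by Proposition~\ref{prop:Lred_uniform_distinct} their complete left reducts have disjoint supports; grouping equal $s_i$'s together one gets a clean counting of how many ways $\bag v$ decomposes as $[\Lred{s_1}_{v_1}\text{-picks}]$, and the factorial factors in $\mcoef{\bag s}$ and $\mcoef{\bag v}$ match up. The head-choice cases $\sigma = \lambda\vec x.\linapp{\ls{p}{s}}{\bag u_1\dots\bag u_m}$ (and the $\rs{}$ variant) are immediate since $\Lred$ just commutes the choice outward without changing anything else, and $\mcoef{}$ ignores choice constructors.

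The main obstacle is the $\beta$-case: $\sigma = \lambda\vec x.\linapp{\lambda y.s}{\bag t\,\bag u_1\,\dots\,\bag u_m}$, with $\Lred\sigma = \lambda\vec x.\linapp{\dsubst{s}{\bag t}{y}}{\bag u_1\,\dots\,\bag u_m}$ (here $\bag t$ is the multiset used to fill the $y$-occurrences, of the appropriate cardinality). I would first reduce to the case $m=0$ and $\vec x$ empty, since the surrounding abstractions and the trailing poly-terms $\bag u_i$ contribute the same multinomial factor on both sides (using the disjointness from Proposition~\ref{prop:Lred_uniform_distinct} so that the combined support is a disjoint union and coefficients don't merge). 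So it suffices to show: for $s$ uniform, $\bag t$ coherent-with-itself and compatible with $s$, and $w \in \supp{\dsubst{s}{\bag t}{y}}$, one has $(\dsubst{s}{\bag t}{y})_w = \mcoef{\bag t}\mcoef{\lambda y.s}/\mcoef w$. But $\mcoef{\lambda y.s} = \mcoef s$, so this is precisely Theorem~\ref{thm:substitution_multinomial}, which has just been proved. The only subtlety is to make sure the \emph{original} redex $\linapp{\lambda y.s}{\bag t\,\bag u_1\dots}$ is itself a summand whose coefficient one is tracking --- but the proposition's hypothesis is just that $\sigma$ is uniform (i.e.\ coefficient $1.\sigma$), and $\mcoef\sigma = \mcoef{\lambda y.s}\,\mcoef{\bag t}\prod_i\mcoef{\bag u_i}$, so feeding Theorem~\ref{thm:substitution_multinomial} the value $\mcoef{\bag t}\mcoef{\lambda y.s} = \mcoef{\bag t}\mcoef s$ and dividing by the trailing $\prod_i \mcoef{\bag u_i}$ on both sides closes the case exactly.

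Thus the proof is a routine induction once Theorem~\ref{thm:substitution_multinomial} is in hand; the one place requiring genuine care is the poly-term bookkeeping, where uniformity plus Proposition~\ref{prop:Lred_uniform_distinct} must be invoked to turn the naive sum over decompositions into a sum with no collisions, so that the factorial factors in the multinomial coefficients line up rather than being overcounted.
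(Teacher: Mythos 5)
Your proposal is correct and follows essentially the same route as the paper's proof: an induction on the structure of $\sigma$ driven by the case analysis of $\Lred{\cdot}$, with Theorem~\ref{thm:substitution_multinomial} handling the $\beta$-case and Proposition~\ref{prop:Lred_uniform_distinct} providing the disjoint-support/factorial counting in the poly-term case. The only difference is cosmetic (e.g.\ invoking disjointness in the $\beta$-case, where the untouched trailing poly-terms already make the context injective), so nothing further is needed.
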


\begin{proof}
  We reason by induction on $\sigma$, using Theorem~\ref{thm:substitution_multinomial} when dealing with $\beta$-reduction. Observe that in the case of a poly-term $\bag s = [s_1,\dots,s_n]$, according to Proposition~\ref{prop:Lred_uniform_distinct} for all $i,j \leq n$ we have either $s_i = s_j$ or $\supp{\Lred{s_i}} \cap \supp{\Lred{s_j}} = \emptyset$. This means that for a poly-term $\bag t = [t_1,\dots,t_n] \in \supp{\Lred{\bag s}}$ the number of pairwise distinct sequences $(t_{\rho(1)},\dots,t_{\rho(n)})$ with $\rho \in \mathfrak S_n$ such that $t_{\rho(i)} \in \supp{\Lred{s_i}}$ for all $i \leq n$ is exactly $\frac{\prod_{u \in \psrt} \bag s(u)!}{\prod_{v \in \psrt} \Lred{\bag s}(v)!}$.
\end{proof}

\begin{corollary}
  For all finite regular term $S$, $\Lred S$ and $\nf S$ are regular.
\end{corollary}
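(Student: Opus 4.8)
The plan is to deduce the statement from the immediately preceding proposition together with Proposition~\ref{prop:nf_Lred} and a short induction. Recall that the preceding proposition says: if $\sigma$ is uniform then for any $\tau \in \supp{\Lred \sigma}$ we have $(\Lred \sigma)_\tau = \frac{\mcoef \sigma}{\mcoef \tau}$. The first task is to lift this from simple (poly-)terms to finite terms $S \in \fpgrt$. Assume $S$ is regular, so $S$ is uniform and $S_\sigma = \frac{1}{\mcoef \sigma}$ for each $\sigma \in \supp S$. By Proposition~\ref{prop:Lred_uniform_distinct}, since $S$ is uniform the complete left reduct $\Lred S$ is uniform, and distinct elements of $\supp S$ have complete left reducts with disjoint supports. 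Hence for any $\tau \in \supp{\Lred S}$ there is a \emph{unique} $\sigma \in \supp S$ with $\tau \in \supp{\Lred \sigma}$, and therefore $(\Lred S)_\tau = S_\sigma \cdot (\Lred \sigma)_\tau = \frac{1}{\mcoef \sigma}\cdot\frac{\mcoef \sigma}{\mcoef \tau} = \frac{1}{\mcoef \tau}$. This shows $\Lred S$ is regular whenever $S$ is.

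For $\nf S$ I would then iterate. By Proposition~\ref{prop:nf_Lred} there is $k \in \Nat$ with $\nf S = \Lred[k]{S}$, and an easy induction on $k$ using the previous paragraph shows that $\Lred[k]{S}$ is regular: the base case $k=0$ is the hypothesis, and the inductive step applies the ``$\Lred$ preserves regularity'' fact to $\Lred[k-1]{S}$, which is regular by induction hypothesis. Hence $\nf S$ is regular. (One could equally phrase this via Corollary~\ref{prop:nf_uniform_distinct}, which already gives uniformity of $\nf S$ and disjointness of the supports of the normal forms of distinct coherent simple terms; the coefficient computation is then the only new ingredient, and it is exactly the one carried out in the preceding proposition, propagated through the finitely many $\Lred$-steps.)

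I do not expect any serious obstacle here: the conceptual work has all been done in the preceding proposition (which is where Theorem~\ref{thm:substitution_multinomial} enters, via $\beta$-reduction) and in Proposition~\ref{prop:Lred_uniform_distinct} (which guarantees that the coefficient of $\tau$ in $\Lred S$ is contributed by a single $\sigma \in \supp S$, with no collisions that would force us to sum several $\frac{\mcoef\sigma}{\mcoef\tau}$ terms). The only mild subtlety is to be sure that ``regular'' is stable under $\Lred$ in the precise sense needed for the induction — i.e.\ that we never leave the class of uniform terms and never create a $\tau$ whose coefficient is a sum of contributions — but both points are exactly what Proposition~\ref{prop:Lred_uniform_distinct} provides. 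So the proof is essentially: lift the per-simple-term coefficient identity to finite regular terms using disjointness of supports, then iterate along the $\Lred$-chain given by Proposition~\ref{prop:nf_Lred}.
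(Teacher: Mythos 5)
Your proof is correct and follows exactly the route the paper intends (the paper leaves this corollary's proof implicit): lift the coefficient identity $(\Lred\sigma)_\tau = \mcoef\sigma/\mcoef\tau$ from uniform simple terms to finite regular terms via the disjointness of supports guaranteed by Proposition~\ref{prop:Lred_uniform_distinct}, then iterate along $\nf S = \Lred[k]{S}$ from Proposition~\ref{prop:nf_Lred}. No gaps; this is the same argument the paper relies on when it later derives Theorem~\ref{thm:nf_regular} from this corollary together with Corollary~\ref{prop:nf_uniform_distinct}.
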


\begin{theorem}\label{thm:nf_regular}
  If $S \in \pgrt$ is regular then $\nf S$ is regular.
\end{theorem}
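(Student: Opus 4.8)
The statement to prove is Theorem~\ref{thm:nf_regular}: if $S \in \pgrt$ is regular then $\nf S$ is regular. The plan is to reduce the infinite case to the finite case already settled by the preceding corollary, using the pointwise description of normalisation for uniform terms.

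First I would unwind what ``$\nf S$ is regular'' requires: by definition of regularity I must check (i) that $\nf S$ is uniform, and (ii) that for every $\tau \in \supp{\nf S}$ one has $(\nf S)_\tau = \frac{1}{\mcoef \tau}$. Uniformity of $\nf S$ is immediate: $S$ is regular hence uniform, and by Corollary~\ref{prop:nf_uniform_distinct} (applied with $S' = S$, or rather via the fact that normalisation preserves coherence on supports) $\nf S \coh \nf S$; this is precisely what guarantees $\nf S \in \pgrt$ in the first place, via the Corollary following Proposition~\ref{prop:nf_limit_Lred}. For the coefficient condition, fix $\tau \in \supp{\nf S}$ in normal form. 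Since $S$ is uniform, by Corollary~\ref{prop:nf_uniform_distinct} the simple (poly-)terms in $\supp S$ have pairwise disjoint normal-form supports, so there is a \emph{unique} $\sigma \in \supp S$ with $\tau \in \supp{\nf \sigma}$, and $(\nf S)_\tau = S_\sigma \cdot (\nf \sigma)_\tau$.

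Now I invoke the finite case: the single simple (poly-)term $\sigma$, taken with coefficient $\frac{1}{\mcoef \sigma}$, is a finite regular term, so by the Corollary ``For all finite regular term $S$, $\Lred S$ and $\nf S$ are regular'' we get that $\frac{1}{\mcoef\sigma}.\nf\sigma$ is regular, i.e. $(\nf \sigma)_\tau = \frac{\mcoef \sigma}{\mcoef \tau}$ for every $\tau \in \supp{\nf\sigma}$. (Equivalently one can cite directly the Proposition stating $\Lred\sigma_\tau = \frac{\mcoef\sigma}{\mcoef\tau}$ together with Proposition~\ref{prop:nf_Lred} iterated, as in the proof of that Corollary.) Since $S$ is regular, $S_\sigma = \frac{1}{\mcoef \sigma}$. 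Combining, $(\nf S)_\tau = \frac{1}{\mcoef \sigma} \cdot \frac{\mcoef \sigma}{\mcoef \tau} = \frac{1}{\mcoef \tau}$, which is exactly the regularity condition. As $\tau$ was an arbitrary element of $\supp{\nf S}$, this proves $\nf S$ is regular.

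The only subtlety — and the step I would treat most carefully — is the bookkeeping that turns the pointwise definition $\nf S = \sum_{\sigma} S_\sigma \nf\sigma$ into the clean identity $(\nf S)_\tau = S_\sigma (\nf\sigma)_\tau$ for the \emph{unique} contributing $\sigma$. This is where uniformity is essential: without the disjoint-support property of Corollary~\ref{prop:nf_uniform_distinct}, several $\sigma$'s could contribute to the same $\tau$ with coefficients $\frac{\mcoef\sigma}{\mcoef\tau}$ that need not sum to $\frac{1}{\mcoef\tau}$, and regularity would fail. So the proof is genuinely short, but it hinges on having the coherence machinery in place; no new estimate is needed beyond what the finite case already provides.
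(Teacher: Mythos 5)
Your proof is correct and follows the same route as the paper: the paper's own argument is exactly to combine the finite-case corollary (normal forms of finite regular terms are regular) with Corollary~\ref{prop:nf_uniform_distinct}, using the pointwise definition of $\nf S$ and the disjointness of the supports $\supp{\nf\sigma}$ to isolate the unique contributing $\sigma$ and compute $(\nf S)_\tau = \frac{1}{\mcoef\sigma}\cdot\frac{\mcoef\sigma}{\mcoef\tau} = \frac{1}{\mcoef\tau}$. You have simply spelled out the details the paper leaves implicit, including the uniformity of $\nf S$, so nothing further is needed.
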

\begin{proof}
This follows directly from the previous result and Corollary~\ref{prop:nf_uniform_distinct}.
\end{proof}

\subsection{Regularity and the Exponential}
The regularity of terms is preserved by the constructors of simple
resource terms.

\begin{proposition}\label{prop:context_regular}
  For all $x \in \Var$, $S \in \prt$ regular and $\bag T \in \prpt$
  regular, the terms $1.x$, $\lambda x.S$, $\ls{p}{S}$, $\rs{p}{S}$ and $\linapp{S}{\bag T}$ are
  regular.
\end{proposition}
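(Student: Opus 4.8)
The plan is to handle each constructor separately, reducing in every case to the known regularity of the arguments. The statement combines several claims, but they split cleanly: the variable case is trivial, the unary cases ($\lambda x.S$, $\ls{p}{S}$, $\rs{p}{S}$) are almost identical, and the genuinely combinatorial case is $\linapp{S}{\bag T}$. For all of them I would first argue uniformity (since regularity is only defined for uniform terms), and then check the coefficient condition $(\,\cdot\,)_\sigma = \tfrac{1}{\mcoef\sigma}$ on every element of the support. Uniformity of the constructed terms is immediate from the coherence rules: e.g.\ if $S \coh S$ then $\lambda x.S \coh \lambda x.S$ by the abstraction rule, and similarly for $\ls{p}{\cdot}$ and $\rs{p}{\cdot}$; for $\linapp{S}{\bag T}$ one uses the application rule of $\coh$ on a pair $s \coh s'$ in $\supp S$ and $\bag t \coh \bag{t'}$ in $\supp{\bag T}$.

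For the variable $1.x$: its support is $\{x\}$, and $\mcoef x = 1$, so the coefficient $1 = \tfrac{1}{\mcoef x}$, giving regularity. For $\lambda x.S$: by the linear extension of the abstraction constructor, $(\lambda x.S)_{\lambda x.s} = S_s$ and all other coefficients are $0$, so $\supp{\lambda x.S} = \{\lambda x.s \mid s \in \supp S\}$; since $\mcoef{\lambda x.s} = \mcoef s$ by definition and $S_s = \tfrac{1}{\mcoef s}$ by regularity of $S$, we get $(\lambda x.S)_{\lambda x.s} = \tfrac{1}{\mcoef{\lambda x.s}}$. The cases $\ls{p}{S}$ and $\rs{p}{S}$ are verbatim the same, using $\mcoef{\ls{p}{s}} = \mcoef{\rs{p}{s}} = \mcoef s$. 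Each of these is a bijection between the relevant supports, so no collision of summands occurs.

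The main work is $\linapp{S}{\bag T}$, and this is where I expect the obstacle to lie, because the map $(s,\bag t)\mapsto \linapp{s}{\bag t}$ is injective on pairs, so there is no multiplicity collapse at the level of simple terms — but the coefficient has to come out as $\tfrac{1}{\mcoef s \cdot \mcoef{\bag t}} = \tfrac{1}{\mcoef{\linapp{s}{\bag t}}}$, and we must check that the bilinear extension of application does exactly that. Writing $\linapp{S}{\bag T} = \sum_{s,\bag t} S_s \bag T_{\bag t}\,.\,\linapp{s}{\bag t}$, by regularity $S_s \bag T_{\bag t} = \tfrac{1}{\mcoef s}\cdot\tfrac{1}{\mcoef{\bag t}} = \tfrac{1}{\mcoef{\linapp{s}{\bag t}}}$ whenever $s \in \supp S$ and $\bag t \in \supp{\bag T}$, and the terms $\linapp{s}{\bag t}$ are pairwise distinct as $(s,\bag t)$ ranges over $\supp S \times \supp{\bag T}$, so no two summands coincide and the coefficient is read off directly. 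Hence $\linapp{S}{\bag T}$ is regular. The only subtlety to be careful about is that $\supp{\linapp{S}{\bag T}}$ is exactly $\{\linapp{s}{\bag t} \mid s \in \supp S,\ \bag t \in \supp{\bag T}\}$ with no cancellation — which holds because all coefficients are nonnegative reals, so there is no subtraction — and that regularity of $\bag T$ already builds in the factor $\prod_u \bag t(u)!\cdot\mcoef u^{\bag t(u)}$ through the definition of $\mcoef{\bag t}$, so nothing further need be done on the poly-term side.
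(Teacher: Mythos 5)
Your proof is correct and is exactly the routine verification that the paper leaves implicit: uniformity of each constructed term follows from the corresponding coherence rules applied to elements of the supports, and the coefficient condition follows from $\mcoef{\lambda x.s}=\mcoef{\ls{p}{s}}=\mcoef{\rs{p}{s}}=\mcoef{s}$ and $\mcoef{\linapp{s}{\bag t}}=\mcoef{s}\,\mcoef{\bag t}$, together with the injectivity of each constructor (in particular of $(s,\bag t)\mapsto\linapp{s}{\bag t}$), which guarantees that coefficients are read off without collisions. Nothing essential is missing.
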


One may expect a similar result for poly-terms: if $S_1$,\dots,$S_n$
in $\prt$ are regular then $[S_1,\dots,S_n]$ is regular. However, this
is not the case: $1.x$ is regular and yet $1.[x,x]$ is not. Indeed
nontrivial coefficients appear in $\mcoef \sigma$ precisely when
$\sigma$ contains simple poly-terms with multiplicities greater than
$1$, so the regular sum with the same support as $[S_1,\dots,S_n]$ has
no simple description. A natural way to build regular poly-terms from
regular terms is to use the following construction.

\begin{definition}
  The \Def{exponential} of $S \in \prt$ is $!S
  = \sum_{n \in \Nat} \frac{1}{n!} [S^n] \in \prpt$, where $[S^n]$
  stands for the poly-term $[S,\dots,S]$ with $n$ copies of $S$.
\end{definition}

\begin{proposition}\label{prop:exp_regular}
  If $S \in \prt$ is regular then $!S$ is regular.
\end{proposition}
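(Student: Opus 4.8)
The plan is to show that the exponential construction $!S=\sum_{n\in\Nat}\frac1{n!}[S^n]$ produces exactly the regular poly-term with support $\{[s_1,\dots,s_n]\mid n\in\Nat,\ s_i\in\supp S\}$, by computing the coefficient of an arbitrary simple poly-term $\bag t=[t_1,\dots,t_n]$ in $!S$ and checking it equals $\frac1{\mcoef{\bag t}}$. First I would fix $\bag t\in\supp{!S}$; since $S$ is regular it is in particular uniform, so by the coherence rule for poly-terms every $t_i$ lies in $\supp S$ and is pairwise coherent, whence $!S$ is uniform (the constructors preserve uniformity, cf.\ Proposition~\ref{prop:context_regular} for the finite pieces $[S^m]$, and the infinite sum of a coherent family stays in $\prt$). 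The only summand $\frac1{m!}[S^m]$ contributing to $[t_1,\dots,t_n]$ is the one with $m=n$, so $(!S)_{\bag t}=\frac1{n!}\,([S^n])_{\bag t}$.

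It remains to evaluate $([S^n])_{\bag t}$. Writing $\bag t$ with its distinct elements $u_1,\dots,u_k$ occurring with multiplicities $a_1,\dots,a_k$ (so $\sum_j a_j=n$), the poly-term $[S^n]=[S,\dots,S]$ is the $n$-fold product, and by multilinearity of the poly-term constructor its coefficient at $[u_1^{a_1},\dots,u_k^{a_k}]$ is the number of functions assigning to each of the $n$ copies of $S$ one of the $u_j$'s in the right proportions, times $\prod_j (S_{u_j})^{a_j}$; that count is the multinomial coefficient $\binom{n}{a_1,\dots,a_k}=\frac{n!}{\prod_j a_j!}$. Since $S$ is regular, $S_{u_j}=\frac1{\mcoef{u_j}}$, so
\[
([S^n])_{\bag t}=\frac{n!}{\prod_j a_j!}\prod_j \frac1{\mcoef{u_j}^{a_j}},
\qquad\text{hence}\qquad
(!S)_{\bag t}=\frac1{n!}([S^n])_{\bag t}=\frac{1}{\prod_j a_j!\cdot\mcoef{u_j}^{a_j}}.
\]
By definition of the multinomial coefficient of a poly-term, $\mcoef{\bag t}=\prod_{u\in\psrt}\bag t(u)!\cdot\mcoef u^{\bag t(u)}=\prod_j a_j!\cdot\mcoef{u_j}^{a_j}$, so $(!S)_{\bag t}=\frac1{\mcoef{\bag t}}$, which is exactly the regularity condition.

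The main obstacle is the bookkeeping in the middle step: one must be careful that the coefficient of $[u_1^{a_1},\dots,u_k^{a_k}]$ in the formal product $[S,\dots,S]$ genuinely counts ordered assignments of copies-to-elements rather than unordered ones, i.e.\ that the combinatorial factor is the full multinomial $\frac{n!}{\prod_j a_j!}$ and not something smaller — this is the same phenomenon illustrated in the paper's examples (e.g.\ $\dsubst{[x,x]}{[y,z]}{x}=2.[y,z]$), where distinct occurrences in a poly-term count separately even when the underlying simple terms coincide. Once that is pinned down (by an easy induction on $n$, or by appealing directly to how the paper extends constructors to linear combinations by multilinearity), the rest is a direct identification with $\mcoef{\bag t}$. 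Uniformity of $!S$, needed for $\nf$ and the ambient statement to make sense, follows from the coherence rule for poly-terms together with uniformity of $S$.
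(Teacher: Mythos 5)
Your proof is correct and follows essentially the same route as the paper: the paper's entire argument is the observation that the number of sequences $(s_1,\dots,s_n)$ describing a given simple poly-term $\bag t$ is $\frac{n!}{\prod_{u}\bag t(u)!}$, which is exactly the combinatorial count you make explicit before cancelling the $\frac{1}{n!}$ and invoking regularity of $S$. Your additional remarks on uniformity of $!S$ and the restriction to the summand $m=n$ are just the details the paper leaves implicit.
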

\begin{proof}
The key point is that the number of sequences $(s_1,\dots,s_n)$ which
describe a given simple poly-term $\bag s = [s_1,\dots,s_n]$ is
exactly $\frac{n!}{\prod_{u \in \psrt}\bag s(u)!}$.
\end{proof}

With these results, we have all the ingredients we need to translate
(probabilistic) $\lambda$-terms into regular terms: variables and abstractions of
regular terms are regular, and we can define an application between
regular terms following Girard's call-by-name translation of
intuitionistic logic into linear logic~\cite{G87}: $S$ applied to $T$ is
$\linapp{S}{!T}$.

\section{Explicit Probabilistic Taylor Expansion}\label{sec:taylor_rigid}
This section is devoted to defining and studying the \emph{Taylor expansion
with explicit choices}, or \emph{explicit Taylor expansion}, of
probabilistic $\lambda$-terms. It is named as such because its target is
the set of probabilistic resource terms, as defined in the previous section,
rather than the usual ones. This is \emph{not} the main contribution of this
paper, but an intermediate step in the study of Taylor
expansion as defined in Section~\ref{sec:taylor}.
\subsection{The Definition}
Probabilistic $\lambda$-terms are $\lambda$-terms enriched with a probabilistic
choice operator.
\begin{definition}
  The set of \Def{probabilistic $\lambda$-terms} $\Lambda^+$ is:
  \[M,N \in \Lambda^+ := x \mid \lambda x.M \mid M\ N \mid M \oplus_p N\]
\end{definition}
\begin{example}
  Let us consider the probabilistic $\lambda$-term $Q=\Delta(I +_{\frac{1}{2}}\Omega)$,
  where $\Delta=\lambda x.xx$, $I=\lambda x.x$, and $\Omega$ is any diverging
  term, e.g. $\Delta\Delta$. The term converges (to $I$) with probability
  $\frac{1}{4}$, and will be used as a running example throughout this section.
\end{example}

\begin{definition}
  The \Def{explicit Taylor expansion} $\rtay M$ is defined inductively as follows:
  \begin{align*}
    \rtay x &= x &&& \rtay{(M\ N)} &= \linapp{\rtay M}{!\rtay N} = \sum_{n \in \Nat} \frac{1}{n!}\linapp{\rtay M}{[(\rtay N)^n]}\\
    \rtay{(\lambda x.M)} &= \lambda x.\rtay M &&& \rtay{(M \oplus_p N)} &= (\ls{p}{\rtay M}) + (\rs{p}{\rtay N})
  \end{align*}
\end{definition}

\begin{definition}
  The support $\rTsupp{M} \subset \Delta^+$ of the Taylor expansion of $M \in \Lambda^+$ is defined by:
  \begin{align*}
    \rTsupp{x} &= \{x\}\\
    \rTsupp{\lambda x.M} &= \{\lambda x.s \sep s \in \rTsupp{M}\}\\
    \rTsupp{M\ N} &:= \{\linapp{s}{\bag t} \sep s \in \rTsupp{M}, \bag t \in \Mf{\rTsupp{N}}\}\\
    \rTsupp{M +_p N} &:= \{\ls{p}{s} \sep s \in \rTsupp{M}\} \cup \{\rs{p}{t} \sep t \in \rTsupp{N}\}
  \end{align*}
\end{definition}
\begin{proposition}
  For every $M \in \Lambda^+$, it holds that
  \[\rtay M = \sum_{s \in \rTsupp M} \frac{1}{\mcoef s} s \in \Rpos^{\Delta^+}.\]
\end{proposition}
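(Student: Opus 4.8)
The statement asserts two things simultaneously: that $\supp{\rtay M} = \rTsupp M$, and that for each $s \in \supp{\rtay M}$ the coefficient $(\rtay M)_s$ equals $\frac{1}{\mcoef s}$. Both follow by a single induction on the structure of $M \in \Lambda^+$, and the natural way to organise it is to prove the stronger statement "$\rtay M$ is regular with support $\rTsupp M$'', so that at each step I may invoke the regularity-preservation results of Section~\ref{sec:rigid} (Proposition~\ref{prop:context_regular} and Proposition~\ref{prop:exp_regular}) rather than manipulating multinomial coefficients by hand. I would also want, as part of the induction hypothesis, that $\rtay M$ is \emph{uniform}, since regularity is only defined for uniform terms; uniformity of $\rtay M$ is itself an easy structural induction (the only subtle clause is $M \oplus_p N$, where $\ls{p}{s}$ and $\rs{p}{t}$ are declared coherent precisely when $s,t$ are uniform, which holds by the induction hypotheses on $M$ and $N$).

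\textbf{The base case and the two easy inductive cases.} For $M = x$ we have $\rtay x = 1.x = x$, $\rTsupp x = \{x\}$, and $\mcoef x = 1$, so the claim holds. For $M = \lambda x.N$ we use $\rtay{(\lambda x.N)} = \lambda x.\rtay N$; by the induction hypothesis $\rtay N$ is regular with support $\rTsupp N$, so $\supp{\lambda x.\rtay N} = \{\lambda x.s \mid s \in \rTsupp N\} = \rTsupp{\lambda x.N}$, Proposition~\ref{prop:context_regular} gives regularity, and $\mcoef{\lambda x.s} = \mcoef s$ makes the coefficients match. The case $M = M_1 \oplus_p M_2$ is analogous: $\rtay{(M_1 \oplus_p M_2)} = (\ls{p}{\rtay{M_1}}) + (\rs{p}{\rtay{M_2}})$, the two summands have disjoint supports (one consists of left-choices, the other of right-choices), so the support is exactly $\rTsupp{M_1 \oplus_p M_2}$; regularity follows from Proposition~\ref{prop:context_regular} applied to each summand plus the fact that their supports are disjoint, and again $\mcoef{\ls{p}{s}} = \mcoef{\rs{p}{s}} = \mcoef s$.

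\textbf{The application case.} This is the crux. Here $\rtay{(M\ N)} = \linapp{\rtay M}{!\,\rtay N}$, where $!\,\rtay N = \sum_{n} \frac{1}{n!}[(\rtay N)^n]$. By the induction hypothesis $\rtay N$ is regular with support $\rTsupp N$, so by Proposition~\ref{prop:exp_regular} the poly-term $!\,\rtay N$ is regular, and its support is $\{[t_1,\dots,t_n] \mid n \in \Nat,\ t_i \in \rTsupp N\} = \Mf{\rTsupp N}$ — the key combinatorial fact quoted in the proof of Proposition~\ref{prop:exp_regular} is that a simple poly-term $\bag t$ arises from exactly $n!/\prod_u \bag t(u)!$ sequences, which is what converts the coefficient $\frac{1}{n!}$ (summed over all those sequences) into $\frac{1}{\prod_u \bag t(u)!\,\mcoef{t}^{\bag t(u)}} = \frac{1}{\mcoef{\bag t}}$ using regularity of $\rtay N$. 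Then $\linapp{\rtay M}{!\,\rtay N}$ is regular by Proposition~\ref{prop:context_regular}, its support is $\{\linapp{s}{\bag t} \mid s \in \rTsupp M,\ \bag t \in \Mf{\rTsupp N}\} = \rTsupp{M\ N}$, and $\mcoef{\linapp{s}{\bag t}} = \mcoef s \,\mcoef{\bag t}$ so that the coefficient of $\linapp{s}{\bag t}$ in the product is $\frac{1}{\mcoef s}\cdot\frac{1}{\mcoef{\bag t}} = \frac{1}{\mcoef{\linapp{s}{\bag t}}}$, as required. The main obstacle, such as it is, is bookkeeping in this case: one must be careful that "regular'' presupposes uniformity, so the induction must carry uniformity alongside, and one must check that the support of a product/exponential is computed correctly before invoking the regularity lemmas — but all the genuine mathematical content has already been isolated in Propositions~\ref{prop:context_regular} and~\ref{prop:exp_regular}, so what remains is routine verification.
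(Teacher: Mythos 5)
Your argument is correct, but it takes a different route from the paper's own proof of this proposition. The paper proves the identity by a direct coefficient computation: the only nontrivial case is the application, where it expands $!\rtay N$ over sequences and uses the counting fact that a multiset $\bag t$ of size $n$ is described by exactly $n!/\prod_u\bag t(u)!$ sequences, together with $\mcoef{\linapp{s}{\bag t}}=\mcoef{s}\,\mcoef{\bag t}$; no appeal to uniformity or regularity is made there, and indeed the regularity of $\rtay M$ (Proposition~\ref{prop:rtay_regular}) is only derived afterwards, by essentially the induction you run. You instead strengthen the induction hypothesis to ``$\rtay M$ is uniform and regular with support $\rTsupp M$'' and delegate the combinatorics to Propositions~\ref{prop:context_regular} and~\ref{prop:exp_regular}; since those are established in Section~\ref{sec:rigid} independently of the present statement, there is no circularity, and your proof in effect merges this proposition with Proposition~\ref{prop:rtay_regular}, which becomes an immediate corollary. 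What the paper's computation buys is self-containment (it re-derives inline exactly the counting argument that underlies Proposition~\ref{prop:exp_regular}); what yours buys is economy and reuse, at the price of the extra bookkeeping you correctly identify: carrying uniformity through the induction, checking that supports compose as claimed (harmless, since all coefficients are nonnegative, so no cancellation can occur), and handling the $\oplus_p$ case by hand, since Proposition~\ref{prop:context_regular} covers $\ls{p}{S}$ and $\rs{p}{S}$ separately but not the sum $\ls{p}{S}+\rs{p}{T}$, where disjointness of the two supports and cross-coherence of the summands are what make the combined term uniform and regular. The essential combinatorial content is the same in both proofs; it simply lives inside Proposition~\ref{prop:exp_regular} in your version and inline in the paper's.
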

\begin{proof}
  By induction on the structure of $M$:
  \begin{varitemize}
  \item
    If $M$ is a variable $x$, then
    $$
    \rtay{x}=\sum_{s \in \rTsupp x} \frac{1}{\mcoef s} s=\frac{1}{\mcoef x}x=x.
    $$
  \item
    If $M$ is an abstraction $\lambda x.N$, then:
    $$
      \rtay{(\lambda x.N)}=\sum_{s \in \rTsupp{N}} \frac{1}{\mcoef{\lambda x.s}}(\lambda x.s)
         =\lambda x.\left(\sum_{s \in \rTsupp{N}} \frac{1}{\mcoef{s}}s\right)=\lambda x.\rtay{N}\\
    $$
  \item
    If $M$ is an application $NL$, then we can first of
    all give the following lemma. For every $\bag t\in\Mfp{X}{n}$, it holds that
    $$
    |\{(t_1,\ldots,t_n)\sep[t_1,\ldots,t_n]=\bag t\}|=\frac{n!}{\prod_u\bag{t}(u)!}
    $$
    As a consequence,
    \begin{align*}
      \rtay{(NL)}&=\sum_{s \in \rTsupp{NL}} \frac{1}{\mcoef s} s = \sum_{s\in\rTsupp{N}} \sum_{\bag t\in\Mf{\rTsupp{L}}}\frac{1}{\mcoef{\linapp{s}{\bag t}}} \linapp{s}{\bag t}\\
         &=\sum_{s\in\rTsupp{N}}\sum_{n\in\Nat} \sum_{\bag t\in\Mfp{\rTsupp{L}}{n}}\frac{1}{\mcoef{\linapp{s}{\bag t}}} \linapp{s}{\bag t}\\
         &=\sum_{s\in\rTsupp{N}}\sum_{n\in\Nat} \sum_{\bag t\in\Mfp{\rTsupp{L}}{n}}\frac{1}{\mcoef{s}\mcoef{\bag t}} \linapp{s}{\bag t}\\
         &=\sum_{s\in\rTsupp{N}}\sum_{n\in\Nat} \sum_{t_1,\ldots,t_n\in\rTsupp{L}}\frac{\prod_u[t_1,\ldots,t_n](u)}{\mcoef{s}\cdot\mcoef{[t_1,\ldots,t_n]}\cdot n!} \linapp{s}{[t_1,\ldots,t_n]}\\
         &=\sum_{s\in\rTsupp{N}}\sum_{n\in\Nat} \sum_{t_1,\ldots,t_n\in\rTsupp{L}}\frac{1}{\mcoef{s}\cdot(\prod\mcoef{t_i})\cdot n!} \linapp{s}{[t_1,\ldots,t_n]}\\
         &=\sum_{n\in\Nat}\frac{1}{n!}\sum_{s\in\rTsupp{N}} \sum_{t_1,\ldots,t_n\in\rTsupp{L}}\frac{1}{\mcoef{s}\cdot(\prod\mcoef{t_i})} \linapp{s}{[t_1,\ldots,t_n]}\\
         &=\sum_{n\in\Nat}\frac{1}{n!}\linapp{\sum_{s\in\rTsupp{N}}\frac{1}{\mcoef{s}}s}{\left(\sum_{t\in\rTsupp{L}}\frac{1}{\mcoef{t}}t\right)^n} \\
    \end{align*}
  \item
    If $M$ is a sum $N +_p L$, then
    \begin{align*}
      \rtay{(N +_p L)}&=\sum_{s \in \rTsupp{N +_p L}} \frac{1}{\mcoef s} s =\\
      &=\sum_{s\in\rTsupp{N}} \frac{1}{\mcoef s}(\ls{p}{s})+\sum_{s\in\rTsupp{L}} \frac{1}{\mcoef s}(\rs{p}{s})\\
      &=\left(\ls{p}{\left(\sum_{s\in\rTsupp{N}} \frac{1}{\mcoef s}s\right)}\right)+\left(\rs{p}{\left(\sum_{s\in\rTsupp{L}} \frac{1}{\mcoef s}s\right)}\right)\\
      &=(\ls{p}{\rtay N}) + (\rs{p}{\rtay L})
    \end{align*}
  \end{varitemize}
\end{proof}

The results from the previous section immediately imply that Taylor
expansions are regular resource terms and that they are normalisable.

\begin{proposition}\label{prop:rtay_regular}
  For all $M \in \Lambda^+$, the explicit Taylor expansion $\rtay M$ is uniform and regular.
\end{proposition}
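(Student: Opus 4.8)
The plan is to argue by induction on the structure of $M$, proving simultaneously that $\rtay M$ is uniform and that it is regular. The genuine content is uniformity: once it is known that $\rtay M$ is uniform, regularity is immediate from the explicit formula $\rtay M = \sum_{s \in \rTsupp M} \frac{1}{\mcoef s} s$ established just above, since that formula says precisely that $\supp{\rtay M} = \rTsupp M$ and that the coefficient of each $s$ in the support is $\frac{1}{\mcoef s}$, which is exactly the definition of a regular (uniform) term. Alternatively, regularity can be threaded through the induction directly, using Proposition~\ref{prop:context_regular} in the variable, abstraction and sum cases and, together with Proposition~\ref{prop:exp_regular} applied to the exponential $!\rtay N$, in the application case $\rtay{(M\ N)} = \linapp{\rtay M}{!\rtay N}$; the sum case additionally needs the elementary observation that a sum of two regular terms with coherent and disjoint supports is again regular.

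For uniformity I would check each clause of the coherence relation against the inductive description of $\rTsupp M$. The case $M = x$ is handled by the coherence axiom for variables. If $M = \lambda x.N$, any two elements of $\rTsupp M$ are $\lambda x.s$ and $\lambda x.s'$ with $s, s' \in \rTsupp N = \supp{\rtay N}$; the induction hypothesis gives $\rtay N$ uniform, hence $s$ coherent with $s'$, hence $\lambda x.s$ coherent with $\lambda x.s'$ by the abstraction rule. If $M = N\ L$, any two elements are $\linapp{s}{\bag t}$ and $\linapp{s'}{\bag{t'}}$ with $s, s' \in \rTsupp N$ and $\bag t, \bag{t'} \in \Mf{\rTsupp L}$; the heads are coherent by the induction hypothesis on $\rtay N$, and since every element of $\bag t$ and of $\bag{t'}$ lies in $\rTsupp L = \supp{\rtay L}$ with $\rtay L$ uniform, all of them are pairwise coherent, so $\bag t$ is coherent with $\bag{t'}$ by the multiset rule, and the application rule concludes.

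The only slightly delicate case is $M = N \oplus_p L$: here $\rTsupp M$ splits into the left-tagged terms $\ls{p}{s}$ with $s \in \rTsupp N$ and the right-tagged terms $\rs{p}{t}$ with $t \in \rTsupp L$, two families with disjoint supports. Two left-tagged terms $\ls{p}{s}$ and $\ls{p}{s'}$ are coherent because $s$ and $s'$ are (induction hypothesis on $\rtay N$), and symmetrically for two right-tagged terms via $\rtay L$. For a mixed pair $\ls{p}{s}$ and $\rs{p}{t}$, the applicable coherence rule carries the side conditions that $s$ and $t$ each be self-coherent; but $s \in \supp{\rtay N}$ with $\rtay N$ uniform, so $s$ is self-coherent (every element of the support of a uniform term is itself uniform), and likewise for $t$, which discharges the side conditions. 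This proves $\rtay M$ uniform, and regularity then follows as above.

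I do not expect a genuine obstacle: the whole argument is a routine structural induction, the single point needing attention being the discharge of the self-coherence premises in the mixed branch of the $\oplus_p$ case — which is exactly why the coherence rule relating left- and right-injections was formulated with those premises in the first place.
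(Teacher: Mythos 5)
Your proof is correct, but your primary route is genuinely different from the paper's. The paper dispatches the proposition in one line, as a direct consequence of Proposition~\ref{prop:context_regular} and Proposition~\ref{prop:exp_regular}: regularity (which by definition subsumes uniformity) is preserved by variables, abstraction, the two choice injections and linear application, and by the exponential, so it propagates compositionally through the inductive clauses defining $\rtay{M}$ --- this is exactly what you sketch as your ``alternative'' route, including the small step the paper leaves implicit, namely that in the $\oplus_p$ case the sum of the two tagged regular terms, which are coherent and have disjoint supports, is again regular. Your main argument instead decomposes the statement differently: you prove uniformity by a direct structural induction over the coherence rules, and then obtain regularity for free from the coefficient formula $\rtay M = \sum_{s \in \rTsupp M} \frac{1}{\mcoef s}\, s$ established in the proposition immediately preceding this one, which already pins every coefficient on the support to the inverse multinomial coefficient. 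Both arguments are sound; your handling of the only delicate point --- discharging the self-coherence premises of the rule relating $\ls{p}{s}$ and $\rs{p}{t}$ via uniformity of $\rtay N$ and $\rtay L$ --- is exactly right. What your decomposition buys is that all the real work is an elementary coherence check, with regularity riding on the support formula; what the paper's route buys is that the regularity-preservation lemmas stand on their own and are reused elsewhere, so the Taylor-expansion case needs no separate induction at all.
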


\begin{proof}
This is a direct consequence of Proposition~\ref{prop:context_regular} and Proposition~\ref{prop:exp_regular}.
\end{proof}

\begin{corollary}\label{cor:nf_rtay}
  Every explicit Taylor expansion $\rtay M$ has a normal form $\nf{\rtay M}$, which we call the \Def{explicit Taylor normal form} of $M$, and which is regular.
\end{corollary}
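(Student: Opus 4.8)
The plan is to harvest the results already assembled in Section~\ref{sec:rigid}; this corollary is essentially a bookkeeping statement. First I would recall that, by Proposition~\ref{prop:rtay_regular}, the explicit Taylor expansion $\rtay M$ is both uniform and regular as an element of $\prt \subseteq \pgrt$. Note that $\rtay M$ is genuinely an infinite term in general, because the application clause introduces the exponential $!\,\rtay N = \sum_{n} \frac{1}{n!}[(\rtay N)^n]$; so it is the infinite-term machinery that is relevant here, not the finite-term Propositions~\ref{prop:finite_normalisation} and~\ref{prop:nf_Lred}.

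Uniformity is exactly the hypothesis that makes the pointwise normal form well defined. By the Corollary following Proposition~\ref{prop:Lred_uniform_distinct} (the one asserting that for uniform $S$ the sum $\sum_{\sigma \in \pgsrt} S_\sigma \nf\sigma$ lies in $\pgrt$), together with Corollary~\ref{prop:nf_uniform_distinct} which gives that coherent distinct simple (poly-)terms have disjoint normal-form supports, the coefficient of each $\tau$ in $\nf{\rtay M}$ collects contributions from at most one $\sigma \in \supp{\rtay M}$ and is therefore finite. Hence $\nf{\rtay M}$ is a bona fide resource term, and this establishes the existence part of the statement; by definition it deserves the name \emph{explicit Taylor normal form} of $M$.

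Finally, regularity of $\nf{\rtay M}$ follows immediately from Theorem~\ref{thm:nf_regular}, which says precisely that the normal form of a regular term in $\pgrt$ is again regular; since $\rtay M$ is regular by Proposition~\ref{prop:rtay_regular}, so is $\nf{\rtay M}$. Assembling these two observations yields the corollary.

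I do not expect any real obstacle here: all the substantive work — Theorem~\ref{thm:substitution_multinomial} on the behaviour of multinomial coefficients under substitution, the preservation of coherence and of support-disjointness under complete left reduction, and the passage from finite to infinite uniform terms — has already been carried out. The only point requiring a little care is to be explicit that "regular" in the sense of the definition already presupposes "uniform", so that the two invocations (of the uniform-term corollary and of Theorem~\ref{thm:nf_regular}) are mutually consistent, and that it is the infinite-term statements, rather than their finite analogues, that apply to $\rtay M$.
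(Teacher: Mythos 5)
Your proposal is correct and follows essentially the same route as the paper: regularity (hence uniformity) of $\rtay M$ from Proposition~\ref{prop:rtay_regular}, well-definedness of the pointwise normal form for infinite uniform terms, and then Theorem~\ref{thm:nf_regular} to conclude that $\nf{\rtay M}$ is regular. You simply spell out the infinite-term well-definedness step that the paper leaves implicit in its one-line appeal to Theorem~\ref{thm:nf_regular}, which is a harmless (indeed helpful) elaboration rather than a different argument.
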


\begin{proof}
This is given by Theorem~\ref{thm:nf_regular}.
\end{proof}

\subsection{Probabilistic Reduction}
In the literature, the probabilistic $\lambda$-calculus is usually
endowed with a labelled transition relation $\xrightarrow p$ describing a
probabilistic reduction process, where a choice $M \oplus_p N$ reduces to
$M$ with probability $p$ and to $N$ with probability $1-p$. Another kind
of operational semantics, more common for other quantitative calculi such
as the algebraic $\lambda$-calculus, is to have a non-labelled
reduction where choices simply commute with some contexts, as we did in our
probabilistic resource calculus. In this paper we use both kinds of
semantics. On one hand a deterministic operational semantics will simplify
the comparison between the operational semantics of $\lambda$-terms and
that of their Taylor expansion, but on the other hand explicit Taylor
expansion precisely splits choices into two different branches, just like
labelled transition systems do.

\begin{definition}
  \Def{Head contexts} are contexts of the form $\lambda\vec x.[\
  ]\ \vec P$, and are indicated with the metavariable $H$.
  \Def{Head normal forms} are terms of the form $H[y]$. We write
  $\hnf$ for the set of all head normal forms.  We now define a formal
  system deriving judgements in the form $\reduc{\rho}{M}{h}$ where
  $M \in \Lambda^+$, $h \in \hnf$ and $\rho$ is a finite sequence of
  elements in $\{\llabel,\rlabel\} \times
  [0,1]$:

  {\footnotesize
  \[
  \AxiomC{$\phantom{M \rightarrow h}$} \UnaryInfC{$\reduc{\epsilon}{h}{h}$} \DisplayProof \quad
  \AxiomC{$\reduc{\rho}{H[M\subst{N}{x}]}{h}$} \UnaryInfC{$\reduc{\rho}{H[(\lambda x.M)N]}{h}$} \DisplayProof \quad
  \AxiomC{$\reduc{\rho}{H[M]}{h}$} \UnaryInfC{$\reduc{(\llabel,p)\cdot\rho}{H[M \oplus_p N]}{h}$} \DisplayProof \quad
  \AxiomC{$\reduc{\rho}{H[N]}{h}$} \UnaryInfC{$\reduc{(\rlabel,p)\cdot\rho}{H[M \oplus_p N]}{h}$} \DisplayProof
  \]}
  where $\epsilon$ is the empty sequence
  and $(\ell,p)\cdot(\rho_1,\dots,\rho_n) = ((\ell,p),\rho_1,\dots,\rho_n)$ for $\ell \in \{\llabel,\rlabel\}$.
\end{definition}

\begin{proposition}\label{prop:reduc_unique}
  For all $M \in \Lambda^+$ and $\rho$ there is at most one $h \in \hnf$ such that $\reduc{\rho}{M}{h}$.
\end{proposition}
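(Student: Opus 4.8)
The plan is to argue by induction on the length of the sequence $\rho$, exploiting the fact that for each possible shape of $M$ at most one derivation rule can be the last one applied, and that rule is forced by the head of $\rho$ (together with the requirement that $M$ be a head normal form when $\rho = \epsilon$). First I would observe that every $M \in \Lambda^+$ can be written uniquely as $H[R]$ where $H$ is a head context and $R$ is either a variable, a $\beta$-redex $(\lambda x.P)Q$, or a choice $P \oplus_p Q$ at the head; this is the standard head-position analysis and is a routine syntactic fact. I would phrase this as a preliminary remark so that the induction proper is clean.

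The base case is $\rho = \epsilon$: the only rule with an empty label sequence in the conclusion is the axiom $\reduc{\epsilon}{h}{h}$, whose conclusion requires the subject to already be a head normal form, i.e. of the form $H[y]$; in that case $R = y$ and the rule forces $h = M$, so $h$ is unique. (If $M$ is not a head normal form, no derivation with $\rho = \epsilon$ exists at all, so the statement is vacuously true.) For the inductive step, suppose $\rho$ is nonempty. If $R$ is a $\beta$-redex, the only applicable rule is the $\beta$-rule, which does not change $\rho$ and replaces $M$ by $H[P\subst{Q}{x}]$; by the induction hypothesis — applied with the \emph{same} $\rho$ but on a term whose derivations are strictly shorter — there is at most one $h$. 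If instead $R = P \oplus_p Q$ at the head, then the first element of $\rho$ is either $(\llabel,p)$ or $(\rlabel,p)$, and this choice selects exactly one of the two choice rules; the remaining sequence $\rho'$ (with $\rho = (\ell,q)\cdot\rho'$) is then the label sequence for a derivation of either $H[P]$ or $H[Q]$, and by the induction hypothesis that subderivation determines $h$ uniquely. If $R$ is a variable, then $M$ is a head normal form and no rule with nonempty $\rho$ applies, so again the statement holds vacuously.

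The one point that needs a little care — and the closest thing to an obstacle — is making the induction well-founded: the $\beta$-rule leaves $\rho$ unchanged, so I cannot induct on $\lvert\rho\rvert$ alone. The clean fix is to induct on the height (or size) of the derivation tree, or equivalently to note that each application of the $\beta$-rule strictly shortens the derivation while the choice rules strictly shorten $\rho$; a lexicographic measure $(\lvert\rho\rvert, \text{number of $\beta$-steps before the next choice})$ also works, but reasoning on derivation height is the least fussy. With that ordering in place the argument is entirely mechanical, since at each stage the shape of the head redex together with the head of $\rho$ pins down the last rule, and hence — by the induction hypothesis on the premise — the conclusion $h$.
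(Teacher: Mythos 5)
Your overall strategy—unique head decomposition of $M$ plus induction on the height of one of the two derivations—is sound, and it is essentially the only reasonable argument (the paper states this proposition without proof, treating it as routine). However, your base case contains a false claim: it is not true that the axiom is the only rule whose conclusion carries the empty label sequence, nor that a term which is not a head normal form admits no derivation with $\rho=\epsilon$. The $\beta$-rule leaves the label sequence untouched, so for instance $\reduc{\epsilon}{(\lambda x.x)y}{y}$ is derivable from the axiom $\reduc{\epsilon}{y}{y}$, even though $(\lambda x.x)y$ is not a head normal form. As written, your induction therefore never treats the configuration ``head $\beta$-redex with $\rho=\epsilon$'': the base case wrongly dismisses it as vacuous, and your inductive step explicitly assumes $\rho$ nonempty.

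The repair is the one you already half-adopt at the end: once you pass to induction on derivation height, drop the split into ``$\rho=\epsilon$'' versus ``$\rho$ nonempty'' entirely and do the case analysis on the unique head decomposition of $M$. If the head is a variable, then $M\in\hnf$, only the axiom can apply, which forces $\rho=\epsilon$ and $h=M$. If the head is a $\beta$-redex, only the $\beta$-rule can apply, \emph{for any} $\rho$ (empty or not), and you conclude by the induction hypothesis applied to the premise, which concerns the same $\rho$ and the uniquely determined term $H[P\subst{Q}{x}]$. If the head is a choice, the axiom is excluded, $\rho$ must be nonempty, and its first element selects exactly one of the two choice rules (and must match the probability label $p$), after which the induction hypothesis applies to the tail of $\rho$. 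With this organisation the argument goes through; note that the uniqueness of the decomposition $M=H[R]$ is exactly what guarantees that two derivations of $\reduc{\rho}{M}{h_1}$ and $\reduc{\rho}{M}{h_2}$ end with the same rule instance, so that their premises can be compared by the induction hypothesis.
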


\begin{definition}
  For all $M \in \Lambda^+$ we define the \Def{complete left reduct} of $M$ by:
  \begin{align*}
    \Lred{M +_p N} &= \Lred M +_p \Lred N\\
    \Lred{\lambda \vec x.y\,P_1\,\dots\,P_m} &= \lambda \vec x.y\,\Lred{P_1}\,\dots\,\Lred{P_m}\\
    \Lred{\lambda \vec x.(\lambda y.M)\,N\,P_1\,\dots\,P_m} &= \lambda \vec x.M\subst{N}{y}\,P_1\,\dots\,P_m\\
    \Lred{\lambda \vec x.(M +_p N)\,P_1\,\dots\,P_m} &= (\lambda \vec x.M\,P_1\,\dots\,P_m) +_p (\lambda \vec x.N\,P_1\,\dots\,P_m)\\
  \end{align*}
\end{definition}

\begin{proposition}\label{prop:nf_Lred_commute}
  For all $M \in \Lambda^+$, $\Lred{\rtay M} = \rtay{\Lred M}$.
\end{proposition}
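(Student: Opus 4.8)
The plan is to prove $\Lred{\rtay M} = \rtay{\Lred M}$ by structural induction on $M$, following exactly the four clauses in the definition of the complete left reduct of $M$. The key tool on the resource side is that $\rtay M$ is uniform and regular (Proposition~\ref{prop:rtay_regular}), so $\Lred{\rtay M}$ is well defined by the corollary on infinite uniform terms, and moreover it suffices to match supports and trust regularity to fix coefficients --- or, more carefully, to invoke the substitution coefficient theorem (Theorem~\ref{thm:substitution_multinomial}) in the $\beta$ case. The induction naturally splits along whether the head of $M$ is a variable, a redex, or a choice, mirroring the definition of $\Lred$ on $\Lambda^+$.

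First I would handle the choice case $M = N +_p L$. Here $\rtay M = (\ls{p}{\rtay N}) + (\rs{p}{\rtay L})$, and by the first two clauses in the definition of $\Lred$ on resource (poly-)terms, $\Lred{\rtay M} = (\ls{p}{\Lred{\rtay N}}) + (\rs{p}{\Lred{\rtay L}})$; this equals $(\ls{p}{\rtay{\Lred N}}) + (\rs{p}{\rtay{\Lred L}})$ by the induction hypothesis, which is exactly $\rtay{(\Lred N +_p \Lred L)} = \rtay{\Lred M}$. Next, the head-variable case $M = \lambda \vec x.y\,P_1\,\dots\,P_m$: here $\rtay M = \lambda\vec x.\linapp{y}{!\rtay{P_1}\,\dots\,!\rtay{P_m}}$, and the third clause of $\Lred$ on resource terms pushes the complete left reduct into each poly-term argument, $\Lred{\rtay M} = \lambda\vec x.\linapp{y}{\Lred{!\rtay{P_1}}\,\dots\,\Lred{!\rtay{P_m}}}$. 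The remaining point is the commutation $\Lred{!S} = !\Lred S$ for regular $S$, which follows because $\Lred{[S^n]} = [(\Lred S)^n]$ coefficientwise (the clause for poly-terms applies $\Lred$ pointwise, and by Proposition~\ref{prop:Lred_uniform_distinct} the copies of $S$, being equal, stay ``synchronised''), together with the $n!$ bookkeeping of the exponential; combined with the induction hypothesis this gives the result.

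The main work, and the step I expect to be the real obstacle, is the redex case $M = \lambda\vec x.(\lambda y.N)\,P_0\,P_1\,\dots\,P_m$. On the $\lambda$-term side $\Lred M = \lambda\vec x.N\subst{P_0}{y}\,P_1\,\dots\,P_m$, so I must show $\rtay{(\lambda\vec x.N\subst{P_0}{y}\,P_1\dots P_m)}$ equals $\Lred$ of $\lambda\vec x.\linapp{(\lambda y.\rtay N)}{!\rtay{P_0}\;!\rtay{P_1}\dots!\rtay{P_m}}$. The fourth clause of $\Lred$ on resource terms reduces exactly one element of the first bag argument: since $!\rtay{P_0} = \sum_n \frac1{n!}[(\rtay{P_0})^n]$ and the reduction rule is $\linapp{\lambda y.s}{\bag t\,\dots} \to \linapp{\dsubst{s}{t}{y}}{\dots}$, I get $\lambda\vec x.\linapp{\dsubst{\rtay N}{\rtay{P_0}}{y}}{!\rtay{P_0}\;!\rtay{P_1}\dots!\rtay{P_m}}$ after recombining the $\frac1{n!}$ coefficients (one copy of $\rtay{P_0}$ is consumed, $n-1$ remain, and $\sum_{n\geq 1}\frac1{n!}[(\rtay{P_0})^{n-1}]\cdot n = !\rtay{P_0}$ in the appropriate sense). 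So the crux is the identity $\dsubst{\rtay N}{\rtay{P_0}}{y} \cdot (!\rtay{P_0}) = \rtay{N\subst{P_0}{y}}$, i.e. that resource substitution of a single exponential-wrapped argument reproduces the Taylor expansion of the ordinary substitution. This is the probabilistic analogue of Ehrhard--Regnier's substitution lemma for Taylor expansion; I would prove it by induction on $N$, using multiset combinatorics to distribute the exponential $!\rtay{P_0}$ over the (possibly many) free occurrences of $y$ in $N$, and using Theorem~\ref{thm:substitution_multinomial} to check that the multinomial coefficients on both sides agree so that regularity is preserved. The bookkeeping of coefficients --- reconciling the $\frac1{n!}$ from the exponential, the $\mcoef{\bag t}$ factors from substitution, and the multinomials $\mcoef{s}$ on the target side --- is where the argument is delicate, but all the needed coefficient identities are already packaged in Theorem~\ref{thm:substitution_multinomial} and Proposition~\ref{prop:exp_regular}, so the proof is a (careful) assembly of those.
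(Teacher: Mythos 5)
Your overall strategy --- induction on $M$ following the three shapes in the definition of $\Lred{M}$, with the choice and head-variable cases handled by pushing $\Lred{}$ through the constructors, and with support-matching plus regularity (Proposition~\ref{prop:rtay_regular}) used to avoid coefficient chasing --- is exactly the ``simple induction on $M$'' the paper has in mind, and those two cases of yours are fine. The problem is the redex case, which is the only case with real content, and there your reading of the fourth clause of $\Lred{}$ on resource terms is wrong. The clause $\Lred{\lambda \vec x.\linapp{\lambda y.s}{\bag t\,\bag u_1\,\dots\,\bag u_m}} = \lambda \vec x.\linapp{\dsubst{s}{\bag t}{y}}{\bag u_1\,\dots\,\bag u_m}$ consumes the \emph{entire} first bag: $\dsubst{s}{\bag t}{y}$ is the multiset substitution, which is $0$ unless the size of $\bag t$ equals the number of free occurrences of $y$ in $s$, and $\bag t$ does not survive among the arguments. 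It is not a linear-head step that uses up one resource and leaves the rest of the bag in place. Consequently, applying $\Lred{}$ to $\rtay M = \lambda\vec x.\linapp{\lambda y.\rtay N}{!\rtay{P_0}\;!\rtay{P_1}\dots!\rtay{P_m}}$ yields $\lambda\vec x.\linapp{\sum_{n}\frac{1}{n!}\dsubst{\rtay N}{[(\rtay{P_0})^n]}{y}}{!\rtay{P_1}\dots!\rtay{P_m}}$, with no occurrence of $!\rtay{P_0}$ left over, and the identity you actually need is $\sum_{n}\frac{1}{n!}\dsubst{\rtay N}{[(\rtay{P_0})^n]}{y} = \rtay{N\subst{P_0}{y}}$, the probabilistic analogue of Ehrhard--Regnier's substitution lemma for the Taylor expansion (its support half is essentially Lemma~\ref{lem:Tsupp_subst}, and the coefficients then come for free from regularity via Theorem~\ref{thm:substitution_multinomial} and Proposition~\ref{prop:exp_regular}).

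By contrast, the expression you derive, $\lambda\vec x.\linapp{\dsubst{\rtay N}{\rtay{P_0}}{y}}{!\rtay{P_0}\;!\rtay{P_1}\dots!\rtay{P_m}}$, does not match the definition of $\Lred{}$, and your ``crux identity'' $\dsubst{\rtay N}{\rtay{P_0}}{y}\cdot(!\rtay{P_0}) = \rtay{N\subst{P_0}{y}}$ is not even well-formed: $\dsubst{\rtay N}{\rtay{P_0}}{y}$ would be substitution of a singleton bag (zero unless $y$ occurs exactly once free in the support elements), and there is no product of a term with a poly-term in the calculus that could repair the mismatch with the target $\rtay{\Lred M} = \lambda\vec x.\linapp{\rtay{N\subst{P_0}{y}}}{!\rtay{P_1}\dots!\rtay{P_m}}$, which has no $!\rtay{P_0}$ argument at all. (The bookkeeping $\sum_{n\geq 1}\frac{n}{n!}[(\rtay{P_0})^{n-1}] = !\rtay{P_0}$ you invoke is the right arithmetic for a ``consume one copy'' semantics, but that is not the semantics defined here; the missing bar over $t$ in the paper's typesetting of the clause may have misled you, yet the disappearance of the first bag from the argument list is unambiguous.) So the skeleton of your proof is correct, but the central case must be redone with the whole-bag substitution and the exponential substitution lemma stated above.
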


\begin{proof}
By a simple induction on $M$.
\end{proof}

\begin{proposition}
  If $\reduc{\rho}{M}{h}$ then either $\reduc{\rho}{\Lred M}{h}$ or $\reduc{\rho}{\Lred M}{\Lred h}$. Conversely if $\reduc{\rho}{\Lred M}{h}$ then there is $h' \in \hnf$ such that $\reduc{\rho}{M}{h'}$ and either $h = h'$ or $h = \Lred{h'}$.
\end{proposition}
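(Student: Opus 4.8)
The plan is to prove each of the two implications by structural induction on $M$, using at every step the following elementary case distinction, which also underlies Proposition~\ref{prop:reduc_unique}: each $M\in\Lambda^+$ has exactly one of the three shapes $\lambda\vec x.y\,P_1\dots P_m$ (a head normal form), $\lambda\vec x.(\lambda y.N)\,P\,P_1\dots P_m$ (a head redex in a head context), or $\lambda\vec x.(N\oplus_p N')\,P_1\dots P_m$ (a head choice in a head context), and these are exactly the patterns matched by the base rule, the $\beta$-rule, and the two choice rules respectively. Consequently, given any derivation of $\reduc{\rho}{M}{h}$, its last rule is determined by which shape $M$ has, and I will freely read off the premise of that rule. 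The three shapes are also exactly the clauses in the definition of $\Lred{\cdot}$ on $\Lambda^+$, and I will use the immediate observation that $\Lred{\cdot}$ sends a head normal form to a head normal form ($\Lred{(\lambda\vec x.y\,P_1\dots P_m)}=\lambda\vec x.y\,\Lred{P_1}\dots\Lred{P_m}$).

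For the forward implication, take a derivation of $\reduc{\rho}{M}{h}$. If $M\in\hnf$ then $\rho=\epsilon$ and $h=M$, while $\Lred M\in\hnf$, so $\reduc{\epsilon}{\Lred M}{\Lred h}$, which is the second disjunct. If $M$ exposes a head redex then the last rule is the $\beta$-rule and its premise is literally $\reduc{\rho}{\Lred M}{h}$, since $\Lred M$ is the contractum of that head redex sitting in the same head context; this is the first disjunct. If $M=\lambda\vec x.(N\oplus_p N')\,\vec P$ then the last rule is a choice rule, say $\rho=(\llabel,p)\cdot\rho'$ with premise $\reduc{\rho'}{H[N]}{h}$ for $H=\lambda\vec x.[\ ]\,\vec P$, and I split on whether $H$ is trivial. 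When $M$ is literally the top-level choice $N\oplus_p N'$ we have $\Lred M=\Lred N\oplus_p\Lred{N'}$, and applying the induction hypothesis to the subderivation $\reduc{\rho'}{N}{h}$ and then re-prefixing the left-choice rule yields $\reduc{\rho}{\Lred M}{h}$ or $\reduc{\rho}{\Lred M}{\Lred h}$; when $H$ is nontrivial, $\Lred M=H[N]\oplus_p H[N']$ is a top-level choice whose left branch $H[N]$ is exactly the subject of the premise, so a single left-choice step (now with a trivial head context) gives $\reduc{\rho}{\Lred M}{h}$. The $\rlabel$ cases are symmetric.

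For the converse, take a derivation of $\reduc{\rho}{\Lred M}{h}$ and again case on $M$. If $M\in\hnf$ then $\Lred M\in\hnf$ forces $\rho=\epsilon$ and $h=\Lred M$, so $h'=M$ works with $\reduc{\epsilon}{M}{M}$ and $h=\Lred{h'}$. If $M$ exposes a head redex then $\Lred M$ is its contractum in context, and one application of the $\beta$-rule to $\reduc{\rho}{\Lred M}{h}$ produces $\reduc{\rho}{M}{h}$, so $h'=h$. If $M=\lambda\vec x.(N\oplus_p N')\,\vec P$ with $H=\lambda\vec x.[\ ]\,\vec P$ nontrivial, then $\Lred M=H[N]\oplus_p H[N']$ and the last rule of the given derivation is a choice rule with premise, say, $\reduc{\rho'}{H[N]}{h}$; applying the left-choice rule to this premise with the head context $H$ recovers $\reduc{\rho}{M}{h}$, so $h'=h$. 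Finally, if $M=N\oplus_p N'$ is a top-level choice then $\Lred M=\Lred N\oplus_p\Lred{N'}$, the last rule is a choice rule with premise, say, $\reduc{\rho'}{\Lred N}{h}$, and the induction hypothesis for $N$ yields $h'$ with $\reduc{\rho'}{N}{h'}$ and $h\in\{h',\Lred{h'}\}$; one left-choice step then gives $\reduc{\rho}{M}{h'}$. The $\rlabel$ cases are symmetric.

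I do not anticipate a genuine obstacle: once the three-shape dichotomy is isolated, every case closes immediately. The one spot requiring care is the head-choice case, where the two relevant clauses of $\Lred{\cdot}$ behave differently --- $\Lred{(N\oplus_p N')}=\Lred N\oplus_p\Lred{N'}$ actually reduces inside both branches, which is why the induction hypothesis, and with it the possible passage from $h$ to $\Lred h$, is needed there, whereas $\Lred{(\lambda\vec x.(N\oplus_p N')\,\vec P)}$ merely distributes the head context over the choice and advances nothing on the chosen branch. Tracking which branch has already been moved one step is precisely what generates the ``$h$ versus $\Lred h$'' alternative in the statement, whose only ultimate source is the head-normal-form case.
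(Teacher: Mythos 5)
Your proof is correct and follows essentially the same route as the paper's: induction on $M$ with a case split on the head shape, where the head-$\beta$ and head-choice-under-a-nontrivial-context cases are handled by observing that $M$ and $\Lred M$ admit the same derivations, the induction hypothesis is invoked only for a top-level choice, and the $h$ versus $\Lred h$ alternative arises solely from the head-normal-form case. You merely spell out explicitly what the paper compresses into ``the same goes for head choices.''
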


\begin{proof}
By induction on $M$.
\begin{itemize}
  \item For $M +_p N$ for both results the sequence of choices cannot be empty. Let us assume wlog we reduce to the left-hand side. If $\reduc{(\llabel,p)\cdot\rho}{M +_p N}{h}$ then $\reduc{\rho}{M}{h}$ and by induction hypothesis either $\reduc{\rho}{\Lred M}{h}$ or $\reduc{\rho}{\Lred M}{\Lred h}$, hence either $\reduc{(\llabel,p)\cdot\rho}{\Lred M +_p \Lred N}{h}$ or $\reduc{(\llabel,p)\cdot\rho}{\Lred M +_p \Lred N}{\Lred h}$. Similarly if $\reduc{(\llabel,p)\cdot\rho}{\Lred M +_p \Lred N}{h}$ we conclude by induction hypothesis.
  \item For head normal forms if $\reduc{\epsilon}{h}{h}$ then $\reduc{\epsilon}{\Lred h}{\Lred h}$, and conversely if $\reduc{\epsilon}{\Lred h}{\Lred h}$ then $\reduc{\epsilon}{h}{h}$.
  \item If there is a head $\beta$-redex then $\lambda \vec x.(\lambda y.M)\,N\,P_1\,\dots\,P_m$ and $\lambda \vec x.M\subst{N}{y}\,P_1\,\dots\,P_m$ have the same reductions. The same goes for head choices.
\end{itemize}
\end{proof}

An interesting property of explicit Taylor expansion is that the explicit Taylor normal form of a term $M$ is precisely given by the explicit Taylor normal forms of the head normal forms $h$ of $M$, as well as the sequences of choices $\rho$ such that $\reduc{\rho}{M}{h}$.

\begin{definition}
  Given a sequence of choices $\rho$ and $s \in \psrt$ we define $\preflist{\rho}{s} \in \psrt$ by induction on the length of $\rho$ by:
  \[\preflist{\epsilon}{s} = s  \hskip 40pt \preflist{((\llabel,p)\cdot\rho)}{s} = \ls{p}{(\preflist{\rho}{s})} \hskip 40pt \preflist{((\rlabel,p)\cdot\rho)}{s} = \rs{p}{(\preflist{\rho}{s})}\]
  We extend this definition to $\prt$ by linearity.
\end{definition}

\begin{theorem}\label{thm:rtay_reduc}
  Given any $M \in \Lambda^+$,
  \[\nf{\rtay M} = \sum_{h \in \hnf} \sum_{\reduc{\rho}{M}{h}} \preflist{\rho}{\nf{\rtay h}}.\]
\end{theorem}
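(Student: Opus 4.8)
The plan is to prove the identity by induction on a suitable well-founded measure, combining the commutation of $\Lred{\cdot}$ with explicit Taylor expansion (Proposition~\ref{prop:nf_Lred_commute}) with the known characterisation of the normal form of a finite uniform term via iterated complete left reduction (Proposition~\ref{prop:nf_Lred} in the infinite case, in the limit form of Proposition~\ref{prop:nf_limit_Lred}). Concretely, I would first observe that $\nf{\rtay M}$ is well defined by Corollary~\ref{cor:nf_rtay}, and that the right-hand side makes sense since, for fixed $M$, the pairs $(h,\rho)$ with $\reduc{\rho}{M}{h}$ give pairwise incompatible prefixes $\rho$, so the terms $\preflist{\rho}{\nf{\rtay h}}$ have pairwise disjoint supports (using that distinct $\rho$'s force a left/right mismatch at some position, and that $\preflist{\rho}{\cdot}$ wraps a term in the corresponding $\oplus_p$-constructors); regularity is preserved because $\preflist{\rho}{\cdot}$ is a composition of the constructors covered by Proposition~\ref{prop:context_regular}.

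The core of the argument is an induction that mirrors the inductive structure of the reduction system $\reduc{\rho}{M}{h}$, i.e. on the derivation establishing head reduction, which by strong normalisation of head reduction on $\Lambda^+$ (or by the measure used to justify $\Lred{\cdot}$) is well-founded. In the base case $M = h \in \hnf$: the only derivable judgement is $\reduc{\epsilon}{h}{h}$, so the right-hand side collapses to $\nf{\rtay h}$, and the identity is trivial. For the inductive step I would split on the shape of $M$ written as a head context filled with a redex. If $M = \lambda\vec x.(\lambda y.N)\,L\,\vec P$, then $\Lred M = \lambda\vec x.N\subst{L}{y}\,\vec P$ and the derivations for $M$ are in bijection with those for $\Lred M$ (same $\rho$, same $h$), so by the induction hypothesis $\nf{\rtay{\Lred M}} = \sum_{h}\sum_{\reduc{\rho}{\Lred M}{h}} \preflist{\rho}{\nf{\rtay h}}$ equals the desired right-hand side; and by Proposition~\ref{prop:nf_Lred_commute}, $\nf{\rtay M} = \nf{\Lred{\rtay M}} = \nf{\rtay{\Lred M}}$, closing the case. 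If $M = \lambda\vec x.(N\oplus_p N')\,\vec P$, then $\Lred M = (\lambda\vec x.N\,\vec P) \oplus_p (\lambda\vec x.N'\,\vec P)$; the derivations for $M$ are exactly a $(\llabel,p)$ (resp. $(\rlabel,p)$) prepended to a derivation for the left (resp. right) component, and one checks $\rtay{(A\oplus_p B)} = \ls{p}{\rtay A} + \rs{p}{\rtay B}$, hence $\nf{\rtay{\Lred M}} = \ls{p}{\nf{\rtay{(\lambda\vec x.N\,\vec P)}}} + \rs{p}{\nf{\rtay{(\lambda\vec x.N'\,\vec P)}}}$ (normal form commutes with the $\oplus_p$-constructors, since $\rsum$ has already pushed them outward — this uses that these constructors commute with $\Lred{\cdot}$ and with normalisation on uniform terms, from the definition of $\Lred{\cdot}$ and Theorem~\ref{thm:nf_regular}); applying the induction hypothesis to each component and $\preflist{((\llabel,p)\cdot\rho)}{s} = \ls{p}{\preflist{\rho}{s}}$ (and symmetrically) reassembles the sum over all $(h,\rho)$ for $M$. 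Finally, if $M = H[y]$ is already a head normal form we are in the base case, and if $M$ has a head $\beta$-redex not of the displayed outermost form we push it to the displayed form (it is the leftmost one), so these two families of redexes are exhaustive.

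The delicate point — and the one I expect to be the main obstacle — is the bookkeeping that lets one pass the $\nf{\cdot}$ operator through the $\oplus_p$-wrappers $\preflist{\rho}{\cdot}$ and split it componentwise, \emph{without} assuming finiteness: $\rtay M$ is an infinite term, so one cannot just invoke Proposition~\ref{prop:nf_Lred}, and must instead argue pointwise on normal simple terms $\tau$, using Proposition~\ref{prop:nf_limit_Lred} together with Corollary~\ref{prop:nf_uniform_distinct} to know that each $\tau \in \supp{\nf{\rtay M}}$ comes from a \emph{unique} coherent source. Concretely, for a fixed normal $\tau$, only finitely many head reductions of $\rtay M$ can contribute to its coefficient, which lets us reduce the infinite statement to the finite one handled above; the coherence/uniformity of $\rtay M$ (Proposition~\ref{prop:rtay_regular}) is what guarantees that this "localisation at $\tau$" is consistent and that no coefficient blows up. Once this localisation lemma is in place, the induction above goes through essentially formally. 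I would also double-check that the head reductions of $\rtay M$ used here are exactly the images under $\rtay{\cdot}$ of the head reductions of $M$ — this is the content of Proposition~\ref{prop:nf_Lred_commute} iterated, so no new argument is needed.
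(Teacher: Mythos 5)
There is a genuine gap in the skeleton of your argument: the global induction you propose is not well-founded. You justify inducting ``on the derivation establishing head reduction'' by appealing to ``strong normalisation of head reduction on $\Lambda^+$'', but head reduction of untyped probabilistic terms is not strongly normalising, and the theorem is stated for \emph{all} $M \in \Lambda^+$, including diverging ones. Your inductive step passes from $M$ to $\Lred M$ with base case $M \in \hnf$; for $M = \Omega$ one has $\Lred \Omega = \Omega$, so the step is circular, and more generally a term may have head reductions of unbounded length and infinitely many contributing pairs $(h,\rho)$, so no measure on $M$ decreases along $\Lred{\cdot}$. Yet the divergent case is precisely where the theorem has real content: it asserts $\nf{\rtay \Omega} = 0$ (the empty sum), and nothing in your scheme establishes this. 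Individual derivations $\reduc{\rho}{M}{h}$ are of course finite objects and one may induct on each of them, but that only yields one inclusion of supports, not the full equation you are trying to carry through the induction.

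The paper avoids this by never inducting on $M$ at all. It first uses the observation you make but do not exploit: both sides are regular (Corollary~\ref{cor:nf_rtay}, Proposition~\ref{prop:context_regular}, plus disjointness of the $\preflist{\rho}{\nf{\rtay h}}$ for distinct $\rho$), so the equality of coefficients reduces to equality of supports, and all the $\oplus_p$-commutation bookkeeping you flag as delicate simply disappears. Then the forward inclusion is proved by induction on each single derivation $\reduc{\rho}{M}{h}$, showing $\preflist{\rho}{s} \in \supp{\Lred[k]{\rtay M}}$ for some finite $k$; and the backward inclusion fixes a normal $\tau \in \supp{\nf{\rtay M}}$, invokes Proposition~\ref{prop:nf_limit_Lred} and Proposition~\ref{prop:nf_Lred_commute} to get $\tau \in \supp{\rtay{\Lred[k]M}}$ for some finite $k$, and concludes by induction on $\tau$ together with the proposition relating head reductions of $M$ and $\Lred M$. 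Your ``localisation at $\tau$'' paragraph contains exactly the right ingredient for this second inclusion, but in your write-up it is a patch bolted onto the unsound global induction rather than the organising principle; restructure the proof around support equality and the two inclusions, and drop the induction on $M$.
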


\begin{proof}
First observe that these resource terms are regular:
Corollary~\ref{cor:nf_rtay} states that $\nf{\rtay M}$ and the
$\nf{\rtay h}$ are regular (so the $\preflist{\rho}{\nf{\rtay h}}$ are
regular too), and if $\reduc{\rho}{M}{h}$ and $\reduc{\rho'}{M}{h'}$
then either $\rho = \rho'$ and by Proposition~\ref{prop:reduc_unique}
$h = h'$, or $\rho \neq \rho'$ and then $\preflist{\rho}{\nf{\rtay
h}}$ and $\preflist{\rho'}{\nf{\rtay{(h')}}}$ are coherent and have
disjoint supports. Thus we only need to prove that these terms have
the same supports.

Now if $\reduc{\rho}{M}{h}$ then we prove by induction on the proof this relation that if $s \in \supp{\nf{\rtay h}}$ then $\preflist{\rho}{s} \in \supp{\nf{\rtay M}}$. More precisely we prove that for some $k \in \Nat$, $\preflist{\rho}{s} \in \supp{\Lred[k]{\rtay M}}$.
\begin{itemize}
  \item If $\reduc{\epsilon}{h}{h}$ the result is immediate.
  \item If $\reduc{\rho}{H[(\lambda x.M)N]}{h}$ and $s \in \supp{\nf{\rtay h}}$ then by induction hypothesis there is $k \in \Nat$ such that $\preflist{\rho}{s} \in \supp{\Lred[k]{\rtay{H[(M\subst{N}{x}]}}}$, ie $\preflist{\rho}{s} \in \supp{\Lred[k+1]{\rtay{H[(\lambda x.M)N]}}}$.
  \item The same goes for head choices.
\end{itemize}

Conversely according to Proposition~\ref{prop:nf_limit_Lred} for all $\tau$ in normal form there is $k \in \Nat$ such that $\nf{\rtay M}_\tau = \Lred[k]{\rtay M}_\tau$, and according to Proposition~\ref{prop:nf_Lred_commute} we have $\Lred[k]{\rtay M} = \rtay{\Lred[k] M}$. Hence if $\tau \in \supp{\nf{\rtay M}}$ we have $\tau \in \supp{\rtay{\Lred[k] M}}$. It is then easy to prove by induction on $\tau$ that there are $\rho$ and $h$ such that $\reduc{\rho}{\Lred[k] M}{h}$ and $\tau \in \preflist{\rho}{\supp{\rtay h}}$. Then according to the previous proposition there is $h'$ such that $\reduc{\rho}{M}{h'}$ and $h = \Lred[k']{h'}$ (with $k' \leq k$), hence $\tau \in \preflist{\rho}{\supp{\nf{\rtay h}}}$.
\end{proof}

\begin{lemma}\label{lem:Tsupp_subst}
  For all $M,N \in \Lambda^+$, if $s \in \supp{\rtay M}$ and $\bag t \in \supp{!\rtay N}$ then $\dsubst{s}{\bag t}{x} \in \supp{\rtay{M\subst{N}{x}}}$.
\end{lemma}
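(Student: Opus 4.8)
The plan is to prove, by induction on the structure of $M$, a statement slightly stronger than the lemma, one that simultaneously handles simple terms and poly-terms. Write $\mathcal{B}$ for the set of finite multisets all of whose elements lie in $\rTsupp{N}$. Two easy facts will be used throughout: $\mathcal{B} = \supp{!\rtay N}$, which is immediate from the definition of the exponential $!(-)$ together with $\supp{\rtay N} = \rTsupp{N}$; and $\mathcal{B}$ is closed under taking sub-multisets. The strengthened claim is: for every $M \in \Lambda^+$, (a) for every $s \in \rTsupp{M}$ and every $\bag t \in \mathcal{B}$ one has $\supp{\dsubst{s}{\bag t}{x}} \subseteq \rTsupp{M\subst{N}{x}}$; and (b) for every finite multiset $\bag u$ all of whose elements lie in $\rTsupp{M}$ and every $\bag t \in \mathcal{B}$ one has $\supp{\dsubst{\bag u}{\bag t}{x}} \subseteq \Mf{\rTsupp{M\subst{N}{x}}}$. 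Since $\supp{\rtay M} = \rTsupp{M}$ (from the proposition characterising $\rtay M$ as $\sum_{s \in \rTsupp M}\frac{1}{\mcoef s}s$), part (a) is exactly the lemma. I would then note that (b) for a given $M$ follows formally from (a) for that same $M$: the clause defining $\dsubst{[u_1,\dots,u_m]}{\bag t}{x}$ writes every element of its support as $[u_1',\dots,u_m']$ with each $u_k' \in \supp{\dsubst{u_k}{\bag v_k}{x}}$ for sub-multisets $\bag v_k$ of $\bag t$ (hence in $\mathcal{B}$) and $u_k \in \rTsupp{M}$, so applying (a) to each $u_k$ gives $u_k' \in \rTsupp{M\subst{N}{x}}$ and therefore $[u_1',\dots,u_m'] \in \Mf{\rTsupp{M\subst{N}{x}}}$. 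Hence it suffices to establish (a) by induction on $M$, allowing the application case to invoke (a) on both immediate subterms.

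Next I would dispatch the base and unary cases, each of which is just a matter of unwinding the matching clauses of the definitions of $\rTsupp{-}$, of $(-)\subst{N}{x}$ and of $\dsubst{-}{-}{x}$. For $M = x$: here $M\subst{N}{x} = N$, and $\dsubst{x}{\bag t}{x}$ is nonzero only when $\bag t = [t]$ with $t \in \rTsupp{N} = \rTsupp{M\subst{N}{x}}$, in which case it equals $t$. For $M = y \neq x$: $\dsubst{y}{\bag t}{x}$ is nonzero only for $\bag t = [\,]$, giving $y$. For $M = \lambda y.M'$ (choosing $y$ fresh for $x$ and for $N$ by $\alpha$-conversion): every $s \in \rTsupp{M}$ is $\lambda y.s'$ with $s' \in \rTsupp{M'}$, substitution commutes with the abstraction, and the induction hypothesis on $M'$ places $\supp{\dsubst{s'}{\bag t}{x}}$ inside $\rTsupp{M'\subst{N}{x}}$, whence $\supp{\dsubst{s}{\bag t}{x}} \subseteq \rTsupp{\lambda y.(M'\subst{N}{x})} = \rTsupp{M\subst{N}{x}}$. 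For $M = M_1 \oplus_p M_2$: $s$ is $\ls{p}{s_1}$ or $\rs{p}{s_2}$ with $s_i \in \rTsupp{M_i}$, substitution commutes with the outer choice, and the induction hypothesis on the relevant $M_i$ applies.

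The only case demanding care is the application $M = M_1 M_2$, which I expect to be the main obstacle although it is still essentially routine. Here $s = \linapp{s_1}{\bag u}$ with $s_1 \in \rTsupp{M_1}$ and $\bag u \in \Mf{\rTsupp{M_2}}$, and the clause defining substitution into an application splits $\bag t$ into two sub-multisets, feeding one to the head $s_1$ and the other to the argument bag $\bag u$. The key observation is that each of these two parts is again a sub-multiset of $\bag t$ and hence still lies in $\mathcal{B}$; consequently the induction hypothesis (a) for $M_1$ shows that every simple term in $\supp{\dsubst{s_1}{\bag v}{x}}$ (with $\bag v$ the head part) lies in $\rTsupp{M_1\subst{N}{x}}$, while part (b) for $M_2$ — which, as noted above, follows from the induction hypothesis (a) for $M_2$ — shows that every poly-term in $\supp{\dsubst{\bag u}{\bag w}{x}}$ (with $\bag w$ the argument part) lies in $\Mf{\rTsupp{M_2\subst{N}{x}}}$. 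Any resulting $\linapp{s_1'}{\bag{u'}}$ therefore lies in $\rTsupp{(M_1\subst{N}{x})(M_2\subst{N}{x})} = \rTsupp{(M_1 M_2)\subst{N}{x}}$, closing the induction. Beyond bookkeeping this splitting and keeping the simple-term and poly-term halves of the induction in step, I do not anticipate any genuine difficulty; the only other mildly delicate point is the freshness convention for the bound variable in the abstraction case.
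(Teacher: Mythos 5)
Your proof is correct and follows exactly the route the paper intends: the paper's own proof is simply ``by induction on $M$'', and your strengthened statement (adding the poly-term clause (b) and reading the conclusion as support inclusion) together with the sub-multiset splitting in the application case is precisely the standard way to carry that induction through. No gaps; you have merely written out the details the paper leaves implicit.
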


\begin{proof}
By induction on $M$.
\end{proof}

\begin{lemma}
  For any $M,N \in \Lambda^+$ and any head context $H$ we have:
  \begin{align*}
    \nf{\rtay{H[(\lambda x.M)\ N]}} &= \nf{\rtay{H[M\subst{N}{x}]}}\\
    \nf{\rtay{H[M\oplus_pN]}} &= \ls{p}{\nf{\rtay{H[M]}}} + \rs{p}{\nf{\rtay{H[N]}}}
  \end{align*}
\end{lemma}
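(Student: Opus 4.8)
The plan is to deduce both identities from Theorem~\ref{thm:rtay_reduc}, which computes $\nf{\rtay L}$ as $\sum_{h\in\hnf}\sum_{\reduc{\rho}{L}{h}}\preflist{\rho}{\nf{\rtay h}}$ for every $L\in\Lambda^+$. The only extra ingredient is a pair of observations about the formal system deriving the judgments $\reduc{\rho}{L}{h}$. First, a term of the form $H[(\lambda x.M)\,N]$ is never a head normal form and its head redex is exactly the displayed $\beta$-redex; since a head context has its hole in head position, the decomposition of such a term into a head context applied to a head redex is unique, so the only rule whose conclusion can be $\reduc{\rho}{H[(\lambda x.M)\,N]}{h}$ is the $\beta$-rule contracting that redex, whose premise is $\reduc{\rho}{H[M\subst{N}{x}]}{h}$. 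Hence $\reduc{\rho}{H[(\lambda x.M)\,N]}{h}$ and $\reduc{\rho}{H[M\subst{N}{x}]}{h}$ are derivable for exactly the same pairs $(\rho,h)$, and applying Theorem~\ref{thm:rtay_reduc} to both $\lambda$-terms yields the same sum on the nose, which proves the first equation.

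For the second equation, observe that $H[M\oplus_p N]$ is also never a head normal form and has no head $\beta$-redex, so the only rules that can conclude $\reduc{\rho'}{H[M\oplus_p N]}{h}$ are the two choice rules; therefore $\reduc{\rho'}{H[M\oplus_p N]}{h}$ is derivable precisely when either $\rho'=(\llabel,p)\cdot\rho$ with $\reduc{\rho}{H[M]}{h}$, or $\rho'=(\rlabel,p)\cdot\rho$ with $\reduc{\rho}{H[N]}{h}$. Plugging this partition into Theorem~\ref{thm:rtay_reduc}, the sum defining $\nf{\rtay{H[M\oplus_p N]}}$ splits into a part indexed by the judgments $\reduc{\rho}{H[M]}{h}$, whose generic summand is $\preflist{(\llabel,p)\cdot\rho}{\nf{\rtay h}}=\ls{p}{\preflist{\rho}{\nf{\rtay h}}}$ by definition of $\preflist{}{}$, plus a symmetric part indexed by the $\reduc{\rho}{H[N]}{h}$. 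Since $\ls{p}{\cdot}$, $\rs{p}{\cdot}$ and $\preflist{}{}$ are linear, the first part equals $\ls{p}{\big(\sum_{h}\sum_{\reduc{\rho}{H[M]}{h}}\preflist{\rho}{\nf{\rtay h}}\big)}=\ls{p}{\nf{\rtay{H[M]}}}$ by Theorem~\ref{thm:rtay_reduc} again, and the second equals $\rs{p}{\nf{\rtay{H[N]}}}$; as these two terms have disjoint supports (their simple terms being left-, resp.\ right-prefixed), their sum is the claimed $\ls{p}{\nf{\rtay{H[M]}}}+\rs{p}{\nf{\rtay{H[N]}}}$.

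The main obstacle will be the ``invertibility'' of the rules used above, i.e.\ establishing that a derivable judgment with subject $H[(\lambda x.M)\,N]$ (resp.\ $H[M\oplus_p N]$) can only arise from the $\beta$-rule (resp.\ a choice rule) acting on the indicated head redex. This reduces to the routine fact that, after peeling off the maximal block of leading abstractions, the leftmost non-application subterm of a $\lambda$-term is uniquely determined, and here it is precisely $\lambda x.M$ applied to $N$ (resp.\ the choice $M\oplus_p N$); everything else is bookkeeping with the (possibly infinite) sums and the linearity of the prefixing operators. An alternative route avoids the formal system altogether: $\Lred{H[(\lambda x.M)\,N]}=H[M\subst{N}{x}]$ and $\Lred{H[M\oplus_p N]}=H[M]\oplus_p H[N]$ are instances of the defining clauses of $\Lred$ on $\lambda$-terms — the only subtlety being the trivial head context in the choice case, for which $\Lred{M\oplus_p N}=\Lred M\oplus_p\Lred N$ and one instead uses $\rtay{(M\oplus_p N)}=(\ls{p}{\rtay M})+(\rs{p}{\rtay N})$ directly — and one then concludes with Proposition~\ref{prop:nf_Lred_commute} and the invariance of $\nf$ under complete left reduction on resource terms, using that $\nf{\ls{p}{S}}=\ls{p}{\nf S}$, $\nf{\rs{p}{S}}=\rs{p}{\nf S}$, and that coherent resource terms with disjoint supports normalise componentwise (Corollary~\ref{prop:nf_uniform_distinct}).
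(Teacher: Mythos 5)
Your proof is correct, and it follows the route the paper itself suggests: the lemma is stated immediately after Theorem~\ref{thm:rtay_reduc} with no written proof, precisely because it is the immediate consequence you describe, namely inverting the head-reduction rules (a term $H[(\lambda x.M)\,N]$ or $H[M\oplus_p N]$ admits a unique head decomposition, is never a hnf, and only the $\beta$-rule, resp.\ the two choice rules, can conclude a judgement about it) and then regrouping the sum of Theorem~\ref{thm:rtay_reduc}, using $\preflist{((\llabel,p)\cdot\rho)}{s}=\ls{p}{(\preflist{\rho}{s})}$ and linearity. Your alternative argument via $\Lred$ and Proposition~\ref{prop:nf_Lred_commute} is also sound and equally in the spirit of the surrounding development.
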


\section{Generic Taylor Expansion of Probabilistic $\lambda$-terms}\label{sec:taylor}
\subsection{Barycentric Semantics of Choices}

The explicit probabilistic Taylor expansion is satisfactory in that it
is an extension of deterministic Taylor expansion which preserves its
most important properties: it is regular and so are its normal
forms. But while deterministic Taylor normal forms are well known to
correspond to B\"{o}hm trees \cite{ER06b}, explicit Taylor normal
forms are not such a good denotational semantics for probabilistic
$\lambda$-calculus, as they take the exact choices made during the
reduction into account. For instance the terms $x \oplus_\frac{1}{2}
y$ and $y \oplus_\frac{1}{2} x$ have \emph{distinct} explicit Taylor
normal forms while one could expect them to have \emph{the same
  semantics}. More precisely we expect any model of the probabilistic
$\lambda$-calculus to interpret probabilistic choices as a
\emph{barycentric sum} respecting the following equivalence.

\begin{definition}
  The \Def{barycentric equivalence} $\bareq$ is the least congruence on $\Lambda^+$ such that for all $M,N,P \in \Lambda^+$ and $p,q \in [0,1]$:
  \begin{align*}
    M \oplus_p N &\bareq N \oplus_{1-p} M &&& M \oplus_p M &\bareq M\\
    (M \oplus_p N) \oplus_q P &\bareq M \oplus_{pq} (N \oplus_\frac{q(1-p)}{1-pq} P)\ \text{if}\ pq \neq 1 &&& M \oplus_1 N &\bareq M
  \end{align*}
\end{definition}

Saying it another way, We want a notion of Taylor expansion $\tay M$
such that if $M \bareq N$ then $\tay M = \tay N$. This is easy to
achieve, as the resource $\lambda$-calculus stemmed precisely from
quantitative models of the $\lambda$-calculus, and resource terms are
linear combinations.

\begin{definition}
  The sets of \Def{simple resource terms} $\srt$ and of \Def{simple
    resource poly-terms} $\srpt$ are:
  $$
    s,t \in \srt := x \mid \lambda x.s \mid \linapp{s}{\bag t}\qquad\qquad \bag s, \bag t \in \srpt := [s_1,\dots,s_n]
  $$
  The set of \Def{resource terms} is $\rt$ and the set of \Def{resource poly-terms} is $\rpt$.
\end{definition}

\begin{definition}
  The \Def{Taylor expansion} $\tay M \in \rt$ of a term $M \in \Lambda^+$ is defined inductively as follows:
  \begin{align*}
    \tay x &= x & \tay{(M\ N)} &= \sum_{n \in \Nat} \frac{1}{n!}\linapp{\tay M}{[(\tay N)^n]}\\
    \tay{(\lambda x.M)} &= \lambda x.\tay M & \tay{(M \oplus_p N)} &= p\tay M + (1-p)\tay N
  \end{align*}
\end{definition}
The definition of the Taylor expansion of a probabilistic choice immediately gives the expected property.
\begin{proposition}
  If $M \bareq N$ then $\tay M = \tay N$.
\end{proposition}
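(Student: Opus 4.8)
The plan is to exhibit $\bareq$ as contained in the relation
$$R = \{(M,N) \in \Lambda^+ \times \Lambda^+ \mid \tay M = \tay N\}.$$
Since $\bareq$ is by definition the \emph{least} congruence on $\Lambda^+$ satisfying the four displayed equations, it suffices to check that $R$ is a congruence and that $R$ contains the four generating pairs; then ${\bareq} \subseteq R$, which is exactly the statement.

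That $R$ is an equivalence relation is immediate, as it is the kernel of the map $M \mapsto \tay M$. That $R$ is a congruence follows from the fact that $\tay{(-)}$ is defined by structural recursion in a way that factors through the Taylor expansions of the immediate subterms: $\tay{(\lambda x.M)} = \lambda x.\tay M$, $\tay{(M\ N)} = \sum_{n \in \Nat} \frac{1}{n!}\linapp{\tay M}{[(\tay N)^n]}$, and $\tay{(M \oplus_p N)} = p\tay M + (1-p)\tay N$. Hence if $\tay M = \tay{M'}$ and $\tay N = \tay{N'}$, then applying any of the constructors $\lambda x.(-)$, $(-)\ N$, $M\ (-)$, $(-)\oplus_p(-)$ to these pairs yields equal Taylor expansions, so $R$ is stable under all constructors.

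It remains to verify the four generating equations, which is a direct computation in $\rt = \Rpos^{\srt}$, where addition and scalar multiplication are pointwise and hence commutative, associative and distributive. For the symmetry law, $\tay{(M \oplus_p N)} = p\tay M + (1-p)\tay N = (1-p)\tay N + p\tay M = \tay{(N \oplus_{1-p} M)}$. For idempotence, $\tay{(M \oplus_p M)} = p\tay M + (1-p)\tay M = \tay M$. For the degenerate case, $\tay{(M \oplus_1 N)} = 1\cdot\tay M + 0\cdot\tay N = \tay M$. For the associativity law, assume $pq \neq 1$, so that (using $p,q \in [0,1]$) the scalar $r = \frac{q(1-p)}{1-pq}$ is well defined and lies in $[0,1]$; then the left-hand side expands to $q\bigl(p\tay M + (1-p)\tay N\bigr) + (1-q)\tay P = pq\,\tay M + q(1-p)\,\tay N + (1-q)\,\tay P$, while the right-hand side is $pq\,\tay M + (1-pq)r\,\tay N + (1-pq)(1-r)\,\tay P$, and one checks $(1-pq)r = q(1-p)$ and $(1-pq)(1-r) = 1-pq-q(1-p) = 1-q$, so the two sides agree.

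There is no genuine obstacle in this argument. The only points that deserve a moment's care are that every scalar appearing ($p$, $1-p$, $pq$, $r$, $1-r$) lies in $\Rpos$, so that all the expressions really denote elements of $\rt$, and the bookkeeping of the side condition $pq \neq 1$ needed to make the coefficient $r$ in the associativity law well defined.
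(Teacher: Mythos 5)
Your proof is correct and is essentially the expansion of what the paper leaves implicit: the paper simply observes that the definition $\tay{(M \oplus_p N)} = p\tay M + (1-p)\tay N$ makes the property immediate, and your argument (the kernel of $\tay{(-)}$ is a congruence, plus the routine barycentric computations for the four generating equations) is exactly the verification behind that remark. No further comment is needed.
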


\subsection{Normalisation}

Unfortunately, these Taylor expansions lack all the good properties of
explicit expansions: they are not entirely defined by their support,
and those supports are not uniform, so we do not even know if such
Taylor expansions admit normal forms. But there is actually a close
relationship between explicit and non explicit Taylor expansions which
can be used to recover our most important results. Indeed, switching from
the explicit Taylor expansion to the Taylor expansion simply amounts to
using coefficients instead of explicit choices.

\begin{definition}
  Given any $\sigma \in \pgsrt$ we define $\forget \sigma \in \gsrt$ and
  a probability $\prob \sigma$ as follows:

  \begin{align*}
    \forget x &= x & \prob x &= 1\\
    \forget{\lambda x.s} &= \lambda x.\forget s & \prob{\lambda x.s} &= \prob s\\
    \forget{\linapp{s}{\bag t}} &= \linapp{\forget s}{\forget{\bag t}} & \prob{\linapp{s}{\bag t}} &= \prob s \prob{\bag t}\\
    \forget{\ls{p}{s}} &= \forget s & \prob{\ls{p}{s}} &= p\prob s\\
    \forget{\rs{p}{s}} &= \forget s & \prob{\rs{p}{s}} &= (1-p)\prob s\\
    \forget{[s_1,\dots,s_n]} &= [\forget{s_1},\dots,\forget{s_n}] & \prob{[s_1,\dots,s_n]} &= \prod_{i=1}^n \prob{s_i}
  \end{align*}
\end{definition}

To any probabilistic resource (poly-)term $S \in \pgrt$ one could
associate the resource term $\sum_{\sigma \in \pgsrt}
S_\sigma\prob \sigma.\forget \sigma$. But just like with
normalisation, infinite coefficients may appear. For instance, removing
the choices from $S = \sum \ls{1}{((\ls{1}{x})\dots)}$ could give $x$
an infinite coefficient. Fortunately, we do not get any infinite
coefficient if we work with regular terms.

\begin{proposition}
  For any $\mathcal S \subset \pgsrt$ such that for all $\sigma,\sigma' \in \mathcal S$, $\sigma \coh \sigma'$ and $\forget \sigma = \forget{\sigma'}$ we have $\sum_{\sigma \in \mathcal S} \prob \sigma \leq 1$.
\end{proposition}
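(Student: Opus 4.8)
The plan is to argue by induction on $n=\max_{\sigma\in\mathcal S}\size{\sigma}$ (with $\max\emptyset=0$; the case $\mathcal S=\emptyset$ being trivial since then the sum is $0$), after first extracting the right structural consequence of the coherence hypothesis. If $\mathcal S$ is nonempty then all its elements share a common value $\tau=\forget{\sigma}$, and a quick inspection of the coherence rules shows that the ``top constructor'' is essentially fixed across $\mathcal S$: either (i) every element of $\mathcal S$ has a choice as top constructor \emph{and} the label $p$ is the same for all of them, so that $\mathcal S=\{\ls{p}{s}\sep s\in\mathcal S_l\}\cup\{\rs{p}{s}\sep s\in\mathcal S_r\}$ for suitable sets $\mathcal S_l,\mathcal S_r$; or (ii) every element has the same non-choice top constructor, which must then coincide with that of $\tau$ since $\forget{\cdot}$ does not touch non-choice constructors. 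In either case the immediate subterms (resp. poly-term elements) form again pairwise coherent sets with a common forgetful value and strictly smaller $\size{\cdot}$, feeding the induction.

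In case (i), $\sum_{\sigma\in\mathcal S}\prob{\sigma}=p\sum_{s\in\mathcal S_l}\prob{s}+(1-p)\sum_{s\in\mathcal S_r}\prob{s}$, and each of the two sums is $\le 1$ by the induction hypothesis (both $\mathcal S_l$ and $\mathcal S_r$ are pairwise coherent with common forgetful value $\tau$ and consist of strictly smaller terms), so the total is $\le p+(1-p)=1$. In case (ii): if $\tau=x$ then $\mathcal S\subseteq\{x\}$ and the bound is immediate since $\prob{x}=1$; if $\tau=\lambda x.\tau'$ then $\sum_{\sigma\in\mathcal S}\prob{\sigma}=\sum_{s\in\mathcal S'}\prob{s}$ with $\mathcal S'=\{s\sep\lambda x.s\in\mathcal S\}$, and we conclude directly; if $\tau=\linapp{\tau_0}{\bag{\tau_1}}$, writing $\mathcal S_1$ and $\mathcal S_2$ for the sets of first and second components of the application terms in $\mathcal S$, we have $\mathcal S\subseteq\mathcal S_1\times\mathcal S_2$ and $\prob{\linapp{s}{\bag t}}=\prob{s}\prob{\bag t}$, hence $\sum_{\sigma\in\mathcal S}\prob{\sigma}\le(\sum_{s\in\mathcal S_1}\prob{s})(\sum_{\bag t\in\mathcal S_2}\prob{\bag t})\le 1\cdot 1$ by the induction hypothesis applied to $\mathcal S_1$ (common value $\tau_0$) and $\mathcal S_2$ (common value $\bag{\tau_1}$); here we use that all coefficients are nonnegative, so restricting to $\mathcal S\subseteq\mathcal S_1\times\mathcal S_2$ only increases the sum.

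The one delicate case is $\tau$ a poly-term: say $\forget{\bag s}=\bag{\tau}$ for all $\bag s\in\mathcal S$, where $\bag{\tau}$ has $n$ elements counted with multiplicity, taking the distinct values $v_1,\dots,v_k$ with multiplicities $m_1,\dots,m_k$. Let $V$ be the set of all simple terms occurring in some element of $\mathcal S$; by the poly-term coherence rule $V$ is pairwise coherent, so it splits as $V_1\sqcup\dots\sqcup V_k$ according to forgetful value, and the induction hypothesis gives $\sum_{s\in V_j}\prob{s}\le 1$ for each $j$ (each $V_j$ has common forgetful value $v_j$, a proper subterm of $\bag{\tau}$, hence smaller size). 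Every $\bag s\in\mathcal S$ is then an $n$-element multiset over $V$ containing exactly $m_j$ elements of $V_j$ for each $j$, and $\prob{\bag s}$ factors as $\prod_{j}\prob{\bag s^{(j)}}$, where $\bag s^{(j)}$ is the sub-multiset of $\bag s$ lying in $V_j$; this realises $\mathcal S$ as a subset of a product of multiset-sets, so $\sum_{\bag s\in\mathcal S}\prob{\bag s}\le\prod_{j=1}^k\big(\sum_{\bag u}\prob{\bag u}\big)$, the inner sum ranging over all $m_j$-element multisets over $V_j$. The final ingredient is the elementary estimate $\sum_{\bag u}\prob{\bag u}\le\big(\sum_{s\in V_j}\prob{s}\big)^{m_j}\le 1$: expanding $(\sum_{s}\prob{s})^{m_j}$ by the multinomial theorem yields exactly the monomials of $\sum_{\bag u}\prob{\bag u}$, each weighted by a multinomial coefficient $\ge 1$ (and when $V_j$ is infinite one first restricts to finite subsets, using that any multiset of size $m_j$ has finite support and that all coefficients are nonnegative). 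Multiplying over $j$ gives the bound $1$ and closes the induction. I expect this poly-term case — the multiset decomposition bookkeeping together with the homogeneous-symmetric-polynomial inequality — to be the only real work; the remaining cases are a routine traversal of the constructors.
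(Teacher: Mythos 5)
The paper states this proposition without proof (the proofs for this section are deferred to the extended version), so there is no official argument to compare against line by line; judged on its own, your proof is in substance correct. The structural analysis is right: pairwise coherence forces all elements of $\mathcal S$ to share their top constructor (and, in the choice case, the label $p$, since all three choice rules require equal labels, and no rule relates a choice with a non-choice), the abstraction/variable cases are immediate, the application and poly-term cases are correctly reduced to products of smaller sums using nonnegativity and the inclusion of $\mathcal S$ into a product, and the estimate $\sum_{\bag u}\prob{\bag u}\le\bigl(\sum_{s\in V_j}\prob{s}\bigr)^{m_j}$ via multinomial coefficients $\ge 1$ is exactly what the poly-term case needs. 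The one genuine wrinkle is the induction measure: $\max_{\sigma\in\mathcal S}\size{\sigma}$ need not exist, because a pairwise coherent family with a common forgetful image can be infinite of unbounded size --- for instance $\ls{p}{x}$, $\rs{p}{(\ls{p}{x})}$, $\rs{p}{(\rs{p}{(\ls{p}{x})})}$, \dots{} are pairwise coherent, all have forgetful image $x$, and their probabilities sum to exactly $1$, so such families are precisely the ones the statement is about and cannot be ignored. The repair is the move you already invoke for infinite $V_j$, but it must be made once and for all before the induction starts: since every $\prob{\sigma}\ge 0$, the sum over $\mathcal S$ is the supremum of the sums over its finite subsets, and any finite subset inherits pairwise coherence and the common forgetful image; hence it suffices to prove the bound for finite $\mathcal S$, for which the maximal size exists, the sets $\mathcal S_l,\mathcal S_r,\mathcal S_1,\mathcal S_2$ and $V$ are finite, and your induction goes through as written.
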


\begin{corollary}
  For all $S \in \pgrt$ regular, $\sum_{\sigma \in \pgsrt} S_\sigma\prob \sigma.\forget \sigma$ is in $\grt$.
\end{corollary}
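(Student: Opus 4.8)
The plan is to reduce the statement to the Proposition immediately preceding it. First, I would unfold what is being asserted: writing $R$ for the candidate family $\sum_{\sigma \in \pgsrt} S_\sigma \prob\sigma . \forget\sigma$, its coefficient at a given $\tau \in \gsrt$ is $R_\tau = \sum_{\sigma \in \supp S,\ \forget\sigma = \tau} S_\sigma \prob\sigma$, which is an a priori possibly divergent sum of nonnegative reals; so $R \in \grt$ is exactly the assertion that $R_\tau < \infty$ for every $\tau$.

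Next, for a fixed $\tau$ I would set $\mathcal S_\tau = \{\sigma \in \supp S : \forget\sigma = \tau\}$ and check that it meets the hypotheses of the Proposition: since $S$ is regular it is uniform, so all elements of $\supp S$ — hence all elements of $\mathcal S_\tau$ — are pairwise coherent, and by construction they all satisfy $\forget\sigma = \tau$. The Proposition then gives $\sum_{\sigma \in \mathcal S_\tau} \prob\sigma \leq 1$.

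The last ingredient is to absorb the coefficients $S_\sigma$ using regularity once more: $S_\sigma = \tfrac{1}{\mcoef\sigma}$, and $\mcoef\sigma \geq 1$ because its inductive definition makes it a product of factorials and of powers of other multinomial coefficients, each a positive integer. Hence $S_\sigma\prob\sigma \leq \prob\sigma$, so
\[ R_\tau \;=\; \sum_{\sigma \in \mathcal S_\tau} S_\sigma \prob\sigma \;\leq\; \sum_{\sigma \in \mathcal S_\tau} \prob\sigma \;\leq\; 1 \;<\; \infty, \]
and as $\tau$ was arbitrary we conclude $R \in \grt$.

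As for the main obstacle: it is not in the corollary at all but in the Proposition it rests on, whose proof must exploit the precise form of the coherence rules for explicit choices — two coherent simple (poly-)terms with the same image under $\forget{\cdot}$ are forced to make compatible left/right choices, and the probabilities attached to the ruled-out alternatives are exactly what makes $\sum_\sigma \prob\sigma$ telescope down to at most $1$ (in spirit a convex-combination argument). Granting that, the corollary carries no real difficulty: one only needs the two soft consequences of regularity used above — regular $\Rightarrow$ uniform $\Rightarrow$ pairwise coherence of the support, and $\mcoef\sigma \geq 1$.
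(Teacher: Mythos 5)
Your proof is correct and is essentially the argument the paper intends (the corollary is stated without proof precisely because it follows this way from the preceding proposition): fixing $\tau$, the fibre $\{\sigma \in \supp S : \forget\sigma = \tau\}$ is pairwise coherent by uniformity (which regularity presupposes by definition), the proposition bounds $\sum \prob\sigma$ by $1$, and $S_\sigma = \frac{1}{\mcoef\sigma} \leq 1$ gives finiteness of each coefficient.
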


In particular, we can apply this process to explicit Taylor
expansions \emph{and to their normal forms}. It is easy to see that we
associate to every explicit Taylor expansion the corresponding Taylor
expansion, but more interestingly erasing choices commutes with
normalisation.

\begin{proposition}\label{prop:nf_tay_def}
  For any $M \in \Lambda^+$:
  \[\sum_{s \in \psrt} \rtay M_s\prob s.\forget s = \tay M \hskip 40pt \sum_{t \in \psrt} \nf{\rtay M}_t\prob t.\forget t = \sum_{s \in \srt} \tay M_s.\nf s\]
  hence $\sum_{s \in \srt} \tay M_s.\nf s$ is well defined. We denote it by $\nf{\tay M}$ and we call it the \Def{Taylor normal form} of $M$.
\end{proposition}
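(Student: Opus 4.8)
The plan is to route everything through the single operation sending a regular $S\in\pgrt$ to $\flat S:=\sum_{\sigma\in\pgsrt}S_\sigma.\prob\sigma.\forget\sigma$, which lies in $\grt$ by the corollary just stated. In this notation the two displayed identities read $\flat(\rtay M)=\tay M$ and $\flat(\nf{\rtay M})=\sum_{s\in\srt}\tay M_s.\nf s$, and the well-definedness claim is exactly that the latter sum has finite coefficients.

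First I would check that $\flat$ commutes with the constructors: $\flat(1.x)=1.x$, $\flat(\lambda x.S)=\lambda x.\flat S$, $\flat(\linapp S{\bag T})=\linapp{\flat S}{\flat{\bag T}}$, $\flat(\ls p S)=p.\flat S$, $\flat(\rs p S)=(1-p).\flat S$, and $\flat(!S)=\,!(\flat S)$. All but the last are immediate from the clauses defining $\forget\cdot$ and $\prob\cdot$ together with the linear extension of the constructors. The exponential case is the only real computation, and it is the multiset-versus-tuple count already used for Proposition~\ref{prop:exp_regular}: write $(!S)_{\bag t}=\tfrac1{\prod_u\bag t(u)!}\prod_u S_u^{\bag t(u)}$, regroup $\sum_{\bag t:\forget{\bag t}=\bag v}(!S)_{\bag t}\prob{\bag t}$ according to the $\forget\cdot$-classes of the elements of $\bag t$, and apply the multinomial theorem to each class to recover $(!(\flat S))_{\bag v}$. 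Granting these commutations, $\flat(\rtay M)=\tay M$ is a one-line induction on $M$ comparing the defining clauses of $\rtay\cdot$ and $\tay\cdot$.

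The core is the second identity, for which I would prove: for every uniform $\sigma\in\pgsrt$, $\flat(\nf\sigma)=\prob\sigma.\nf{\forget\sigma}$, the right-hand $\nf\cdot$ being taken in the choice-free part of the resource calculus (a sub-syntax of $\pgsrt$ on which $\rightarrow$ reduces to $\rbeta$ and stays choice-free). I would go through the complete left reduction. By induction on $\sigma$ one shows $\flat\sigma\twoheadrightarrow\flat(\Lred\sigma)$ in the choice-free calculus: when $\sigma$ is head-normal, a poly-term, or has a top-level choice, the reduction is pushed into the immediate subterms using the commutations above and closure of $\twoheadrightarrow$ under contexts; when the head redex is a $\rsum$ step, $\flat(\Lred\sigma)=\flat\sigma$ already, since erasing choices makes that rewrite trivial and $\prob\cdot$ is unaffected ($\prob{\ls p s}=p\prob s=\prob{\ls p{\lambda x.s}}$, similarly for the application clause); and when the head redex is a $\rbeta$ step, one uses that $\forget\cdot$ commutes with resource substitution and that every summand of $\dsubst s{\bag t}x$ carries the same probability $\prob s.\prob{\bag t}$ (substituting $t_i$ for one occurrence of $x$ multiplies by $\prob{t_i}$, and $\prob x=1$), whence $\flat(\dsubst s{\bag t}x)=\prob s.\prob{\bag t}.\dsubst{\forget s}{\forget{\bag t}}x$ and $\flat\sigma$ performs exactly the corresponding choice-free $\rbeta$ step. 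Iterating and invoking Proposition~\ref{prop:nf_Lred} to choose $k$ with $\Lred[k]\sigma=\nf\sigma$, we get $\flat\sigma\twoheadrightarrow\flat(\nf\sigma)$. Finally $\flat(\nf\sigma)$ is a combination of terms $\forget\tau$ with $\tau\in\supp{\nf\sigma}$, each such $\tau$ being $\rightarrow$-normal; then $\forget\tau$ has no $\rbeta$-redex, since a $\rbeta$-redex of $\forget\tau$ would come either from a $\rbeta$-redex of $\tau$ or from an occurrence of shape $\linapp{\ls p{\lambda y.s}}{\cdots}$ in $\tau$, i.e. a $\rsum$-redex. Hence $\flat(\nf\sigma)$ is choice-free normal, so it coincides with $\nf{\flat\sigma}=\prob\sigma.\nf{\forget\sigma}$.

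To conclude, I extend the lemma to regular infinite terms: for regular $S\in\pgrt$, writing $\nf S=\sum_{\sigma\in\supp S}S_\sigma.\nf\sigma$ with pairwise disjoint summand supports (Corollary~\ref{prop:nf_uniform_distinct}), we get $\flat(\nf S)=\sum_\sigma S_\sigma.\prob\sigma.\nf{\forget\sigma}$, and regrouping by the value $\forget\sigma$ this equals $\sum_v(\flat S)_v.\nf v$. Taking $S=\rtay M$ — regular by Proposition~\ref{prop:rtay_regular}, with $\nf{\rtay M}$ regular by Corollary~\ref{cor:nf_rtay} — and using $\flat(\rtay M)=\tay M$ from the first part yields $\flat(\nf{\rtay M})=\sum_{s\in\srt}\tay M_s.\nf s$, which is the second identity; and since $\nf{\rtay M}$ is regular, $\flat(\nf{\rtay M})\in\grt$, so these coefficients are finite and $\sum_{s\in\srt}\tay M_s.\nf s$ is a well-defined element of $\rt$. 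The main obstacle I anticipate is the bookkeeping inside the complete-left-reduction lemma: handling the alternation of $\rsum$ and $\rbeta$ steps cleanly, and passing from simple terms to infinite linear combinations through the disjoint-support results.
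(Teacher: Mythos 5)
Your proposal is correct and follows essentially the same route as the paper: your core lemma $\flat(\nf \sigma)=\prob \sigma.\nf{\forget \sigma}$ is exactly the paper's stated key point that $\forget{\nf \sigma}=\nf{\forget \sigma}$ with $\prob \tau=\prob \sigma$ for every $\tau\in\supp{\nf \sigma}$, and the remaining steps (commutation of choice-erasure with the constructors, hence $\flat(\rtay M)=\tay M$, and the regrouping over $\forget{\sigma}$-classes using disjoint supports and regularity) are just the details the paper leaves implicit. No gaps.
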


\begin{proof}
The key point is that $\nf{\forget \sigma} = \forget{\nf \sigma}$ and for any $\tau \in \supp{\nf \sigma}$, $\prob \tau = \prob \sigma$.
\end{proof}

\subsection{Adequacy}
The behaviour of a probabilistic $\lambda$-term is usually described as
a (sub-)probability distribution over the possible results of its
evaluation. In particular, the \emph{observable} behaviour of a term is
its \emph{convergence probability}, i.e. the probability for its
computation to terminate \cite{JP89,EPT18}. To show that the Taylor expansion gives a
meaningful semantics we will prove it is \emph{adequate}, i.e. it
does not equate terms which are not observationally equivalent. We
can actually show a more refined result, given as a Corollary of
Theorem~\ref{thm:rtay_reduc}: the Taylor normal form of a term is given
by the Taylor normal forms of its head normal forms.

\begin{definition}
  The any sequence of choices $\rho$ we associate a probability $\prob \rho$ by:
  \[\prob \epsilon = 1 \hskip 40pt \prob{(\llabel,p)::\rho} = p\prob \rho \hskip 40pt \prob{(\rlabel,p)::\rho} = (1-p)\prob \rho\]
  The probability $\hprob{M}{h}$ for $M \in \Lambda^+$ to reduce into a head normal form $h$ and
  its \Def{convergence probability} $\conv M$ are defined as follows:
  \[
  \hprob{M}{h} := \sum_{\reduc{\rho}{M}{h}} \prob \rho
  \qquad\qquad
  \conv M = \sum_{h \in \hnf} \hprob{M}{h}.
  \]
\end{definition}

\begin{proposition}\label{prop:nf_tay_hnf}
  For $M \in \Lambda^+$ we have:
  \[\nf{\tay M} = \sum_{h \in \hnf} \hprob{M}{h} \nf{\tay h}.\]
\end{proposition}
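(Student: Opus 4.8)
The plan is to transport the corresponding statement about explicit Taylor expansions, namely Theorem~\ref{thm:rtay_reduc}, along the choice-erasing operation $\forget{\cdot}$ and $\prob{\cdot}$, using the characterisation of $\nf{\tay M}$ given in Proposition~\ref{prop:nf_tay_def}. Recall that Theorem~\ref{thm:rtay_reduc} gives $\nf{\rtay M} = \sum_{h \in \hnf} \sum_{\reduc{\rho}{M}{h}} \preflist{\rho}{\nf{\rtay h}}$, and Proposition~\ref{prop:nf_tay_def} tells us that $\nf{\tay M} = \sum_{t \in \psrt} \nf{\rtay M}_t \prob t .\forget t$ (more precisely, that the right-hand side of that identity equals $\sum_{s \in \srt} \tay M_s . \nf s$, which is the definition of $\nf{\tay M}$). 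So the strategy is: apply the choice-erasing map to both sides of Theorem~\ref{thm:rtay_reduc}.

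First I would record how $\forget{\cdot}$ and $\prob{\cdot}$ interact with the prefixing operation $\preflist{\rho}{\cdot}$. Directly from the definitions, for any $s \in \psrt$ and any sequence of choices $\rho$ we have $\forget{\preflist{\rho}{s}} = \forget s$ (prefixing only adds head choice constructors, which $\forget{\cdot}$ discards) and $\prob{\preflist{\rho}{s}} = \prob \rho \cdot \prob s$ (each added choice constructor multiplies by the corresponding factor, matching the definition of $\prob \rho$). This is a routine induction on the length of $\rho$. Next I would note that the summands occurring in Theorem~\ref{thm:rtay_reduc} have pairwise disjoint supports — this is precisely what is established at the beginning of the proof of that theorem, using Corollary~\ref{cor:nf_rtay} and Proposition~\ref{prop:reduc_unique} together with the coherence/disjointness machinery — so for a fixed simple term $t \in \psrt$ there is at most one triple $(h,\rho,s)$ with $\reduc{\rho}{M}{h}$ and $s \in \supp{\nf{\rtay h}}$ and $\preflist{\rho}{s} = t$, and moreover $\nf{\rtay M}_t = \nf{\rtay h}_s$ for that triple.

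Then I would compute: by Proposition~\ref{prop:nf_tay_def},
\begin{align*}
  \nf{\tay M} &= \sum_{t \in \psrt} \nf{\rtay M}_t \prob t . \forget t\\
  &= \sum_{h \in \hnf} \sum_{\reduc{\rho}{M}{h}} \sum_{s \in \psrt} \nf{\rtay h}_s \prob{\preflist{\rho}{s}} . \forget{\preflist{\rho}{s}}\\
  &= \sum_{h \in \hnf} \sum_{\reduc{\rho}{M}{h}} \prob \rho \sum_{s \in \psrt} \nf{\rtay h}_s \prob s . \forget s\\
  &= \sum_{h \in \hnf} \left( \sum_{\reduc{\rho}{M}{h}} \prob \rho \right) \nf{\tay h}\\
  &= \sum_{h \in \hnf} \hprob{M}{h} \nf{\tay h},
\end{align*}
where the passage from the first to the second line re-indexes the sum over $t \in \supp{\nf{\rtay M}}$ by the triples $(h,\rho,s)$ afforded by Theorem~\ref{thm:rtay_reduc} and the disjointness observation, the third line uses the two identities for $\forget{\preflist{\rho}{s}}$ and $\prob{\preflist{\rho}{s}}$, the fourth again invokes Proposition~\ref{prop:nf_tay_def} applied to $h$ (so that $\sum_{s} \nf{\rtay h}_s \prob s . \forget s = \nf{\tay h}$), and the last is the definition of $\hprob{M}{h}$.

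The main obstacle is the bookkeeping in the re-indexing step: one must be careful that the triple $(h,\rho,s)$ representing a given $t$ is genuinely unique, so that no coefficient is double-counted, and that summing $\forget{\cdot}$ and $\prob{\cdot}$ over the (possibly infinite) families converges — but both points are handled, respectively, by the disjoint-support argument already internal to Theorem~\ref{thm:rtay_reduc} and by the corollary that regular terms have $\sum_\sigma S_\sigma \prob \sigma . \forget \sigma$ landing in $\grt$ (applied to $\nf{\rtay M}$, which is regular by Corollary~\ref{cor:nf_rtay}). Everything else is a direct unfolding of definitions.
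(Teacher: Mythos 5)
Your proposal is correct and follows exactly the paper's route: the paper's proof likewise combines Proposition~\ref{prop:nf_tay_def} with Theorem~\ref{thm:rtay_reduc}, the only observation it records being the same two identities $\prob{\preflist{\rho}{s}} = \prob{\rho}\prob{s}$ and $\forget{\preflist{\rho}{s}} = \forget{s}$ that you prove. Your additional care about the re-indexing (uniqueness of the triple $(h,\rho,s)$ via disjoint supports, and convergence via regularity) merely makes explicit what the paper leaves implicit from the proof of Theorem~\ref{thm:rtay_reduc}.
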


\begin{proof}
This is given by Proposition~\ref{prop:nf_tay_def} and
Theorem~\ref{thm:rtay_reduc}. Observe that for any $\rho$ and
$s \in \nf{\rtay h}$ we have $\prob{\preflist{\rho}{s}}
= \prob \rho \prob s$ and $\forget{\preflist{\rho}{s}} = \forget s$.
\end{proof}

The adequacy follows immediately.

%

\begin{proposition}
  If $\nf{\tay M} = \nf{\tay N}$ then for all context $C$,
  $\conv{C[M]} = \conv{C[N]}$, i.e. $M$ and $N$ are contextually
  equivalent.
\end{proposition}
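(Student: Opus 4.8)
The plan is to reduce the statement to two facts about the operator $M\mapsto\nf{\tay M}$, each proved independently of $C$: (A) the convergence probability $\conv{L}$ of an arbitrary $L\in\Lambda^+$ is a function of $\nf{\tay L}$ alone; and (B) this operator is a congruence, i.e.\ $\nf{\tay M}=\nf{\tay N}$ implies $\nf{\tay{C[M]}}=\nf{\tay{C[N]}}$ for every one-hole context $C$. Granting (A) and (B), the proposition is immediate: if $\nf{\tay M}=\nf{\tay N}$ then $\nf{\tay{C[M]}}=\nf{\tay{C[N]}}$ by (B), hence $\conv{C[M]}=\conv{C[N]}$ by (A).

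For (A) I would single out the set $\mathcal W\subseteq\srt$ of \emph{head skeletons}, namely the simple resource terms of the form $\lambda x_1\dots x_k.\linapp{y}{[\,]\,\cdots\,[\,]}$ (a variable head $y$ under $k\ge 0$ abstractions, applied to $m\ge 0$ empty poly-terms), and prove
\[\conv{L}=\sum_{w\in\mathcal W}\bigl(\nf{\tay L}\bigr)_w .\]
By Proposition~\ref{prop:nf_tay_hnf} we have $\nf{\tay L}=\sum_{h\in\hnf}\hprob{L}{h}\,\nf{\tay h}$, so, summing the coefficients indexed by $\mathcal W$ and using $\conv{L}=\sum_{h}\hprob{L}{h}$, it suffices to show $\sum_{w\in\mathcal W}(\nf{\tay h})_w=1$ for every $h\in\hnf$ (all the series here have nonnegative terms, so the interchange of summations is harmless). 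Given $h=\lambda x_1\dots x_k.y\,P_1\dots P_m$, every element of $\supp{\tay h}$ has the shape $\lambda x_1\dots x_k.\linapp{y}{\bag t_1\,\cdots\,\bag t_m}$, and complete left reduction preserves this head shape (clause $\Lred{\lambda\vec x.\linapp{y}{\bag u_1\,\cdots\,\bag u_m}}=\lambda\vec x.\linapp{y}{\Lred{\bag u_1}\,\cdots\,\Lred{\bag u_m}}$); since moreover an argument poly-term retains its number of components along a reduction unless it normalises to $0$, the only $s\in\supp{\tay h}$ whose normal form can contain the skeleton $w_h:=\lambda x_1\dots x_k.\linapp{y}{[\,]\,\cdots\,[\,]}$ is $s=w_h$ itself, and $(\tay h)_{w_h}=1$ (it is the summand obtained by choosing the index $0$ at every application, involving no probabilistic choice). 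Hence $\supp{\nf{\tay h}}\cap\mathcal W=\{w_h\}$ with coefficient $1$, which gives the claim.

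For (B) I would proceed by induction on $C$, showing that $\nf{\tay{C[M]}}$ depends on $M$ only through $\nf{\tay M}$. The hole case is immediate. For $C=\lambda x.C'$ one gets $\nf{\tay{\lambda x.C'[M]}}=\lambda x.\nf{\tay{C'[M]}}$, and for $C=C'\oplus_pP$ (and symmetrically $C=P\oplus_pC'$) one gets $\nf{\tay{C'[M]\oplus_pP}}=p\,\nf{\tay{C'[M]}}+(1-p)\,\nf{\tay P}$; both are read off Proposition~\ref{prop:nf_tay_def} (which gives $\nf{\tay L}=\sum_{s}\tay L_s\,\nf s$), using that abstraction and choice create no resource redex. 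The content is in the application cases. For $C=C'\,P$, set $S:=\tay{C'[M]}$ and $T:=\tay P$; expanding $\tay{C'[M]\,P}=\sum_{n}\frac1{n!}\linapp{S}{[T^n]}$, and then using Proposition~\ref{prop:nf_tay_def}, confluence of $\rightarrow$ (Proposition~\ref{prop:finite_normalisation}) to normalise the head entry separately inside the application context, linearity of normalisation on finite terms, and multilinearity of the poly-term constructor, one obtains
\[\nf{\tay{C'[M]\,P}}=\sum_{s\in\srt}\bigl(\nf{\tay{C'[M]}}\bigr)_s\ \sum_{\bag t\in\srpt}\frac{\prod_u T_u^{\bag t(u)}}{\prod_u\bag t(u)!}\ \nf{\bigl(\linapp{s}{\bag t}\bigr)},\]
where the coefficients of $\nf{\tay{C'[M]}}$ appear precisely because $\sum_{s_0}S_{s_0}(\nf{s_0})_s=(\nf{\tay{C'[M]}})_s$, again by Proposition~\ref{prop:nf_tay_def}; this expression depends on $C'[M]$ only through $\nf{\tay{C'[M]}}$, and the induction hypothesis closes the case. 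The case $C=P\,C'$ is handled the same way, with the roles of the head entry and the argument components exchanged; all the rearrangements are of series with nonnegative terms (hence unconditionally valid), and the sums are finite since $\nf{\tay{C[M]}}\in\rt$ is well defined by Proposition~\ref{prop:nf_tay_def}.

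The step I expect to be the real obstacle is (B) for application contexts: one must justify that the normal form of an application can be computed by first (pointwise) normalising the infinite, non-uniform Taylor expansions of the immediate subterms and only afterwards performing the outer $\beta$-reductions. Making this precise is exactly where Proposition~\ref{prop:nf_tay_def}, confluence of $\rightarrow$, and the nonnegativity of all coefficients (which licenses the interchange of the infinite sums) have to be combined carefully; the remaining cases of (B) and all of (A) are essentially bookkeeping on top of Proposition~\ref{prop:nf_tay_hnf}.
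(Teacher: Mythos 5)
Your proof takes exactly the same route as the paper's: the paper's own argument consists precisely of your facts (A) (the convergence probability of $L$ is the sum of the coefficients $\nf{\tay L}_{\lambda \vec x.y\,[\ ]\,\dots\,[\ ]}$) and (B) (contextual closure of Taylor-normal-form equality), which it states without further justification. Your sketch correctly fills in the details the paper omits, in particular the head-skeleton argument via Proposition~\ref{prop:nf_tay_hnf} and the normalise-the-head-first rearrangement for application contexts justified by Proposition~\ref{prop:nf_tay_def}, confluence on finite terms, and nonnegativity of coefficients.
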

\begin{proof}
  First the convergence probability of a term $M$ is exactly the sum of
  the coefficients $\nf{\tay M}_{\lambda \vec x.y\,[\ ]\,\dots\,[\ ]}$.
  Second if $\nf{\tay M} = \nf{\tay N}$ then $\nf{\tay{C[M]}}
  = \nf{\tay{C[N]}}$ for all $C$.
\end{proof}

\section{On the Taylor Expansion and B\"{o}hm Trees}\label{sec:trees}
\subsection{A Commutation Theorem}
Deterministic Taylor normal forms are an adequate semantics for the
probabilistic $\lambda$-calculus, but more precisely they are known to
correspond to B\"{o}hm trees~\cite{ER06b}. We are now able to show
that this result extends to the probabilistic case.

\begin{definition}
  The sets of \Def{probabilistic B\"{o}hm trees} $\PBT_d$ and
  of \Def{probabilistic value trees} $\VPBT_d$ for $d \in \Nat$
  are defined inductively by induction on the \Def{depth} $d$:
  \begin{align*}
    \PBT_0 &= \{\bot : \emptyset \rightarrow [0,1]\} & \VPBT_0 &= \emptyset\\
    \PBT_{d+1} &= \mathbf{D}(\VPBT_{d+1}) & \VPBT_{d+1} &= \{\lambda \vec x.y\ \pbtone_1\ \cdots\ \pbtone_m \mid \pbtone_1,\dots,\pbtone_m \in \PBT_d\}
  \end{align*}
  where $\mathbf{D}(X)$ is the set of countable-support subprobability
  distributions on any set $X$, $\bot$ is the only subprobability
  distribution over the empty set, i.e. over $\VPBT_0$.
\end{definition}

\begin{definition}
  We define $\pbt_d(M)$ for $M \in \Lambda^+$ and $d\geq 0$, and $\vpbt_d(h)$ for $h \in \hnf$ and $d \geq 1$ by induction on the
  depth $d$ as follows:
  \begin{align*}
    \pbt_d(M) &= \vpbtone \mapsto \sum_{h \in \vpbt_d^{-1}(\vpbtone)} \hprob{M}{h}\\
    \vpbt_{d+1}(\lambda \vec x.y\ M_1\ \dots\ M_m) &= \lambda \vec x.y\ \pbt_d(M_1)\ \dots\ \pbt_d(M_m)
  \end{align*}
\end{definition}
Intuitively the B\"{o}hm tree of a term $M$ is the limit of its finite
B\"{o}hm approximants $\pbt_d(M)$. To avoid making the structure of Böhm
trees of infinite depth explicit, we simply write $\pbt(M)$ for the sequence
$(\pbt_d(M))_{d \in \Nat}$. In particular we say that $M$ and $N$
have \emph{the same B\"{o}hm tree} iff $\pbt_d(M) = \pbt_d(N)$ for every
$d \in \Nat$.

The definition of the Taylor expansion can easily be generalised to finite-depth B\"{o}hm trees. We simply define $\tay \pbtone$ for $\pbtone \in \PBT_d$ and $\tay \vpbtone$ for $\vpbtone \in \VPBT_{d+1}$ by:
\[\tay \pbtone = \sum_{\vpbtone \in \VPBT_d} \pbtone(\vpbtone)\tay \vpbtone \hskip 40pt \tay{(\lambda \vec x.y\ \pbtone_1\ \dots\ \pbtone_m)} = \lambda \vec x.\linapp{y}{!\tay{\pbtone_1}\,\dots\,!\tay{\pbtone_m}}\]
We extend this definition to infinite B\"{o}hm trees as follows: if $s \in \srt$
contains at most $d_s$ layers of nested multisets then for any
$M \in \Lambda^+$, $\tay{\pbt_d(M)}_s = \tay{\pbt_{d_s}(M)}_s$ for all
$d \geq d_s$, so $\tay{\pbt(M)}_s$ can be taken as $\tay{\pbt_{d_s}(M)}_s$. Then
the Taylor normal form of a term is exactly the Taylor expansion of
its B\"{o}hm tree.

\begin{theorem}\label{thm:tay_bohm}
  For all $M \in \Lambda^+$, $\nf{\tay M} = \tay{(\pbt(M))}$.
\end{theorem}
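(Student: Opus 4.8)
The plan is to reduce the statement to the already-proved decomposition of Taylor normal forms over head normal forms, namely Proposition~\ref{prop:nf_tay_hnf}, and match it against the inductive definition of $\pbt$. Since both sides of the equation $\nf{\tay M} = \tay{(\pbt(M))}$ are resource terms in $\rt$ determined coefficient-wise, it suffices to fix a simple resource term $s \in \srt$ and show $\nf{\tay M}_s = \tay{(\pbt(M))}_s$. The key quantitative fact I would exploit is that any $s \in \srt$ has a finite \emph{nesting depth} $d_s$ (the maximal number of nested multisets), and that this depth bounds how much of $M$'s reduction behaviour $s$ can ``see''. So the natural strategy is an induction on $d_s$, combined with a use of Proposition~\ref{prop:nf_tay_hnf} to pass from $M$ to its head normal forms.

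Concretely, I would proceed as follows. First, unfold the right-hand side: by the extension of the Taylor expansion to Böhm trees, $\tay{(\pbt(M))}_s = \tay{(\pbt_{d_s}(M))}_s$, and $\tay{(\pbt_{d_s}(M))} = \sum_{\vpbtone \in \VPBT_{d_s}} \pbt_{d_s}(M)(\vpbtone)\,\tay \vpbtone = \sum_{h \in \hnf} \hprob{M}{h}\,\tay{(\vpbt_{d_s}(h))}$, where the last equality uses the definition $\pbt_d(M) = \vpbtone \mapsto \sum_{h \in \vpbt_d^{-1}(\vpbtone)} \hprob{M}{h}$ to regroup the sum over head normal forms directly. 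Second, unfold the left-hand side with Proposition~\ref{prop:nf_tay_hnf}: $\nf{\tay M} = \sum_{h \in \hnf} \hprob{M}{h}\,\nf{\tay h}$. Comparing the two expressions, the theorem reduces to showing, for each head normal form $h = \lambda \vec x.y\,M_1\,\dots\,M_m$ and each $s$ of nesting depth $d_s$, that $\nf{\tay h}_s = \tay{(\vpbt_{d_s}(h))}_s$. Here $\nf{\tay h} = \lambda \vec x.\linapp{y}{!\nf{\tay{M_1}}\,\dots\,!\nf{\tay{M_m}}}$ (normalisation commutes with the head context since $h$ is already a head normal form, so only the arguments reduce), while $\tay{(\vpbt_{d_s}(h))} = \lambda \vec x.\linapp{y}{!\tay{\pbt_{d_s-1}(M_1)}\,\dots\,!\tay{\pbt_{d_s-1}(M_m)}}$. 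A term $s$ in the support of either side matches the outer shape $\lambda \vec x.\linapp{y}{\bag t_1\,\dots\,\bag t_m}$ for some poly-terms $\bag t_i$, each of whose elements has nesting depth at most $d_s - 1$; so by the induction hypothesis applied to each $M_i$ and each such element, $\nf{\tay{M_i}}$ and $\tay{\pbt_{d_s-1}(M_i)}$ agree on all relevant coefficients, and since the exponential $!(\cdot)$ and the application constructor are defined coefficient-wise (Proposition~\ref{prop:context_regular} and the definition of $!S$), the two sides agree at $s$. The base case $d_s = 0$ is vacuous for $\VPBT$ and trivial for $\PBT_0 = \{\bot\}$, or more simply, $d_s \geq 1$ always holds for any actual simple resource term, so the induction really starts at $d_s = 1$ where each $M_i$ contributes only through $\pbt_0 = \bot$, i.e. $!\tay\bot = !0 = [\,]$ with coefficient $1$, and one checks directly that a head normal form with no further structure is matched.

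The main obstacle I expect is bookkeeping the interaction between normalisation and the nesting-depth stratification: one must be sure that $\nf{\tay h}_s$ genuinely only depends on $\nf{\tay{M_i}}$ restricted to terms of depth $< d_s$, which in turn requires knowing that normalisation does not increase nesting depth of multisets — plausible since $\beta$-reduction in the resource calculus strictly decreases size and the complete-left-reduction machinery of Section~\ref{sec:rigid} (Propositions~\ref{prop:nf_Lred} and~\ref{prop:nf_limit_Lred}) lets one read off each coefficient after finitely many steps — but it needs to be stated carefully. A secondary subtlety is the regrouping step that turns $\sum_{\vpbtone} \pbt_d(M)(\vpbtone)\,\tay\vpbtone$ into $\sum_{h}\hprob{M}{h}\,\tay(\vpbt_d(h))$: this is just Fubini over the fibres of $\vpbt_d$, but one should note that $\hprob{M}{h}$ could be a countably infinite sum and that the distributions involved are subprobability distributions with countable support, so the rearrangement is justified by absolute convergence (all coefficients nonnegative, total mass $\leq 1$). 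Everything else is a routine structural induction leaning on results already established earlier in the paper.
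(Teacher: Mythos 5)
Your proposal is correct and takes essentially the same route as the paper: the paper's (very terse) proof is exactly a coefficient-wise induction on the nesting depth $d_s$, using Proposition~\ref{prop:nf_tay_hnf} to decompose $\nf{\tay M}$ over head normal forms and the definition of $\pbt_d$/$\vpbt_d$ to match the two sides, with the commutation of normalisation with head contexts and the exponential supplied by the Section~\ref{sec:rigid} machinery, just as you indicate. The only quibble is your parenthetical claim that $d_s \geq 1$ for every simple resource term (a term of the form $\lambda \vec x.y$ with no arguments contains no multisets), but this does not affect the argument since that degenerate base case is checked directly.
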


\begin{proof}
We prove $\nf{\tay M}_s = \tay{(\pbt(M))}_s$ by induction on $d_s$, using to Proposition~\ref{prop:nf_tay_hnf}.
\end{proof}

This theorem is important but it does not actually prove the
correspondence between B\"{o}hm trees and Taylor expansions: we still do not
know if Taylor expansion is injective on B\"{o}hm trees. In the
deterministic case this is simple to prove: to every deterministic
B\"{o}hm tree $\pbtone$ of depth $d$ we can associate a simple resource term
$s_\pbtone$ such that for all $M \in \Lambda$, $\bt_d(M) = \pbtone$ iff
$s_\pbtone \in \supp{\nf{\tay M}}$ (by associating $\lambda \vec
x.\linapp{y}{[s_{\pbtone_1}]\,\dots\,[s_{\pbtone_m}]}$ to $\lambda \vec
x.y\,\pbtone_1\,\dots\,\pbtone_m$). The situation is more complicated in the
probabilistic case, as Taylor expansions are no longer defined solely
by their supports. The rest of this article is devoted to proving
injectivity for the probabilistic Taylor expansion.
\subsection{B\"{o}hm Tests}
In order to better understand coefficients in probabilistic Taylor
expansions and to get our injectivity property, we use a notion
of \emph{testing} coming from the literature on labelled Markov
decision processes~\cite{BMOW05}.

\newcommand{\ev}[1]{\mathsf{ev}(#1)}
\newcommand{\prs}[2]{\mathsf{Pr}(#1,#2)}
\begin{definition}[B\"ohm Tests]
The classes of \emph{B\"ohm term tests} (BTTs) and \emph{B\"ohm hnf tests} (BHTs) are given
as follows, by mutual induction:
$$
\bttone,\btttwo::=\omega\mid \bttone\wedge \btttwo\mid \ev{\bhtone}\qquad\qquad
\bhtone,\bhttwo::=\omega\mid \bhtone\wedge \bhttwo\mid (\lambda x_1.\cdots.\lambda x_n.y)(\bttone^1,\ldots,\bttone^m)
$$
The probability of success of a BTT $\bttone$ on a term $M$ and the
probability of success of a BHT $\bhtone$ on an head-normal-form $h$,
indicated as $\prs{\bttone}{M}$ and $\prs{\bhtone}{h}$ respectively, are defined
as follows:

{\footnotesize
\begin{align*}
\prs{\bttone\wedge \btttwo}{M}&=\prs{\bttone}{M}\cdot\prs{\btttwo}{M}; &&& \prs{\omega}{M}&=\prs{\omega}{h}=1;\\
\prs{\bhtone\wedge \bhttwo}{h}&=\prs{\bhtone}{h}\cdot\prs{\bhttwo}{h}; &&& \prs{\ev{\bhtone}}{M}&=\sum_{h} \hprob{M}{h}\cdot\prs{\bhtone}{h};
\end{align*}
\vspace{-5pt}
\begin{align*}
\prs{(\lambda x_1.\cdots.\lambda x_n.y)(\bttone^1,\ldots,\bttone^m)}{\lambda x_1\cdots.\lambda x_n.y M_1\cdots M_m}
&=\Pi_{i=1}^m\prs{\bttone^i}{M_i};\\
\prs{(\lambda x_1.\cdots.\lambda x_n.y)(\bttone^1,\ldots,\bttone^m)}{h}
&=0, \mbox{otherwise}
\end{align*}}
\end{definition}
The following is the first step towards proving the main result of
this paper, as it characterises B\"{o}hm tree equality as equality
of families of real numbers.
\begin{theorem}\label{theo:BTvsTE}
Two terms $M$ and $N$ have the same B\"ohm trees iff
for every BTT $\bttone$ it holds that $\prs{M}{\bttone}=\prs{N}{\bttone}$.
\end{theorem}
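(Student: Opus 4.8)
The plan is to prove the two directions separately. The implication ``same B\"ohm trees $\Rightarrow$ all BTTs agree'' is a routine mutual induction on a BTT $\bttone$ and a BHT $\bhtone$, showing that if $d$ is the maximal nesting of $\ev{\cdot}$ in the test then $\prs{\bttone}{M}$ depends only on $\pbt_d(M)$ and $\prs{\bhtone}{h}$ depends only on $\vpbt_{d+1}(h)$. The cases $\omega$ and $\wedge$ are immediate (using that $\pbt_{d'}(M)$ is a truncation of $\pbt_d(M)$ when $d'\le d$), and the case $(\lambda x_1.\cdots.\lambda x_n.y)(\bttone^1,\dots,\bttone^m)$ follows because the probability vanishes unless $h$ has the right head shape, in which case it is $\prod_i\prs{\bttone^i}{h_i}$. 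The one idea is in the case $\ev{\bhtone}$: one regroups $\prs{\ev{\bhtone}}{M}=\sum_h\hprob{M}{h}\prs{\bhtone}{h}$ by the value tree of $h$, turning it into $\sum_{\vpbtone}\pbt_{d+1}(M)(\vpbtone)\prs{\bhtone}{\vpbtone}$, which manifestly depends only on $\pbt_{d+1}(M)$. Since every BTT has some finite depth, $\pbt(M)=\pbt(N)$ forces $\prs{\bttone}{M}=\prs{\bttone}{N}$ for all $\bttone$.

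For the converse I would argue by contraposition: it suffices to prove that for every $d$, distinct depth-$d$ B\"ohm approximants are separated by a BTT of depth $\le d$; this, with the easy direction, gives the theorem. For the induction it helps to introduce the map $\Phi_d$ sending a depth-$d$ approximant $B$ to the family $(\prs{\bttone}{B})_{\bttone}$ indexed by the countably many BTTs $\bttone$ of depth $\le d$ (where $\prs{\bttone}{B}:=\prs{\bttone}{M}$ for any $M$ with $\pbt_d(M)=B$, well defined by the previous paragraph). The inductive claim (IH$_d$) is simply that $\Phi_d$ is injective; this is trivial for $d=0$ since $\pbt_0$ is constant. Crucially, each ``test coordinate'' of $\Phi_d$ is a coordinate projection of the compact metrizable cube $[0,1]^{\{\bttone\,:\,\text{depth}\le d\}}$, and the $\omega$-coordinate is identically $1$; write $K_d$ for the closure of $\mathrm{Im}(\Phi_d)$, a compact metrizable space.

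For the step $d\to d+1$, assume all BTTs of depth $\le d+1$ agree on $M$ and $N$. The tests $\ev{(\lambda\vec x.y)(\omega,\dots,\omega)}$ (with $m$ copies of $\omega$) read off exactly the total mass that a depth-$(d+1)$ approximant puts on value trees of head shape $\sigma=(\lambda\vec x.y;m)$, so $\pbt_{d+1}(M)$ and $\pbt_{d+1}(N)$ agree on each head-shape class. Fix $\sigma$; using IH$_d$ (the argument trees that appear are genuine depth-$d$ approximants) to identify a value tree of head shape $\sigma$ with the tuple in $K_d^m$ of the $\Phi_d$-images of its arguments, the $\sigma$-parts $\mu_\sigma,\nu_\sigma$ of the two approximants become countable-support Borel measures on $K_d^m$, and regrouping as before gives, for depth-$\le d$ BTTs $\bttone^1,\dots,\bttone^m$,
\[
  \prs{\ev{(\lambda\vec x.y)(\bttone^1,\dots,\bttone^m)}}{M}
  \;=\;\int_{K_d^m}\prod_{i=1}^m\pi_{\bttone^i}(x_i)\;d\mu_\sigma(x_1,\dots,x_m),
\]
with $\pi_{\bttone^i}$ the $\bttone^i$-coordinate projection and using, on the right, that the test vanishes off head shape $\sigma$. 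Letting the $\bttone^i$ range over all depth-$\le d$ BTTs (and taking $\omega$ in all but one slot to isolate a single coordinate), and using $\pi_{\bttone\wedge\btttwo}=\pi_\bttone\pi_\btttwo$, the functions $\prod_i\pi_{\bttone^i}$ form a unital, multiplication-closed, point-separating family of continuous functions on $K_d^m$; by Stone--Weierstrass the algebra they span is dense in $C(K_d^m)$, and since the integrals above coincide for $M$ and $N$, the Riesz representation theorem forces $\mu_\sigma=\nu_\sigma$. Summing over $\sigma$ gives $\pbt_{d+1}(M)=\pbt_{d+1}(N)$, i.e. IH$_{d+1}$.

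The easy direction and all the bookkeeping are routine; the real work is the inductive step of the converse. The essential difficulty is that B\"ohm tests realize only \emph{products} of basic evaluations, never linear combinations, so bare point-separation of the test functionals does not pin down a distribution over depth-$d$ approximants — one genuinely faces a moment problem and must run a density argument. Making this rigorous requires a compact space on which all test functionals are continuous; passing to the closure $K_d$ of $\mathrm{Im}(\Phi_d)$ — where the test functionals become honest coordinate projections — is what supplies it, and getting this packaging right, together with the reduction of BHT-level conjunctions to products of BTT-level tests (so that the realized functionals are closed under multiplication), is the crux.
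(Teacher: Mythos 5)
Your argument is correct, but it follows a genuinely different route from the paper. The paper does not attack the statement directly: it recasts B\"ohm trees as a \emph{probabilistic tree transition system} $\mathbf{BT}$ (terms as linear states, head normal forms as branching states), proves that B\"ohm tree equality coincides with tree bisimilarity on $\mathbf{BT}$, and then shows that tree bisimilarity coincides with testing equivalence by encoding tree transition systems into labelled Markov chains and invoking the known result of~\cite{BMOW05} that conjunctive tests characterise probabilistic bisimilarity for LMCs. You instead prove the hard direction from scratch, stratifying by depth: your $\Phi_d$ embeds realizable depth-$d$ approximants into the compact cube indexed by depth-$\le d$ tests, and the inductive step treats the depth-$(d+1)$ approximant, restricted to a fixed head shape, as a countably supported measure on $K_d^m$ whose moments against the multiplication-closed, unital, point-separating family $\prod_i\pi_{\bttone^i}$ are exactly the test values; Stone--Weierstrass plus Riesz then pins the measure down, and injectivity of $\Phi_d$ lets you pull the conclusion back to value trees. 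In effect you re-prove, in this special case and in a self-contained way, the testing-characterises-bisimilarity content that the paper imports from~\cite{BMOW05}. What your route buys is self-containment and an explicit quantitative mechanism (a moment problem on a compactified space of approximants), and your compactification of $\mathrm{Im}(\Phi_d)$ together with the observation that $\pi_{\bttone\wedge\btttwo}=\pi_{\bttone}\pi_{\btttwo}$ extends to the closure is exactly the right packaging; what the paper's route buys is brevity and an explicit bridge to the LMC testing literature, at the price of an encoding step and reliance on an external theorem. The only points to make fully explicit in a polished write-up are the ones you already flag implicitly: well-definedness of $\Phi_d$ via the easy direction plus truncation coherence of the approximants, and the fact that the induction hypothesis need only cover approximants realizable by terms, since the argument trees occurring in supports are of that form.
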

Theorem~\ref{theo:BTvsTE} is quite nontrivial to
prove. Section~\ref{sect:tree_transition} is dedicated to
a proof of this result.

\section{Probabilistic Tree Transition Systems and Testing Equivalence}\label{sect:tree_transition}\label{sec:testing}
A \emph{tree transition system} is a tuple $\mathbf{T}=(Q,S,\mathcal{L},\mathcal{I},\delta,\gamma)$ such that
\begin{varitemize}
\item
  $Q$ and $S$ are sets of \emph{linear states} and of \emph{branching states}, respectively.
\item
  $\mathcal{L}$ and $\mathcal{I}$ are disjoint sets of labels.
\item
  The \emph{linear transition map} $\delta$ is a partial function
  from $Q\times\mathcal{L}$ to distributions over $S$;
\item
  The \emph{branching transition map} $\gamma$ is a partial function
  from $S\times\mathcal{I}$ to $Q^*$. 
\end{varitemize}
An example of a tree transition system is the one coming out of
B\"ohm trees as defined in the last section. In particular:
\begin{varitemize}
\item
  $Q$ is the set of terms, while $S$ is the set of head normal forms.
\item
  $\mathcal{L}=\{\mathsf{ev}\}$, while $\mathcal{I}=\{(\lambda x_1.\cdots\lambda x_n.y)\}$.
\item
  $\delta$ and $\gamma$ can be defined in the natural way.
\end{varitemize}
Let us call the resulting tree transition system $\mathbf{BT}$.

\newcommand{\linrel}[1]{#1^Q}
\newcommand{\brarel}[1]{#1^S}
\newcommand{\dirac}[1]{\mathit{DIRAC}(#1)}
A \emph{tree bisimulation relation} for a tree transition system
$\mathbf{T}=(Q,S,\mathcal{L},\mathcal{I},\delta,\gamma)$ is given
by two relations $\linrel{R}$ and $\brarel{R}$ such that the following
two contstraints both hold:
\begin{varitemize}
\item
  If $q\linrel{R} r$, then for every label $\ell\in\mathcal{L}$
  it holds that $\delta(q,\ell)$ is defined iff $\delta(r,\ell)$
  is defined, and in the latter case there is $I$ such that
  $$
  \delta(q,\ell)=\sum_{i\in I}p_i\cdot\dirac{q_i}
  \qquad
  \delta(r,\ell)=\sum_{i\in I}p_i\cdot\dirac{r_i}
  $$
  where for every $i\in I$ it holds that $q_i\brarel{T} r_i$.
\item
  If $s\brarel{R}t$, then for every label $\iota\in\mathcal{I}$
  it holds that $\gamma(s,\iota)$ is defined iff $\gamma(r,\iota)$
  is defined, and in the latter case there is $n$ such that
  $$
  \gamma(s,\iota)=(s_1,\ldots,s_n)
  \qquad
  \gamma(t,\iota)=(t_1,\ldots,t_n)
  $$
  where for every $1\leq i\leq n$ it holds that $s_i\linrel{R} t_i$.
\end{varitemize}
The (pointwise) largest bisimulation relation is called tree-bisimilarity,
and is indicated as $\sim_{\mathbf{T}}=(\linrel{\sim_{\mathbf{T}}},\brarel{\sim_{\mathbf{T}}})$.
\begin{lemma}
  Two terms $M$ and $N$ have the same B\"ohm Tree iff
  $M\sim_{\mathbf{BT}}N$.
\end{lemma}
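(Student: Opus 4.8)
The plan is to exhibit, for the $\Leftarrow$ direction, an explicit bisimulation whose linear relation is ``same B\"ohm tree'' and whose branching relation is ``same value tree at every depth'', and for the $\Rightarrow$ direction to read off equality of all finite approximants from the bisimulation clauses by induction on depth. Throughout I use the concrete shape of $\mathbf{BT}$: the linear map is total on $Q\times\{\mathsf{ev}\}$ with $\delta(M,\mathsf{ev})=\sum_{h\in\hnf}\hprob{M}{h}\cdot\dirac{h}$, while $\gamma(\lambda\vec x.y\,M_1\cdots M_m,\,\lambda\vec x.y)=(M_1,\dots,M_m)$ and $\gamma$ is undefined when the head of the hnf does not match the label. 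The only structural facts needed are that $\pbt_d(M)$ is the pushforward of the subdistribution $\delta(M,\mathsf{ev})$ along $\vpbt_d$, and that $\vpbt_d$ is obtained from $\vpbt_{d+1}$ by truncating each argument distribution; consequently the sets $\vpbt_{d}^{-1}(\vpbt_d(h))$ decrease with $d$ and their intersection is exactly the set of hnfs with the same value tree as $h$.

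For the $\Leftarrow$ direction, assume $M\linrel{\sim_{\mathbf{BT}}}N$ and prove simultaneously by induction on $d$ that $q\linrel{\sim_{\mathbf{BT}}}r$ implies $\pbt_d(q)=\pbt_d(r)$, and that $s\brarel{\sim_{\mathbf{BT}}}t$ implies $\vpbt_d(s)=\vpbt_d(t)$ for $d\geq 1$. The case $d=0$ is trivial since every $\pbt_0$ is $\bot$. For the step on linear states, the bisimulation clause supplies a common index set $I$ and weights $p_i$ with $\delta(q,\mathsf{ev})=\sum_{i\in I}p_i\dirac{q_i}$, $\delta(r,\mathsf{ev})=\sum_{i\in I}p_i\dirac{r_i}$ and $q_i\brarel{\sim_{\mathbf{BT}}}r_i$; by the inductive hypothesis $\vpbt_d(q_i)=\vpbt_d(r_i)$, and since these decompositions equal $\sum_h\hprob{q}{h}\dirac{h}$ and $\sum_h\hprob{r}{h}\dirac{h}$, pushing forward along $\vpbt_d$ gives $\pbt_d(q)=\pbt_d(r)$. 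For branching states, if $s=\lambda\vec x.y\,M_1\cdots M_m$ then $\gamma(s,\lambda\vec x.y)$ is defined, hence so is $\gamma(t,\lambda\vec x.y)$ — and no other label is matched by $s$, so none is matched by $t$ — forcing $t=\lambda\vec x.y\,N_1\cdots N_m$ with the \emph{same} head and arity and $M_i\linrel{\sim_{\mathbf{BT}}}N_i$; the inductive hypothesis yields $\pbt_{d-1}(M_i)=\pbt_{d-1}(N_i)$, so $\vpbt_d(s)=\vpbt_d(t)$. Instantiating at $q=M$, $r=N$ gives $\pbt_d(M)=\pbt_d(N)$ for all $d$, i.e. $M$ and $N$ have the same B\"ohm tree.

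For the $\Rightarrow$ direction, assume $M$ and $N$ have the same B\"ohm tree, and set $R^Q=\{(M',N')\mid\forall d,\ \pbt_d(M')=\pbt_d(N')\}$ and $R^S=\{(h,h')\mid\forall d,\ \vpbt_d(h)=\vpbt_d(h')\}$; it suffices to check $(R^Q,R^S)$ is a tree bisimulation, for then it is contained in $\sim_{\mathbf{BT}}$ and $M\,R^Q\,N$ concludes. The branching clause is exactly as above: $h\,R^S\,h'$ already forces the same head, same arity and $R^Q$-related arguments. For the linear clause, $\delta(M',\mathsf{ev})$ is always defined, so only the decomposition must be produced. First, $M'\,R^Q\,N'$ implies that the finite measures $\hprob{M'}{\cdot}$ and $\hprob{N'}{\cdot}$ on the countable set $\hnf$ assign equal mass to each $R^S$-equivalence class $C$: writing $C=\bigcap_d\vpbt_d^{-1}(\vpbt_d(h))$ for $h\in C$ with the sets decreasing in $d$, continuity from above of a finite measure gives $\sum_{h'\in C}\hprob{M'}{h'}=\lim_d\pbt_d(M')(\vpbt_d(h))=\lim_d\pbt_d(N')(\vpbt_d(h))=\sum_{h'\in C}\hprob{N'}{h'}$. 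Second, a matching lemma: if two countably supported subprobability measures $\mu,\nu$ on $\hnf$ agree on every $R^S$-class, then $\mu=\sum_{i\in I}p_i\dirac{q_i}$ and $\nu=\sum_{i\in I}p_i\dirac{r_i}$ with $q_i\,R^S\,r_i$; one argues class by class, where $\mu|_C=\sum_j a_j\dirac{g_j}$ and $\nu|_C=\sum_k b_k\dirac{g'_k}$ have equal total mass and one refines the two partitions of that mass to a common one, the pairing being unconstrained since all $g_j,g'_k$ are $R^S$-related. This is precisely the decomposition the bisimulation clause demands.

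The genuinely delicate point is this last paragraph: the matching lemma, together with the passage from the finite-depth equalities $\pbt_d(M')=\pbt_d(N')$ to equality of mass on the infinite-depth classes of $R^S$. This is the classical coincidence (in the style of Larsen and Skou) of the ``equal mass on equivalence classes'' and the ``decompose into weighted Diracs'' formulations of probabilistic bisimilarity, here over countable supports; the only analytic ingredient is continuity from above for the finite measures $\hprob{M'}{\cdot}$, and everything else is a routine induction on the depth $d$.
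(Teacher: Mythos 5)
Your proof is correct and takes essentially the same route as the paper, whose own argument is only a two-line sketch: one direction establishes that B\"ohm-tree equality is a tree bisimulation for $\mathbf{BT}$, the other shows by induction on the depth $d$ that bisimilar states have equal approximants $\pbt_d$ (resp. $\vpbt_d$) at every level. The details you add beyond the paper --- the continuity-from-above step passing from finite-depth equalities to equal mass on $R^S$-classes, and the class-by-class coupling (matching) lemma producing the common Dirac decomposition --- are exactly what the paper leaves implicit, and they are sound.
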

\begin{proof}
  One the one hand, we can prove that equality of B\"ohm trees is a
  tree bisimulation relation for $\mathbf{BT}$. On the other hand,
  we can prove that if $M\sim_{\mathbf{BT}}N$, then their B\"ohm trees
  are equal up to any level $n$, by induction on $n$.
\end{proof}
The rest of this section is devoted to proving that tree-bisimilarity can
be characterised by a notion of testing, which generalises the one
we saw for $\mathbf{BT}$ in the previous section. The set of
linear and branching tests are defined as follows
\begin{align*}
T_L,U_L&::=\omega\mid T_L\wedge U_L\mid \ell(T_B)\\
T_B,U_B&::=\omega\mid T_B\wedge U_B\mid \iota(T_L^1,\ldots,T_L^m)
\end{align*}
The probability of success of a linear test $T_L$ on a linear state $q$ and the
one of a branching test $T_B$ on a branching state $s$,
indicated as $\prs{T_L}{q}$ and $\prs{T_B}{s}$ respectively, are defined
as follows:
\begin{align*}
\prs{\omega}{q}&=\prs{\omega}{s}=1;\\
\prs{T_L\wedge U_L}{q}&=\prs{T_L}{q}\cdot\prs{U_L}{q};\\
\prs{T_B\wedge U_B}{s}&=\prs{T_B}{s}\cdot\prs{U_B}{s};\\
\prs{\ell(T_L)}{q}&=
\left\{
\begin{array}{ll}
  \sum_s \mathcal{D}(s)\cdot\prs{T_L}{s} & \mbox{if $\delta(q,\ell(T_L))=\mathcal{D}$} \\
  0 & \mbox{otherwise}\\
\end{array}
\right.\\
\prs{\iota(T_B^1,\ldots,T_B^n)}{s}&=
\left\{
\begin{array}{ll}
  \Pi_{i=1}^n \prs{T_B^i}{u_i} & \mbox{if $\gamma(s,\iota)=(u_1,\ldots,u_n)$} \\
  0 & \mbox{otherwise}\\
\end{array}
\right.
\end{align*}
Two linear states $q,r$ are said to be \emph{testing equivalent} iff for
every linear test $T_L$ we have that
$$
\prs{T_L}{q}=\prs{T_L}{r}
$$
Similarly for branching states. Testing equivalence is indicated with
$\eqsim_{\mathbf{T}}$, where $\mathbf{T}$ is the underlying tree transition
system. It consists of a pair of equivalence relations
$(\eqsim_{\mathbf{T}}^Q,\eqsim_{\mathbf{T}}^S)$
\begin{theorem}
  $\eqsim_{\mathbf{T}}$ and $\sim_{\mathbf{T}}$ coincide.
\end{theorem}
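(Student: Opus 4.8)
The statement asserts that the two pairs of relations $\sim_{\mathbf T}=(\linrel{\sim_{\mathbf T}},\brarel{\sim_{\mathbf T}})$ and $\eqsim_{\mathbf T}=(\linrel{\eqsim_{\mathbf T}},\brarel{\eqsim_{\mathbf T}})$ are equal, so the plan is to prove the two inclusions. For $\sim_{\mathbf T}\subseteq\eqsim_{\mathbf T}$ (tests do not separate bisimilar states) I would show, by a mutual induction on the structure of a test, that $q\linrel{\sim_{\mathbf T}}r$ implies $\prs{T_L}{q}=\prs{T_L}{r}$ for every linear test $T_L$ and $s\brarel{\sim_{\mathbf T}}t$ implies $\prs{T_B}{s}=\prs{T_B}{t}$ for every branching test $T_B$. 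The cases $\omega$ and $\wedge$ are immediate from the definitions; for $\ell(T_B)$, if $\delta(q,\ell)$ is undefined then so is $\delta(r,\ell)$ and both probabilities are $0$, while otherwise the bisimulation clause supplies a common index set $I$ with $\delta(q,\ell)=\sum_{i\in I}p_i\dirac{q_i}$, $\delta(r,\ell)=\sum_{i\in I}p_i\dirac{r_i}$ and $q_i\brarel{\sim_{\mathbf T}}r_i$, whence $\prs{\ell(T_B)}{q}=\sum_{i\in I}p_i\prs{T_B}{q_i}=\sum_{i\in I}p_i\prs{T_B}{r_i}=\prs{\ell(T_B)}{r}$ by the induction hypothesis (all series converge since $\sum_i p_i\le 1$); the case $\iota(T_L^1,\dots,T_L^n)$ is symmetric, using that $\gamma(s,\iota)$ and $\gamma(t,\iota)$ are both undefined or have the same arity and componentwise related outputs.

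For the converse inclusion, since $\sim_{\mathbf T}$ is the largest tree bisimulation it suffices to check that the pair $(\linrel{\eqsim_{\mathbf T}},\brarel{\eqsim_{\mathbf T}})$ is itself a tree bisimulation. The definedness parts of the two clauses are discharged by the tests $\ell(\omega)$ and $\iota(\omega,\dots,\omega)$ (identifying, harmlessly, an undefined $\delta(q,\ell)$ with the zero distribution, which no test tells apart). The branching clause is then easy: if $s\brarel{\eqsim_{\mathbf T}}t$ and $\gamma(s,\iota)=(s_1,\dots,s_n)$, then $\prs{\iota(\omega,\dots,\omega)}{s}=1$ forces $\gamma(t,\iota)$ to have the same arity $n$, say $(t_1,\dots,t_n)$; and if $s_i$ and $t_i$ were not $\brarel{\eqsim_{\mathbf T}}$-related for some $i$, a linear test $T_L$ with $\prs{T_L}{s_i}\neq\prs{T_L}{t_i}$, placed in the $i$-th slot of $\iota(\omega,\dots,\omega)$, would distinguish $s$ from $t$ --- a contradiction.

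The real work is the linear clause. Assume $q\linrel{\eqsim_{\mathbf T}}r$ with $\mathcal D_q:=\delta(q,\ell)$ and $\mathcal D_r:=\delta(r,\ell)$ defined (and of equal total mass, by the test $\ell(\omega)$); I must produce a common index set $I$ with $\mathcal D_q=\sum_{i\in I}p_i\dirac{q_i}$, $\mathcal D_r=\sum_{i\in I}p_i\dirac{r_i}$ and $q_i\brarel{\eqsim_{\mathbf T}}r_i$. The crux is to show that $\mathcal D_q$ and $\mathcal D_r$ assign the same mass to every $\brarel{\eqsim_{\mathbf T}}$-equivalence class $C$. From $q\linrel{\eqsim_{\mathbf T}}r$ and the test $\ell(T_B^1\wedge\dots\wedge T_B^k)$ one gets $\int\prod_{j=1}^{k}\prs{T_B^j}{\cdot}\,d\mathcal D_q=\int\prod_{j=1}^{k}\prs{T_B^j}{\cdot}\,d\mathcal D_r$ for every finite family of branching tests. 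Pushing $\mathcal D_q$ and $\mathcal D_r$ forward along the map $s\mapsto(\prs{T_B}{s})_{T_B}$ into the compact Hausdorff space $X:=[0,1]^{B}$, where $B$ is the set of branching tests, yields two (Radon, since countably supported) measures integrating every monomial in the coordinates to the same value; polynomials being uniformly dense in $C(X)$ by Stone--Weierstrass, the two image measures coincide, and since a $\brarel{\eqsim_{\mathbf T}}$-class is exactly a fibre of this map, $\mathcal D_q(C)=\mathcal D_r(C)$ follows. It then remains to couple: inside each class $C$, refine the two partitions of an interval of length $\mathcal D_q(C)=\mathcal D_r(C)$ determined by the restrictions of $\mathcal D_q$ and $\mathcal D_r$; collecting the pieces over all (countably many relevant) classes gives $I$, with each pair $(q_i,r_i)$ lying in one class --- hence $\brarel{\eqsim_{\mathbf T}}$-related --- and with the correct marginals by construction. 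The main obstacle is exactly this step: converting the scalar test identities into the measure-level assertion that $\mathcal D_q$ and $\mathcal D_r$ agree on equivalence classes (a Larsen--Skou-type phenomenon), and then repackaging it into the coupled presentation that the definition of tree bisimulation demands.
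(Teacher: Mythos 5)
Your proof is correct in outline, but it takes a genuinely different route from the paper. The paper does not argue directly on tree transition systems: it encodes $\mathbf{T}$ into a labelled Markov chain $\mathbf{T}^*$ (branching states become ordinary states reached through extra deterministic labels $(\iota,n)$ and $(\iota,n,m)$), shows that tests and bisimulations transfer back and forth across this encoding, and then invokes the known coincidence of testing equivalence and bisimilarity for labelled Markov chains from~\cite{BMOW05}; all the analytic content is outsourced to that citation. You instead give a self-contained argument: soundness by induction on tests, and for completeness you check that testing equivalence is itself a tree bisimulation, the crux being that the outgoing distributions of two testing-equivalent linear states assign equal mass to every branching-testing-equivalence class --- which you obtain by a moment argument (conjunctions of branching tests yield all mixed moments; Stone--Weierstrass and Riesz uniqueness applied to the countably supported pushforwards on $[0,1]^{B}$), followed by an interval-refinement coupling to produce the common index set the definition of bisimulation demands. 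This amounts to re-proving, in the tree setting, the very theorem the paper imports, so your route is longer but self-contained and makes the analytic core explicit, whereas the paper's route is shorter but rests on the (only sketched) transfer of both relations across the encoding. Two small points to tighten: in the branching clause the components of $\gamma(s,\iota)$ are \emph{linear} states, so the relation you need there (and in fact use) is $\linrel{\eqsim_{\mathbf{T}}}$, not $\brarel{\eqsim_{\mathbf{T}}}$; and identifying an undefined $\delta(q,\ell)$ with a zero-mass distribution is not entirely innocuous under the paper's literal definitions (no test separates the two, but the definedness clause of bisimulation does), so you should either assume transitions carry nonzero mass or note explicitly that the paper's own encoding of $\mathbf{T}$ into $\mathbf{T}^*$ makes the same identification.
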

\begin{proof}
  The idea is to make heavy use of the results from~\cite{BMOW05},
  which relate bisimilarity and testing equivalence. We are however
  a little detour which needs to be taken, due to the fact that
  the results from~\cite{BMOW05} are formulated for Labelled Markov
  Chains (LMCs), while we need the same result we need here is for tree
  transition system. The way we will proceed consists in defining,
  for every tree transition system $\mathbf{T}$ an equivalent
  LMC $\mathbf{T}^*$, then proving that both bisimilarity and testing
  equivalent in $\mathbf{T}$ and $\mathbf{T}^*$ coincide.
  Given a tree transition system
  $\mathbf{T}=(Q,S,\mathcal{L},\mathcal{I},\delta,\gamma)$, we
  define the LMC $\mathbf{T}^*$ as the triple
  $(Q\uplus S,\eta,\mathcal{L}\uplus\mathcal{I}^*)$
  where $\mathcal{I}^*=\mathcal{I}\times\Nat\times\Nat\cup\mathcal{I}\times\Nat$ and:
  \begin{itemize}
  \item
    On the states from $Q$, $\eta$ behaves like $\delta$;
  \item
    For every state $s$ in $S$, we have that
    \begin{align*}
      \eta(s,(\iota,n))&=\dirac{s}\qquad\mbox{if $\gamma(s,\iota)=(q_1,\ldots,q_n)$}\\
      \eta(s,(\iota,n,m))&=\dirac{q_m}\qquad\mbox{if $\gamma(s,\iota)=(q_1,\ldots,q_n)$ and $1\leq m\leq n$}
    \end{align*}
  \item
    In all the other cases, $\eta$ returns the empty distribution.
  \end{itemize}
  The results from~\cite{BMOW05} tell us that testing equivalence and
  bisimilarity coincide in $\mathbf{T}^*$, where tests now have the following
  form:
  $$
  T::=\omega\mid T\wedge T\mid a(T)
  $$
  and $a\in\mathcal{L}\uplus\mathcal{I}^*$.
  The rest of the proof is thus organised as follows:
  \begin{itemize}
  \item
    We can first of all prove that $\eqsim_{\mathbf{T}}$ and
    $\eqsim_{\mathbf{T}^*}$ coincide. This can be proved by showing
    that any $\mathbf{T}$-test can be turned into a
    $\mathbf{T}^*$-test having the same probability of
    success, and vice versa. The two mappings we need can
    be given as follows, by induction on the structure of
    tests:
  \item
    We can then prove that $\sim_{\mathbf{T}}$ and $\sim_{\mathbf{T}^*}$
    coincide, by proving that each of the two relations is
    a bisimulation in the sense of the other.
  \end{itemize}
\end{proof}

\begin{proposition}
  For all BTT context $T[\ ]$ with a hole in BHT, for all $M \in \Lambda^+$ there exists a probability distribution $(p_h)_{h \in \hnf}$ such that for all BHT $U$, $\prs{T[U]}{M} = \prs{T[\omega]}{M} \sum_{h \in \hnf} p_h\prs{U}{h}$.
\end{proposition}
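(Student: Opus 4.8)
The plan is to prove the statement by structural induction on the test context, carried out simultaneously with the companion claim for \emph{BHT contexts}: for every BHT context $u[\ ]$ (a BHT with exactly one hole, necessarily in BHT position) and every $h \in \hnf$ there is a probability distribution $(p_{h'})_{h' \in \hnf}$ such that $\prs{u[U]}{h} = \prs{u[\omega]}{h}\cdot\sum_{h' \in \hnf} p_{h'}\,\prs{U}{h'}$ for every BHT $U$. Reading off the two grammars, a BTT context has the shape $T_1[\ ] \wedge V$ or $V \wedge T_1[\ ]$ with $T_1[\ ]$ a BTT context, or $\ev{u[\ ]}$ with $u[\ ]$ a BHT context; a BHT context is either the hole $[\ ]$, or $u_1[\ ] \wedge v$ or $v \wedge u_1[\ ]$ with $u_1[\ ]$ a BHT context, or $(\lambda \vec x. y)(V^1,\dots,V^{i-1},T_i[\ ],V^{i+1},\dots,V^m)$ with $T_i[\ ]$ a BTT context. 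Here $V, V^j, v$ denote ordinary (hole-free) tests. The induction is on the size of the context.

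For the hole $[\ ]$ on $h \in \hnf$, since $\prs{[\ ][\omega]}{h} = \prs{\omega}{h} = 1$ we simply put $p_h = 1$ and $p_{h'} = 0$ for $h' \ne h$, so that $\sum_{h'} p_{h'}\prs{U}{h'} = \prs{U}{h} = \prs{[\ ][U]}{h}$. The conjunction cases follow immediately from multiplicativity of $\prs{\cdot}{\cdot}$ over $\wedge$: the distribution provided by the inductive hypothesis for $T_1[\ ]$ (resp. $u_1[\ ]$) works unchanged, and the additional factor $\prs{V}{M}$ (resp. $\prs{v}{h}$) is absorbed into the $\prs{T[\omega]}{M}$ (resp. $\prs{u[\omega]}{h}$) prefactor. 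For the case $(\lambda \vec x. y)(\dots,T_i[\ ],\dots)$ on a head normal form $h$: if $h$ is not of the matching shape $\lambda \vec x. y\, M_1 \cdots M_m$ then $\prs{(\lambda \vec x. y)(\dots)}{h} = 0$ whatever is plugged in, so in particular $\prs{u[\omega]}{h} = 0$ and any fixed probability distribution works; otherwise $\prs{u[U]}{h} = \prs{T_i[U]}{M_i}\cdot\prod_{j \ne i}\prs{V^j}{M_j}$, and the inductive hypothesis for the BTT context $T_i[\ ]$ on the term $M_i$ gives a probability distribution $(q_{h'})$ with $\prs{T_i[U]}{M_i} = \prs{T_i[\omega]}{M_i}\sum_{h'} q_{h'}\prs{U}{h'}$; this $(q_{h'})$ is the desired distribution, since $\prs{T_i[\omega]}{M_i}\cdot\prod_{j \ne i}\prs{V^j}{M_j} = \prs{u[\omega]}{h}$.

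The only case requiring an idea is $T[\ ] = \ev{u[\ ]}$. Here $\prs{T[U]}{M} = \sum_h \hprob{M}{h}\,\prs{u[U]}{h}$, a countable sum because only countably many $h$ have $\hprob{M}{h} > 0$ (Proposition~\ref{prop:reduc_unique}). Applying the inductive hypothesis for $u[\ ]$ at each such $h$ yields probability distributions $(p^h_{h'})_{h'}$ with $\prs{u[U]}{h} = \prs{u[\omega]}{h}\sum_{h'} p^h_{h'}\prs{U}{h'}$, whence \[\prs{T[U]}{M} = \sum_h \hprob{M}{h}\,\prs{u[\omega]}{h}\sum_{h'} p^h_{h'}\,\prs{U}{h'}.\] If $\prs{T[\omega]}{M} = \sum_h \hprob{M}{h}\prs{u[\omega]}{h} > 0$, set $p_{h'} := \frac{1}{\prs{T[\omega]}{M}}\sum_h \hprob{M}{h}\,\prs{u[\omega]}{h}\,p^h_{h'}$; this is the convex combination of the $(p^h_{\cdot})$ with weights $\hprob{M}{h}\prs{u[\omega]}{h}$, renormalised, hence again a countably-supported probability distribution, and by construction $\prs{T[U]}{M} = \prs{T[\omega]}{M}\sum_{h'} p_{h'}\prs{U}{h'}$. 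If instead $\prs{T[\omega]}{M} = 0$, then $\prs{u[\omega]}{h} = 0$ for every $h$ with $\hprob{M}{h} > 0$, so $\prs{u[U]}{h} = 0$ for all such $h$ and all $U$, giving $\prs{T[U]}{M} = 0$; the identity then holds with any fixed probability distribution.

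The main obstacle is precisely this $\ev{u[\ ]}$ step: one must see that the per-head-normal-form distributions returned by the inductive hypothesis can be amalgamated into a single distribution by taking the $\hprob{M}{h}\prs{u[\omega]}{h}$-weighted average, and separately dispatch the degenerate case in which that weighting has total mass $0$, where both sides of the claimed identity vanish. All other cases amount to propagating the distribution through the $\wedge$- and application-constructors using multiplicativity of test success probabilities.
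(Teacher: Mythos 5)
Your proof is correct and follows essentially the same route as the paper: a mutual induction on BTT and BHT contexts, with the hole/conjunction/application cases handled by multiplicativity and the $\ev{\cdot}$ case by combining the per-head-normal-form distributions weighted by $\hprob{M}{h}\prs{T[\omega]}{h}$. The only difference is that you spell out the final amalgamation explicitly (renormalising by $\prs{T[\omega]}{M}$ and dispatching the total-mass-zero case), a step the paper's proof leaves implicit after its last displayed equation.
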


\begin{proof}
We prove this result, as well as its equivalent for head normal forms and BHT contexts, by induction on test contexts.

For BHT contexts, let $h_0 \in \hnf$. For the empty context we have $\prs{U}{h_0} = \prs{\omega}{h_0}\prs{U}{h_0}$. For a product $T[\ ] \wedge T'$ we apply the induction hypothesis to $T[\ ]$ to get $(p_h)$ and we have
\begin{align*}
  \prs{T[U] \wedge T'}{h_0} &= \prs{T[U]}{h_0}\prs{T'}{h_0}\\
  &= \prs{T[\omega]}{h_0}\prs{T'}{h_0} \sum_{h \in \hnf} p_h\prs{U}{h}\\
  &= \prs{T[\omega] \wedge T'}{h_0} \sum_{h \in \hnf} p_h\prs{U}{h}.
\end{align*}
The same goes if the hole is on the right side of a conjunction. Finally for a test context of the form $(\lambda \vec x.y)(T^1,\dots,T^i[\ ],\dots,T^m)$, either $\prs{(\lambda \vec x.y)(T^1,\dots,T^i[U],\dots,T^m)}{h_0} = \prs{(\lambda \vec x.y)(T^1,\dots,T^i[\omega],\dots,T^m)}{h_0} = 0$ if $h_0$ does not have the right shape, or $h_0 = \lambda \vec x.y\ M_1\ \dots\ M_m$, the induction hypothesis applied to $T^i[\ ]$ and $M_i$ gives some $(p_h)$, and we have
\begin{align*}
  \prs{(\lambda \vec x.y)(T^1,\dots,T^i[U],\dots,T^m)}{h_0} &= \prs{T^i[U]}{M_i}\prod_{j \neq i}\prs{T^j}{M_j}\\
  &= \prs{T^i[\omega]}{M_i}\prod_{j \neq i}\prs{T^j}{M_j} \sum_{h \in \hnf} p_h\prs{U}{h}\\
  &= \prs{(\lambda \vec x.y)(T^1,\dots,T^i[\omega],\dots,T^m)}{h_0} \sum_{h \in \hnf} p_h\prs{U}{h}.
\end{align*}

For BTT contexts the cases of $\omega$ and conjunction are similar. The interesting case is that of the evaluation. Given a BTT context $\ev{T[\ ]}$ and $M \in \Lambda^+$ we apply the induction hypothesis to $T[\ ]$ and \emph{every head normal form $h$}, or at least any $h$ such that $\hprob{M}{h} \neq 0$, to get distributions $(p_{h'}^h)_{h' \in \hnf}$. Then we have
\begin{align*}
  \prs{\ev{T[U]}}{M} &= \sum_{h \in \hnf} \hprob{M}{h} \prs{T[U]}{h}\\
  &= \sum_{h \in \hnf} \hprob{M}{h} \prs{T[\omega]}{h} \sum_{h' \in \hnf} p_{h'}^h\prs{U}{h'}
\end{align*}
\end{proof}

\section{Implementing Tests as Resource Terms}\label{sec:taylor_testing}
There is a very tight correspondence between simple resource terms and
B\"{o}hm tests, but this correspondence does not hold for \emph{all}
B\"{o}hm tests. Simple resource terms can be seen as a particular
class of B\"{o}hm tests.

\begin{definition}
The classes of \emph{resource B\"ohm term tests} (rBTTs) and \emph{resource B\"ohm hnf tests} (rBHTs) are given
as follows, by mutual induction:
$$
\bttone,\btttwo::=\omega\mid \bttone\wedge \btttwo\mid \ev{\bhtone}\qquad
\bhtone::=(\lambda x_1.\cdots.\lambda x_n.y)(\bttone^1,\ldots,\bttone^m)
$$
\end{definition}

\begin{definition}
  For every rBTT $\bttone$ we define a simple poly-term $\bag s_{\bttone}$ and for every rBHT $\bhtone$ we define a simple term $s_{\bhtone}$ in the following way:
  \begin{align*}
    \bag s_\omega = [\ ] \quad \bag s_{\bttone \wedge \btttwo} = \bag s_{\bttone} \cdot \bag s_{\btttwo} \quad \bag s_{\ev{\bhtone}} = [s_{\bhtone}] \quad
    s_{(\lambda \vec x.y)(\bttone^1,\dots,\bttone^m)} = \lambda \vec x.\linapp{y}{\bag s_{\bttone^1}\ \dots\ \bag s_{\bttone^m}}
  \end{align*}
\end{definition}

The similarity between simple resource terms and resource B\"{o}hm tests is more than structural: the probability of success of a resource B\"{o}hm test is actually given by a coefficient in the Taylor normal form.

\begin{proposition}
  \begin{enumerate}
    \item For every rBTT $\bttone$ and $M \in \Lambda^+$, $!\nf{\tay M}_{\bag s_{\bttone}} = \frac{\prs{\bttone}{M}}{\mcoef{\bag s_{T_t}}}$.
    \item For every rBHT $\bhtone$ and $h \in \hnf$, $\nf{\tay h}_{s_{\bhtone}} = \frac{\prs{\bhtone}{h}}{\mcoef{s_{T_h}}}$.
  \end{enumerate}
\end{proposition}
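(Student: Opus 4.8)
The plan is to prove the two statements simultaneously by a single mutual induction on the structure of the resource B\"{o}hm test, since rBTTs and rBHTs are defined by mutual induction and the evaluation case of an rBTT passes through rBHTs while the application case of an rBHT passes through rBTTs. The key bridge is Proposition~\ref{prop:nf_tay_hnf}, which expresses $\nf{\tay M}$ as the $\hprob{M}{h}$-weighted sum of the $\nf{\tay h}$ over head normal forms $h$; this is exactly the algebraic counterpart of the clause $\prs{\ev{\bhtone}}{M}=\sum_h \hprob{M}{h}\cdot\prs{\bhtone}{h}$, so evaluation will match up coefficient-by-coefficient with head normal forms once one knows the statement for rBHTs at lower structural complexity.

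\textbf{The rBHT case.} For an rBHT $\bhtone=(\lambda\vec x.y)(\bttone^1,\dots,\bttone^m)$ one has $s_{\bhtone}=\lambda\vec x.\linapp{y}{\bag s_{\bttone^1}\ \dots\ \bag s_{\bttone^m}}$. If $h$ does not have the right head shape then $\prs{\bhtone}{h}=0$ and correspondingly $s_{\bhtone}\notin\supp{\nf{\tay h}}$ (the normal form of the Taylor expansion of a head normal form with a different head variable, or a different number of leading abstractions or arguments, cannot contain $s_{\bhtone}$), so both sides vanish. If $h=\lambda\vec x.y\ N_1\ \dots\ N_m$, then by Theorem~\ref{thm:tay_bohm} and the definition of the Taylor expansion of a B\"{o}hm tree, $\nf{\tay h}=\lambda\vec x.\linapp{y}{!\nf{\tay{N_1}}\,\dots\,!\nf{\tay{N_m}}}$, and reading off the coefficient of $s_{\bhtone}$ factors as a product over $i=1,\dots,m$ of the coefficient of $\bag s_{\bttone^i}$ in $!\nf{\tay{N_i}}$. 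By the induction hypothesis for each rBTT $\bttone^i$ this product equals $\prod_i \prs{\bttone^i}{N_i}/\mcoef{\bag s_{\bttone^i}}=\prs{\bhtone}{h}/\prod_i\mcoef{\bag s_{\bttone^i}}$, and since $\mcoef{s_{\bhtone}}=\prod_i\mcoef{\bag s_{\bttone^i}}$ by definition of the multinomial coefficient (abstractions are transparent and $\mcoef{\linapp{s}{\bag t}}=\mcoef{s}\mcoef{\bag t}$ with $\mcoef y=1$), this is exactly the claim. One has to be a little careful that the $N_i$ are probabilistic $\lambda$-terms rather than head normal forms, which is why the induction is phrased to give statement~(1) for arbitrary $M\in\Lambda^+$.

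\textbf{The rBTT case.} For $\omega$, $\bag s_\omega=[\ ]$, $\mcoef{[\ ]}=1$, $\prs{\omega}{M}=1$, and $!\nf{\tay M}_{[\ ]}=1$ (the empty multiset appears with coefficient $\frac{1}{0!}=1$ in any exponential $!S$), so both sides are $1$. For a conjunction $\bttone\wedge\btttwo$ with $\bag s_{\bttone\wedge\btttwo}=\bag s_{\bttone}\cdot\bag s_{\btttwo}$, one uses that the coefficient of a multiset $\bag s\cdot\bag t$ in $!S$ relates multiplicatively to the coefficients of $\bag s$ and $\bag t$ up to the multinomial bookkeeping recorded in $\mcoef{\cdot}$; combined with $\prs{\bttone\wedge\btttwo}{M}=\prs{\bttone}{M}\cdot\prs{\btttwo}{M}$ and the induction hypotheses, one gets the claim (this is the computational content of $!S$ being the free commutative comonoid, and it is precisely the same bookkeeping that appears in Proposition~\ref{prop:exp_regular}). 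For evaluation $\ev{\bhtone}$ with $\bag s_{\ev{\bhtone}}=[s_{\bhtone}]$, one has $!\nf{\tay M}_{[s_{\bhtone}]}=\nf{\tay M}_{s_{\bhtone}}$ (again the exponential contributes coefficient $1$ to the singleton layer), which by Proposition~\ref{prop:nf_tay_hnf} equals $\sum_h\hprob{M}{h}\nf{\tay h}_{s_{\bhtone}}$; applying statement~(2) to $s_{\bhtone}$ this is $\sum_h\hprob{M}{h}\prs{\bhtone}{h}/\mcoef{s_{\bhtone}}=\prs{\ev{\bhtone}}{M}/\mcoef{\bag s_{\ev{\bhtone}}}$, using $\mcoef{[s_{\bhtone}]}=\mcoef{s_{\bhtone}}$.

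\textbf{Main obstacle.} I expect the only real subtlety to be the conjunction case: one must check that the multinomial coefficient $\mcoef{\cdot}$ interacts with multiset concatenation in exactly the way needed so that the bare product $\prs{\bttone}{M}\cdot\prs{\btttwo}{M}$ comes out with the correct single factor $1/\mcoef{\bag s_{\bttone}\cdot\bag s_{\btttwo}}$ rather than $1/(\mcoef{\bag s_{\bttone}}\mcoef{\bag s_{\btttwo}})$ — the discrepancy being exactly the number of ways to merge the two multisets, which is counted by the exponential's internal combinatorics. Making this precise requires recalling that in a regular term $!S$ the coefficient of $\bag u$ equals $1/\mcoef{\bag u}$ (which follows from Proposition~\ref{prop:exp_regular} applied to $\nf{\tay M}$, itself regular by the normal-form theory of Section~\ref{sec:rigid}), so that in fact every coefficient on the left-hand side of both statements is forced and the equalities reduce to the purely combinatorial identity $\prs{\bttone}{M}=\mcoef{\bag s_{\bttone}}\cdot(!\nf{\tay M})_{\bag s_{\bttone}}$ being stable under concatenation. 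Everything else is routine unfolding of definitions.
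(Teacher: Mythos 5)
Your overall plan coincides with the paper's own proof: a mutual induction on tests in which the evaluation case is discharged by Proposition~\ref{prop:nf_tay_hnf}, the rBHT case by factorising $\nf{\tay h}$ along the head shape of $h$ (the paper gets the identity $\nf{\tay h}_{s_\bhtone}=\prod_i\,!\nf{\tay{N_i}}_{\bag s_{\bttone^i}}$ directly from the normalisation theory rather than via Theorem~\ref{thm:tay_bohm}, but your detour through that theorem is legitimate since it is proved earlier), and the multiset-merging combinatorics handled by the exponential. The only structural difference is that the paper first quotients rBTTs by associativity, commutativity and $\omega\wedge\bttone\simeq\bttone$, so that every rBTT is $\omega$ or a flat conjunction $\ev{\bhtone_1}\wedge\dots\wedge\ev{\bhtone_k}$, and then computes the coefficient in one stroke via $!S_{\bag s_\bttone}=\frac{1}{\prod_u\bag s_\bttone(u)!}\prod_i S_{s_{\bhtone_i}}$, whereas you keep binary conjunctions and therefore need the corresponding binary merging statement.

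That said, the way you propose to make the conjunction step precise --- the step you yourself flag as the main obstacle --- is wrong. You assert that $\nf{\tay M}$ is regular, so that Proposition~\ref{prop:exp_regular} applies and the coefficient of $\bag u$ in $!\nf{\tay M}$ is $1/\mcoef{\bag u}$. Regularity is a property of \emph{explicit} probabilistic resource terms: it holds for $\nf{\rtay M}$ (Corollary~\ref{cor:nf_rtay}) but not for the ordinary Taylor normal form $\nf{\tay M}$, whose coefficients carry the probabilities --- this is exactly the point made at the start of Section~\ref{sec:taylor}. Indeed, if your claim were true, the proposition you are proving would force $\prs{\bttone}{M}=1$ whenever $\bag s_\bttone\in\supp{!\nf{\tay M}}$, which is false (take $M=I\oplus_{1/2}\Omega$ and the test $\ev{\bhtone}$ with $\bhtone=(\lambda x.x)()$, whose success probability is $\frac12$). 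So the coefficients are not ``forced'' and Proposition~\ref{prop:exp_regular} cannot be applied to $\nf{\tay M}$. The correct justification needs no regularity at all: for any $S\in\rt$ one computes directly from the definition of the exponential that $!S_{\bag u}=\frac{1}{\prod_v\bag u(v)!}\prod_v S_v^{\bag u(v)}$, hence $\bag u\mapsto\mcoef{\bag u}\cdot{!S_{\bag u}}$ is multiplicative under multiset concatenation; this single combinatorial identity (the same counting as in the proof of Proposition~\ref{prop:exp_regular}, but used without any regularity hypothesis) settles your binary conjunction case as well as the paper's flattened version, and with it the rest of your argument goes through as written.
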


\begin{proof}
We reason by induction on tests. Observe that these can be considered
modulo commutativity and associativity of the conjunction and modulo
$\omega \wedge \bttone \simeq \bttone$: these equivalences preserve both the
results of testing and the associated simple resource (poly-)terms.
Then every rBTT is equivalent either to $\omega$ or to a conjunction
$\bttone = \ev{\bhtone_1} \wedge \dots \wedge \ev{\bhtone_k}$. In the first case we
always have $!\nf{\tay M}_{[\ ]} = 1$. In the second case just like in
the proof of regularity of the exponential
(Proposition~\ref{prop:exp_regular}) for any $M \in \Lambda^+$ we have
$!\nf{\tay M}_{\bag s_\bttone} = \frac{1}{\prod_{u \in \srt} \bag
s_\bttone(u)!} \prod_{i=1}^k \nf{\tay M}_{s_{\bhtone_i}}$. To conclude we want to
show that $\nf{\tay M}_{s_{\bhtone_i}}
= \frac{\prs{\ev{\bhtone_i}}{M}}{\mcoef{s_{\bhtone_i}}}$ for all $i \leq k$. We
have by definition $\prs{\ev{\bhtone_i}}{M}
= \sum_{h \in \hnf} \hprob{M}{h}\cdot\prs{\bhtone_i}{h}$, and
Proposition~\ref{prop:nf_tay_hnf} gives $\nf{\tay M}_{s_{\bhtone_i}}
= \sum_{h \in \hnf} \hprob{M}{h}\cdot\nf{\tay h}_{s_{\bhtone_i}}$, so we
conclude by induction hypothesis on $\bhtone_i$.
Now given a rBHT $\bhtone = (\lambda \vec x.y)(\bttone^1,\dots,\bttone^m)$ and
$h \in \hnf$ we have either $\nf{\tay h}_{s_\bhtone} = \prod_{i=1}^m
!\nf{\tay M_i}_{\bag s_{\bttone^i}}$ and $\prs{\bhtone}{h}
= \prod_{i=1}^m \prs{\bttone^i}{M_i}$ if $h$ is of the form $\lambda \vec
x.y\ M_1\ \dots\ M_m$, in which case we conclude by induction
hypothesis, or $\nf{\tay h}_{s_\bhtone} = \prs{\bhtone}{h} = 0$ otherwise.
\end{proof}

With this result, we completely characterise Taylor normal forms by resource B\"{o}hm tests.

\begin{corollary}\label{cor:tayvsrTE}
  Two terms $M$ and $N$ have the same Taylor normal form iff
  for every rBTT $\bttone$ it holds that $\prs{M}{\bttone}=\prs{N}{\bttone}$.
\end{corollary}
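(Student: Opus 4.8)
The plan is to reduce everything to the proposition just established, which says that $!\nf{\tay M}_{\bag s_{\bttone}} = \prs{\bttone}{M}/\mcoef{\bag s_{\bttone}}$ for every rBTT $\bttone$. Since the denominator $\mcoef{\bag s_{\bttone}}$ depends only on the test and not on the term, the left-to-right implication is immediate: if $\nf{\tay M} = \nf{\tay N}$ then also $!\nf{\tay M} = !\nf{\tay N}$, hence $\prs{\bttone}{M} = \mcoef{\bag s_{\bttone}}\cdot{}!\nf{\tay M}_{\bag s_{\bttone}} = \mcoef{\bag s_{\bttone}}\cdot{}!\nf{\tay N}_{\bag s_{\bttone}} = \prs{\bttone}{N}$ for all $\bttone$.

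For the converse, the key auxiliary fact I would isolate first is that the assignments $\bhtone \mapsto s_{\bhtone}$ and $\bttone \mapsto \bag s_{\bttone}$ are surjective onto the $\beta$-normal simple resource terms and the $\beta$-normal simple resource poly-terms, respectively. This is a routine mutual induction: a $\beta$-normal simple resource term necessarily has the shape $\lambda \vec x.\linapp{y}{\bag u_1\,\dots\,\bag u_m}$ with $y$ a variable (otherwise there would be a head redex) and each $\bag u_i$ a multiset of $\beta$-normal terms, which is exactly the shape of $s_{(\lambda \vec x.y)(\bttone^1,\dots,\bttone^m)}$ once the $\bag u_i$ are realised as $\bag s_{\bttone^i}$; conversely every term produced by these assignments is visibly $\beta$-normal. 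Granting this, fix a $\beta$-normal $s \in \srt$, choose a rBHT $\bhtone$ with $s_{\bhtone} = s$, and take the rBTT $\ev{\bhtone}$, for which $\bag s_{\ev{\bhtone}} = [s]$ and $\mcoef{[s]} = \mcoef s$. A one-line computation gives $!S_{[s]} = S_s$ for any $S \in \rt$, since in $!S = \sum_{n \in \Nat} \frac{1}{n!}[S^n]$ only the $n = 1$ summand contributes to a singleton multiset. Plugging this into the proposition yields $\nf{\tay M}_s = {}!\nf{\tay M}_{[s]} = \prs{\ev{\bhtone}}{M}/\mcoef s$, so if all test probabilities of $M$ and $N$ agree then $\nf{\tay M}_s = \nf{\tay N}_s$ for every $\beta$-normal $s$. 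Finally, for $s$ not $\beta$-normal both coefficients vanish: $\nf{\tay M} = \sum_{s' \in \srt} \tay M_{s'}.\nf{s'}$ and each $\nf{s'}$ is supported on $\beta$-normal terms, so $\supp{\nf{\tay M}}$ and $\supp{\nf{\tay N}}$ consist only of $\beta$-normal simple resource terms. Hence $\nf{\tay M}_s = \nf{\tay N}_s$ for all $s$, i.e. $\nf{\tay M} = \nf{\tay N}$.

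The only real work I expect is in making the surjectivity statement precise — in particular the observation that $\beta$-normal resource terms always have a variable in head position, so that they all arise as some $s_{\bhtone}$ — together with the bookkeeping that normal forms of Taylor expansions live inside this class, which is what makes the non-normal coefficients trivially equal on both sides. The remaining ingredients (the singleton identity $!S_{[s]} = S_s$ and the independence of $\mcoef{\bag s_{\bttone}}$ from the term) are immediate, and everything else is direct bookkeeping around the preceding proposition.
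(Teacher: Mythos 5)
Your proof is correct and takes essentially the same route as the paper's, which consists exactly of the preceding proposition together with the observation that every simple resource term in normal form is $s_{\bhtone}$ for some resource B\"ohm test. Your extra steps (the singleton identity $!S_{[s]}=S_s$, the equality $\mcoef{[s]}=\mcoef{s}$, and the remark that $\nf{\tay M}$ is supported on normal terms) merely spell out bookkeeping the paper leaves implicit.
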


\begin{proof}
Simply observe that every simple resource term in normal form is equal
to $s_\bttone$ for some resource B\"{o}hm test $\bttone$.
\end{proof}

Thanks to Theorem~\ref{theo:BTvsTE} and Corollary~\ref{cor:tayvsrTE}
both B\"{o}hm tree equality and Taylor normal form equality are
characterised by tests. They still leave a gap in our reasoning, as
not all B\"{o}hm tests are resource B\"{o}hm tests. This difference is
not just cosmetic: $\ev{\omega}$ is a valid B\"{o}hm test which
computes the convergence probability of any $\lambda$-term, which
cannot be done using only resource B\"{o}hm tests. More precisely this
cannot be done using a \emph{single} B\"{o}hm test. To fill the gap
between B\"{o}hm tests and resource B\"{o}hm tests we observe that any
of the former can be simulated by a \emph{family} of resource B\"{o}hm tests.

\begin{proposition}\label{prop:BTTtocfBTT}
  For every BTT $\bttone$ there is a family $(\bttone_i)_{i \in I}$ of
  rBTTs of arbitrary size (possibly empty, possibly
  infinite) such that for all $\lambda$-term
  $M$ we have $\prs{\bttone}{M} = \sum_{i \in I} \prs{\bttone_i}{M}$.
\end{proposition}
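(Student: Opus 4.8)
The plan is to prove Proposition~\ref{prop:BTTtocfBTT} by induction on the structure of the BTT $\bttone$, mutually with the analogous statement for BHTs: \emph{for every BHT $\bhtone$ there is a family $(\bhtone_j)_{j \in J}$ of rBHTs such that $\prs{\bhtone}{h} = \sum_{j \in J} \prs{\bhtone_j}{h}$ for all $h \in \hnf$}. The base cases and the conjunction cases are the easy part. For $\bttone = \omega$, note that $\omega$ is \emph{not} an rBTT, but $\prs{\omega}{M} = 1 = \prs{\omega \wedge \omega}{M}$ is not quite right either; instead one takes the single-element family consisting of the empty conjunction, i.e. the rBTT built from no $\ev{\cdot}$ clauses, which by the grammar is $\omega$ itself — so one should read rBTTs as also including $\omega$, which the definition does (the clause $\bttone ::= \omega$ is present). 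Hence $\omega$ alone works. For conjunctions one uses distributivity: if $\bttone = \bttone' \wedge \bttone''$ with families $(\bttone'_i)_{i}$ and $(\bttone''_k)_{k}$, then since $\prs{\bttone' \wedge \bttone''}{M} = \prs{\bttone'}{M}\cdot\prs{\bttone''}{M} = \bigl(\sum_i \prs{\bttone'_i}{M}\bigr)\bigl(\sum_k \prs{\bttone''_k}{M}\bigr) = \sum_{i,k} \prs{\bttone'_i \wedge \bttone''_k}{M}$, the product family $(\bttone'_i \wedge \bttone''_k)_{i,k}$ works, each member being again an rBTT.

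The crux is the case $\bttone = \ev{\bhtone}$. By the inductive hypothesis on $\bhtone$ we have a family $(\bhtone_j)_{j \in J}$ of rBHTs with $\prs{\bhtone}{h} = \sum_j \prs{\bhtone_j}{h}$ for every $h$. Then $\prs{\ev{\bhtone}}{M} = \sum_{h} \hprob{M}{h}\cdot\prs{\bhtone}{h} = \sum_h \hprob{M}{h} \sum_j \prs{\bhtone_j}{h} = \sum_j \sum_h \hprob{M}{h}\cdot\prs{\bhtone_j}{h} = \sum_j \prs{\ev{\bhtone_j}}{M}$, using nonnegativity to justify swapping the sums. So the family $(\ev{\bhtone_j})_{j \in J}$ does the job, and each $\ev{\bhtone_j}$ is an rBTT. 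The genuinely delicate case is therefore the BHT clause $\bhtone = (\lambda x_1 \cdots \lambda x_n.y)(\bttone^1,\ldots,\bttone^m)$: here the single BHT already has the right \emph{head shape} of an rBHT, but its argument tests $\bttone^i$ are arbitrary BTTs. By the inductive hypothesis each $\bttone^i$ has a family $(\bttone^i_{k})_{k \in K_i}$ of rBTTs. Since $\prs{\bhtone}{h}$ is, when $h = \lambda \vec x.y\,M_1\cdots M_m$, the product $\prod_{i=1}^m \prs{\bttone^i}{M_i} = \prod_{i=1}^m \sum_{k \in K_i} \prs{\bttone^i_k}{M_i}$, expanding the product of sums gives $\sum_{(k_1,\ldots,k_m) \in K_1 \times \cdots \times K_m} \prod_{i=1}^m \prs{\bttone^i_{k_i}}{M_i} = \sum_{\vec k} \prs{(\lambda \vec x.y)(\bttone^1_{k_1},\ldots,\bttone^m_{k_m})}{h}$, and when $h$ has the wrong shape both sides are $0$. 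Thus the family indexed by $K_1 \times \cdots \times K_m$ of rBHTs $(\lambda \vec x.y)(\bttone^1_{k_1},\ldots,\bttone^m_{k_m})$ works.

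The main obstacle I anticipate is purely bookkeeping: keeping the mutual induction clean, and making sure that the product-of-sums expansions are valid as \emph{possibly infinite} nonnegative sums (each term in $[0,1]$, so Tonelli/rearrangement applies and no convergence issue arises beyond the sums being in $[0,1]$, which is automatic since $\prs{\bttone}{M} \in [0,1]$ for the original tests). One should also take care that the grammar of rBTTs genuinely contains $\omega$ and arbitrary finite conjunctions of $\ev{\cdot}$-clauses so that all the families constructed above consist of bona fide rBTTs/rBHTs; this is exactly what the definition of rBTT/rBHT provides. A minor subtlety is that the family may be empty — this happens precisely when some $\bttone^i$ already has empty family, or by design; the statement explicitly allows $I = \emptyset$, in which case the empty sum is $0$, matching a test that always fails.
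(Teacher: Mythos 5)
There is a genuine gap in your proposal: you have misread where the difficulty of the proposition lies. The grammar of rBHTs is $\bhtone ::= (\lambda x_1.\cdots.\lambda x_n.y)(\bttone^1,\ldots,\bttone^m)$ only — unlike rBTTs, resource hnf tests have \emph{neither} $\omega$ \emph{nor} conjunction. Your mutual induction treats only three BHT shapes implicitly, and the only BHT case you actually argue is the head-shaped one $(\lambda \vec x.y)(\bttone^1,\ldots,\bttone^m)$, which (as you note) is straightforward by expanding a product of sums. But the BHT grammar also produces $\omega$ and $\bhtone\wedge\bhttwo$, and these are precisely the two constructors that the proposition exists to eliminate. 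Your conjunction argument for BTTs (take the product family $(\bttone'_i\wedge\bttone''_k)_{i,k}$) does not transfer to BHTs, because a conjunction of two rBHTs is not an rBHT; and the BHT $\omega$, which satisfies $\prs{\omega}{h}=1$ for every $h\in\hnf$, cannot be a single rBHT either, since every rBHT fails on all hnfs whose head shape does not match it.

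The paper handles exactly these two cases, and this is the real content of the proof. For the BHT $\omega$ one takes the infinite family $\bigl((\lambda x_1\dots x_n.y)(\omega^m)\bigr)_{m,n\in\Nat,\,y\in\Var}$: on any $h=\lambda\vec x.y\,M_1\cdots M_m$ exactly one member succeeds with probability $1$ and all others give $0$, so the sum is $\prs{\omega}{h}=1$. For conjunctions one first normalises the BHT, modulo associativity, commutativity and $\omega\wedge\bhtone\simeq\bhtone$, into a conjunction of head-shaped tests $(\lambda\vec x.y_1)(\bttone_1^1,\dots)\wedge\cdots\wedge(\lambda\vec x.y_k)(\bttone_k^1,\dots)$; if the heads disagree the test is constantly $0$ and the empty family works, and otherwise the conjunction is pushed \emph{inside} the arguments, rewriting the test as $(\lambda\vec x.y)(\bttone_1^1\wedge\cdots\wedge\bttone_k^1,\;\dots\;,\bttone_1^m\wedge\cdots\wedge\bttone_k^m)$ and applying the induction hypothesis to the argument BTTs $\bttone_1^i\wedge\cdots\wedge\bttone_k^i$. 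Note that these are not structural subterms of the original BHT, which is why the paper inducts on the \emph{size} of tests rather than on their structure; your purely structural induction would not accommodate this step even if you added the missing cases. The parts you did write (the BTT cases and the head-shaped BHT case, with the Tonelli-style interchange of nonnegative sums) are correct and agree with the paper, but without the two missing BHT cases the proof does not establish the proposition.
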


\begin{proof}
We prove this, as well as the corresponding result for BHTs, by induction on the size of tests.
In the case of BTTs, the result is simply given by induction
hypothesis. To the BTT $\omega$ we associate the single-element family
$(w)$, to $\bttone \wedge \btttwo$ we associate $(\bttone_i \wedge \btttwo_j)_{i \in I, j \in
J}$ where $(\bttone_i)_{i \in I}$ and $(\btttwo_j)_{j \in J}$ are given by
induction hypothesis on $\bttone$ and $\btttwo$, and to $\ev{\bhtone}$ we associate
$(\ev{\bhtone_i})_{i \in I}$.
The interesting part of the proof is on BHTs, where we want to remove
two constructors. Modulo commutativity and associativity of the
conjunction and the equivalence $\omega \wedge T \simeq T$, every BHT
is either $\omega$ or of the form $(\lambda
x_1...x_{n_1}.y_1)(\bttone_1^1,\dots,\bttone_1^{m_1}) \wedge \dots \wedge (\lambda
x_1...x_{n_k}.y_k)(\bttone_k^1,\dots,\bttone_k^{m_k})$ with $k \geq 1$. In the
first case to $\omega$ we associate the family $((\lambda x_1 \dots
x_n.y)(\omega^m))_{m,n \in \Nat, y \in \Var}$ where $\omega^m$ denotes
the sequence $\omega,\dots,\omega$ of length $m$. In the second case
if $m_i \neq m_j$, $n_i \neq n_j$ or $y_i \neq y_j$ for some $i,j \leq
k$ then the result of the test is always $0$, which is simulated by
the empty family of rBHTs. Otherwise let $m=m_1$, $n=n_1$ and $y=y_1$,
the test is equivalent to $(\lambda x_1 \dots
x_n.y)(\bttone_1^1 \wedge \dots \wedge \bttone_k^1,\dots,\bttone_1^m \wedge \dots \wedge
\bttone_k^m)$. We apply the induction hypothesis to the BTTs
$\bttone_1^i \wedge \dots \wedge \bttone_k^i$ to get families $(\btttwo_j^i)_{j \in
J_i}$ and we associate the family $((\lambda x_1 \dots
x_n.y)(\btttwo_{j_1}^1,\dots,\btttwo_{j_m}^m))_{j_1 \in J_1,\dots,j_m \in J_m}$ to
the original BHT.
\end{proof}

\begin{corollary}\label{cor:TEvsrTE}
Given two terms $M$ and $N$, for every BTT $T$ it holds that $\prs{M}{T}=\prs{N}{T}$ iff
for every rBTT $T$ it holds that $\prs{M}{T}=\prs{N}{T}$.
\end{corollary}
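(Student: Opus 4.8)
The plan is to prove the two implications separately, using Proposition~\ref{prop:BTTtocfBTT} in one direction and the fact that rBTTs are a syntactic subclass of BTTs in the other. The right-to-left direction is the subtle one, and the left-to-right direction is essentially trivial; so I would state the easy direction first and then spend the bulk of the argument on the converse.

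First, for the left-to-right implication, suppose that $\prs{M}{T}=\prs{N}{T}$ for every BTT $T$. Every rBTT is in particular a BTT (the grammar for rBTTs is obtained from the grammar for BTTs by restricting the production for hnf-tests, so syntactically $\text{rBTT}\subseteq\text{BTT}$, and the probability-of-success function is defined by the same clauses on this subclass). Hence the hypothesis specialises immediately to all rBTTs, giving $\prs{M}{T}=\prs{N}{T}$ for every rBTT $T$.

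For the right-to-left implication, suppose that $\prs{M}{T'}=\prs{N}{T'}$ for every rBTT $T'$. Let $T$ be an arbitrary BTT. By Proposition~\ref{prop:BTTtocfBTT} there is a family $(T_i)_{i\in I}$ of rBTTs such that $\prs{P}{T}=\sum_{i\in I}\prs{P}{T_i}$ for every $\lambda$-term $P$; in particular this holds for $P=M$ and for $P=N$. Applying the hypothesis termwise to each $T_i$ gives $\prs{M}{T_i}=\prs{N}{T_i}$ for all $i\in I$, and therefore
\[
\prs{M}{T}=\sum_{i\in I}\prs{M}{T_i}=\sum_{i\in I}\prs{N}{T_i}=\prs{N}{T}.
\]
Since $T$ was an arbitrary BTT, this establishes the claim. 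Note that the sums involved are sums of nonnegative reals indexed by an arbitrary (possibly infinite) set, so the rearrangement is unconditionally valid; no convergence subtleties arise because Proposition~\ref{prop:BTTtocfBTT} already guarantees the family sums to the finite quantity $\prs{P}{T}\le 1$.

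The only real content of the argument is Proposition~\ref{prop:BTTtocfBTT}, which is assumed; given it, the corollary is a two-line consequence. The one point worth a moment's care is making sure the family $(T_i)_{i\in I}$ produced by Proposition~\ref{prop:BTTtocfBTT} is independent of the term being tested, which it is — the statement of that proposition quantifies the family before the universal quantifier over $\lambda$-terms $M$ — so the same family works simultaneously for $M$ and $N$. I do not anticipate any obstacle beyond correctly invoking the quantifier order.
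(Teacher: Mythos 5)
Your proof is correct and coincides with the paper's own (implicit) argument: the corollary is stated as an immediate consequence of Proposition~\ref{prop:BTTtocfBTT}, with the left-to-right direction following from the fact that every rBTT is syntactically a BTT with the same success probabilities, and the right-to-left direction by summing the hypothesis over the family of rBTTs provided by that proposition. Your remarks on the quantifier order and on the unconditional rearrangement of nonnegative sums are accurate and add nothing that conflicts with the paper.
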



We can now state the main result of this paper.

\begin{theorem}
  Two terms have the same Böhm trees iff their Taylor expansions have the same normal forms.
\end{theorem}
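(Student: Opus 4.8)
The plan is to chain together the three test-based characterisations established above, so that no new computation is needed at this final stage. By Theorem~\ref{theo:BTvsTE}, two terms $M$ and $N$ have the same Böhm trees if and only if $\prs{M}{T}=\prs{N}{T}$ for every Böhm term test $T$. By Corollary~\ref{cor:TEvsrTE}, this is equivalent to the same equality holding merely for every \emph{resource} Böhm term test $T$. And by Corollary~\ref{cor:tayvsrTE}, that in turn is equivalent to $\nf{\tay M}=\nf{\tay N}$. Composing these three biconditionals gives the theorem.

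In slightly more detail, I would write the argument as a single displayed chain of ``iff''s, taking care that the quantifiers line up at each link: the equality of Böhm trees unfolds, via the definitions of $\pbt$ and $\vpbt$ together with Theorem~\ref{theo:BTvsTE}, into a statement universally quantified over BTTs; Corollary~\ref{cor:TEvsrTE} replaces ``for every BTT'' by ``for every rBTT''; and Corollary~\ref{cor:tayvsrTE} identifies the resulting statement with equality of Taylor normal forms, the point being that every simple resource term in normal form is of the shape $s_T$ for some rBTT $T$, so that the family of coefficients $\bigl(\nf{\tay M}_{s_T}\bigr)_T$ determines $\nf{\tay M}$ completely.

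The genuine obstacles have already been dealt with in the preceding sections, so the main ``difficulty'' of this statement is simply recognising that it is now a corollary: the heavy lifting is Theorem~\ref{theo:BTvsTE}, whose proof routes through probabilistic tree transition systems and transfers the bisimilarity-versus-testing results of~\cite{BMOW05} from labelled Markov chains to that setting, together with Proposition~\ref{prop:BTTtocfBTT}, which shows that a single Böhm test — in particular one such as $\ev{\omega}$, which computes a convergence probability and hence is \emph{not} itself a resource test — can still be simulated by a possibly infinite family of resource Böhm tests with additive success probabilities. As an aside worth noting in the proof, combining the theorem with the commutation result $\nf{\tay M}=\tay{(\pbt(M))}$ (Theorem~\ref{thm:tay_bohm}) yields that the Taylor expansion operator is injective on probabilistic Böhm trees, thereby closing the gap left open immediately after that theorem.
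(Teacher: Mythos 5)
Your proposal is correct and follows exactly the paper's own argument: the theorem is obtained by chaining Theorem~\ref{theo:BTvsTE}, Corollary~\ref{cor:TEvsrTE} and Corollary~\ref{cor:tayvsrTE}. The additional remarks on quantifier alignment and on injectivity over B\"ohm trees are sound but not needed beyond what the cited results already provide.
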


\begin{proof}
The result follows from Theorem~\ref{theo:BTvsTE}, Corollary~\ref{cor:TEvsrTE} and Corollary~\ref{cor:tayvsrTE}.
\end{proof}

\section{Conclusion}
In this paper, we attack the problem of extending the Taylor
Expansion construction to the probabilistic $\lambda$-calculus, at the
same time preserving its nice properties. What we find remarkable
about the defined notion of Taylor expansion is that its codomain is
the set of \emph{ordinary} resource terms, and that the equivalence induced
by the Taylor expansion is precisely the one induced by B\"ohm
trees~\cite{L18}. The latter, not admitting $\eta$, is strictly
included in contextual equivalence.

Among the many questions this work leaves open, we could cite the extension
of the proposed definition to call-by-value reduction, along the lines of~\cite{KMP18},
and a formal comparison between the notion of equivalence introduced here
and the the one from \cite{TAO18} in which, however, the target language is not
the one of ordinary resource terms, but one specifically designed around probabilistic
effects.

\bibliography{references}

\end{document}